\DeclareMathOperator*{\argmin}{arg\,min}
\newcommand{\matr}[1]{\bm{#1}}
\newcommand{\vect}[1]{\bm{#1}}
\theoremstyle{definition}
\newtheorem{theorem}{Theorem}
\newtheorem{corollary}{Corollary}[theorem] %
\newtheorem{defn}{Definition}
\newtheorem{remark}{Remark}
\newtheorem{propn}[]{Proposition}
\newtheorem{lemma}[]{Lemma}
\def\BibTeX{{\rm B\kern-.05em{\sc i\kern-.025em b}\kern-.08em
    T\kern-.1667em\lower.7ex\hbox{E}\kern-.125emX}}
\DeclareMathSymbol{\Theta}{\mathord}{operators}{"02}
\newcommand{\set}[1]{\mathcal{#1}}
\newcommand{\sqfrob}[1]{\left\lvert\left\lvert#1\right\rvert\right\rvert^{2}_{F}}
\newcommand{\sqnormvec}[1]{\left\lvert\left\lvert#1\right\rvert\right\rvert^{2}_{2}}
\newcommand{\trace}[1]{\textrm{tr}\left(#1\right)}
\newcommand{\rank}[1]{\textrm{rank}\left(#1\right)}
\NewDocumentCommand{\msub}{o o m}
{
    \IfValueT{#2}{\left[#3\right]_{\set{#1}, \set{#2}}}
    \IfNoValueTF{#1}{#3}{\left[#3\right]_{\set{#1}}}
}
\NewDocumentCommand{\msubgen}{o o m}
{
    \IfValueT{#2}{\left[#3\right]_{#1, #2}}
    \IfNoValueTF{#1}{#3}{\left[#3\right]_{#1}}
}
\NewDocumentCommand{\matrsub}{ o o m }
{
    \IfValueT{#2}{[\matr{#3}]_{\set{#1}, \set{#2}}}
    \IfNoValueTF{#1}{\matr{#3}}{[\matr{#3}]_{\set{#1}}}
}
\NewDocumentCommand{\matrsubpow}{ o o m m}
{
    \IfValueT{#2}{[\matr{#3}^{#4}]_{\set{#1}, \set{#2}}}
    \IfNoValueTF{#1}{\matr{#3}^{#4}}{[\matr{#3}^{#4}]_{\set{#1}}}
}
\NewDocumentCommand{\vectsub}{ o o m }
{
    \IfValueT{#2}{[\vect{#3}]_{\set{#1}, \set{#2}}}
    \IfNoValueTF{#1}{\vect{#3}}{[\vect{#3}]_{\set{#1}}}
}
\NewDocumentCommand{\vectsubgen}{ o o m }
{
    \IfValueT{#2}{[{#3}]_{{#1}, {#2}}}
    \IfNoValueTF{#1}{{#3}}{[{#3}]_{{#1}}}
}
\NewDocumentCommand{\projbl}{ o o }
{
    \IfValueT{#2}{[\matr{\Pi}_{bl(\set{K})}]_{\set{#1},\set{#2}}}
    \IfNoValueTF{#1}{\matr{\Pi}_{bl(\set{K})}}{[\matr{\Pi}_{bl(\set{K})}]_{\set{#1}}}
}
\NewDocumentCommand{\projblgen}{ o o }
{
    \IfValueT{#2}{[\matr{\Pi}_{bl(\set{K})}]_{{#1},{#2}}}
    \IfNoValueTF{#1}{\matr{\Pi}_{bl(\set{K})}}{[\matr{\Pi}_{bl(\set{K})}]_{{#1}}}
}
\NewDocumentCommand{\proj}{ o o m }
{
    \IfValueT{#2}{[\matr{\Pi}_{\set{#3}}]_{\set{#1}, \set{#2}}}    \IfNoValueTF{#1}{\matr{\Pi}_{\set{#3}}}{[\matr{\Pi}_{\set{#3}}]_{\set{#1}}}
}
\NewDocumentCommand{\projgen}{ o o m }
{
    \IfValueT{#2}{[\matr{\Pi}_{\set{#3}}]_{{#1}, {#2}}}
    \IfNoValueTF{#1}{\matr{\Pi}_{{#3}}}{[\matr{\Pi}_{{#3}}]_{{#1}}}
}
\NewDocumentCommand{\expect}{ o m }
{
    \IfNoValueTF{#1}{{\mathbb{E}}\left[{{#2}}\right] }{{\mathbb{E}}_{#1}\left[{{#2}}\right]}
}
\NewDocumentCommand{\expectnobig}{ o m }
{
    \IfNoValueTF{#1}{{\mathbb{E}}[{{#2}}] }{{\mathbb{E}}_{#1}[{{#2}}]}
}
\newcommand{\matrsubU}[1]{\matrsub[#1, K]{U}}
\newcommand{\matrsubUGen}[1]{[\matr{U}]_{#1, \set{K}}}
\newcommand{\xd}[1]{\textcolor{black}{#1}}
\begin{document}
\title{On the Impact of Sample Size in Reconstructing Noisy Graph Signals: A Theoretical Characterisation}
\author{\IEEEauthorblockN{Baskaran Sripathmanathan,} \and \IEEEauthorblockN{Xiaowen Dong,} \and \IEEEauthorblockN{Michael Bronstein}\\
\IEEEauthorblockA{University of Oxford}
}

\maketitle

\begin{abstract}
Reconstructing a signal on a graph from noisy observations of a subset of the vertices is a fundamental problem in the field of graph signal processing. This paper investigates how sample size affects reconstruction error in the presence of noise via an in-depth theoretical analysis of the two most common reconstruction methods in the literature, least-squares reconstruction (LS) and graph-Laplacian regularised reconstruction (GLR). Our theorems show that at sufficiently low signal-to-noise ratios (SNRs), under these reconstruction methods we may simultaneously decrease sample size and decrease average reconstruction error. We further show that at sufficiently low SNRs, for LS reconstruction we have a $\Lambda$-shaped error curve and for GLR reconstruction, a sample size of $\mathcal{O}(\sqrt{N})$, where $N$ is the total number of vertices, results in lower reconstruction error than near full observation. We present thresholds on the SNRs, $\tau$ and $\tau_{GLR}$, below which the error is non-monotonic, and illustrate these theoretical results with experiments across multiple random graph models, sampling schemes and SNRs. These results demonstrate that any decision in sample-size choice has to be made in light of the noise levels in the data.
\end{abstract}

\begin{IEEEkeywords}
Graph signal processing, sampling, reconstruction, least squares, graph-laplacian regularisation.
\end{IEEEkeywords}

\section{Introduction}
\IEEEPARstart{R}{eal}-world signals, such as brain fMRIs \cite{itani2021graph}, urban air pollution \cite{jain2014big}, and political preferences \cite{renoust2017estimating}, are often noisy and incomplete, making analysis of the signals harder. The reconstruction of these signals from limited observation is of practical importance, and can benefit from the fact that they can be treated as graph signals,  signals defined on a network domain. 
Graph signal processing (GSP) generalises the highly successful tools of sampling and reconstruction in classical signal processing by extending the classical shift operator to a graph shift operator \cite{ortega2018graph} such as the adjacency matrix \cite{EOptimalChen} or the graph Laplacian, enabling us to extrapolate the full data across the graph from observations on a subset of vertices \cite{tanaka2020sampling}. %

In the literature, the vast majority of studies on graph-based sampling focus on designing efficient sampling schemes that are approximately optimal under certain criteria \cite[Chapter 6]{pukelsheim2006optimal}, because optimal vertex choice under noise is in general NP-hard \cite{nikolov2022proportional, chamon2017greedy}. While \xd{these studies provide useful understanding in terms of sampling and reconstruction for graph signals, they} focus on the performance of sampling schemes at fixed sample sizes \cite{xie2019bayesian, puy2018random,shomorony2014sampling, tremblay2017determinantal,wang2018optimal,wang2019low, bai2020fast,  anis2016efficient}, while much less attention has been paid to understanding the impact of varying sample size on the mean squared error (MSE) of reconstruction. Sample size is an important parameter in both understanding and using sampling schemes, especially in the common setting of a fixed sample budget. %

\begin{table}[ht]
    \centering
    \setlength{\belowcaptionskip}{3pt}
  \caption{  Studies on the impact of sample size on MSE.}
    \begin{subtable}[h]{\linewidth}
    \begin{tabularx}{\linewidth}{|>{\raggedright\arraybackslash}p{2.6cm}|>{\centering\arraybackslash}p{1.2cm}|X|>{\raggedright\arraybackslash}p{1.3cm}|}%
        \hline 
        \textbf{Reference} \rule{0pt}{8pt} & \textbf{Considers Noise?} & \textbf{Reconstruction Method} & \textbf{Range of Possible Sample Sizes} \\
        \hline 
        Tremblay, 2017 \cite{tremblay2017determinantal} \rule{0pt}{8pt} & \checkmark & LS, GLR & Limited \\
        Wang, 2018 \cite{wang2018optimal} & \checkmark & LS & Limited \\
        Wang, 2019 \cite{wang2019low} & \checkmark & LS & Limited \\
        Bai, 2020 \cite{bai2020fast} & \checkmark & GLR & Limited \\
        Anis, 2016 \cite{anis2016efficient} & \checkmark & LS & Full \\
        \hline
    \end{tabularx}
    \caption{Empirical Results}

    \end{subtable}
    \begin{subtable}[h]{\linewidth}
    \begin{tabularx}{\linewidth}{|>{\raggedright\arraybackslash}p{2.6cm}|>{\centering\arraybackslash}p{1.2cm}|X|>{\raggedright\arraybackslash}p{1.3cm}|}%
        \hline
        \textbf{Reference} \rule{0pt}{8pt} & \textbf{Considers Noise?} & \textbf{Reconstruction Method} & \textbf{Range of Possible Sample Sizes} \\
        \hline
        Puy, 2018 \cite{puy2018random} \rule{0pt}{8pt} & \checkmark & LS & Limited \\
        Chamon, 2017 \cite{chamon2017greedy} & \checkmark & LS (Regularised) & Full \\
        Shomorony, 2014 \cite{shomorony2014sampling} & \texttimes & LS & Full \\
        \textbf{This Paper} & \textbf{\checkmark} & \textbf{LS, GLR} & \textbf{Full} \\
        \hline
    \end{tabularx}
    \caption{Theoretical Results}

\end{subtable}
    \label{tbl:Lit_review}
    \vspace{-0.4cm}
\end{table}
The literature studies the impact of sample size both empirically and theoretically (summarised in Table \ref{tbl:Lit_review}), and can be further divided by the setting considered: whether the observations are noisy or noiseless, and by which reconstruction method. Empirical results, linked to sampling schemes, are usually obtained in the noisy setting under least squares (LS) reconstruction or its variants \cite{ tremblay2017determinantal,wang2018optimal,wang2019low,chamon2017greedy} or graph-Laplacian regularised (GLR) reconstruction \cite{bai2020fast, tremblay2017determinantal}. These results show that MSE decreases as sample size increases in a restricted range of noise level and sample size. An exception is \cite[Fig. 1]{anis2016efficient} which empirically shows non-monotonicity of MSE with sample size under LS as it considers the {full range of possible sample sizes, including sample sizes less than bandwidth,} and a relatively high noise level. Of the three main theoretical results on the impact of sample size in the literature, two focus on slightly different settings: \cite{shomorony2014sampling} presents sample size bounds for perfect signal reconstruction in the noiseless setting, while \cite[Lemma 1]{chamon2017greedy} proves that MSE decreases as sample size increases in the noisy setting and provides bounds on the impact of sample size on MSE, but assumes a specific form of regularised LS which is Bayesian optimal. While these theoretical results provide valuable insight, the settings they are based on do not always agree with those in the empirical studies above, hence a generic understanding is still lacking. The third theoretical result \cite[Theorem 6]{puy2018random} assumes unregularised LS and noisy reconstruction, and proves an MSE upper bound that is tight and decreasing in sample size; however, this result is practically constrained to the case where sample size is greater than or equal to bandwidth \cite[Eq. 3]{puy2018random}.

In this paper, we fill the gap in the literature by providing a theoretical characterisation of the impact of sample size on the average MSE  (or MMSE  -- see Section \ref{sec:Optimality_Criteria}) across the full range of possible sample sizes in the most common settings, i.e. noisy observations and LS or GLR reconstruction. More specifically, we focus on whether under sufficiently low SNRs, decreasing sample size may actually decrease MSE. Furthermore, we investigate both the full range of sample sizes and all possible levels of noise, which is important for the application of graph sampling to domains where the noise may be greater than the signal \xd{(e.g. finance, with typically low signal-to-noise ratio \cite{nabar2023conservative}, or environmental sciences,  with inaccurate measurements from low-cost sensors \cite{bush2023impact})} or when observing more samples might not be feasible (e.g. resource-constrained settings). This breadth is only possible through our rigorous theoretical characterisation which allows us to understand behaviour at high noise levels, to characterise behaviour on arbitrarily large graphs, and to show when certain behaviours of MSE happen and why. %

Our results begin by using a Bias-Variance decomposition to explain why decreasing sample size may decrease MSE for any linear reconstruction method \xd{(Section \ref{sec:general})}, and we then specialise to LS and GLR (Section \ref{sec:every_x}). We prove that under LS reconstruction of a $k$-bandlimited signal, if the samples were chosen to be optimal in the noiseless case then we can always reduce MSE under high noise by reducing sample size from $k$ to $k-1$ {(Section \ref{sec:LS_full_band}, Theorem \ref{thm:noiseless_optimality_means_noise_sensitivity})}. We prove that under GLR, if certain graph invariants hold on a graph with $N$ vertices, reducing sample size from almost $N$ vertices to $\mathcal{O}(\sqrt{N})$ vertices will reduce MSE at high noise levels {(Section \ref{sec:GLR_full_band}, Proposition \ref{propn:GLR_simple})}, and that these invariants hold for large Erdős–Rényi graphs with high probability {(Section \ref{sec:GLR_full_band}, Proposition \ref{propn:GLR_big_N})}. Our experiments {(Section \ref{sec:experiments})} validate this for { large} Stochastic Blockmodel and Barabasi-Albert graphs as well, { demonstrating the applicability of our results to graphs with community structure and large scale-free networks}. We also investigate how sensitive our results are to different kinds of noise by presenting variants of our theoretical and empirical results under both bandlimited (Sections \ref{sec:LS_bandlimited}) \& \ref{sec:GLR_bandlimited} and full-band noise (Sections \ref{sec:LS_full_band} \& \ref{sec:GLR_full_band}). \xd{Although our results are mainly of theoretical nature, they may provide useful guidance on choosing the appropriate sample size in light of noise levels in the data.}

Our paper presents four primary contributions: 
\begin{enumerate}
    \item A theoretical characterisation of how decreasing sample size may decrease MSE, not only under LS  but also under GLR, a regularised method. 
    \item Analysis of both LS and GLR under bandlimited noise to show non-monotonicity of the MSE is not caused by just the high frequency component of the noise.
    \item Asymptotic analysis, showing how the non-monotonicity of the MSE with sample size persists as $N \to \infty$.
    \item Extensive experimental simulations illustrating the theoretical results under LS and GLR, and bandlimited noise.
\end{enumerate}
This work significantly extends a previous conference paper \cite{sripathmanathan2023impact}, which presented preliminary versions of Corollary \ref{main_ls}, Proposition \ref{propn:main_existence_LS}, and a weaker version of Theorem \ref{thm:noiseless_optimality_means_noise_sensitivity}, corresponding to the LS part of contribution 1) above. Lemmas \ref{lemma:LS_xi_1_is_rank}--\ref{lemma:LS_delta_1_improvement_means_delta_2_worse} closely follow those in the conference version \cite{sripathmanathan2023impact}.

\section{Background and Problem Formulation}

\xd{In this section, we first introduce basic definitions of graphs, and the signal and noise models we focus on. We then provide notations for sampling, discuss reconstruction methods, and evaluation criteria. 
Finally, we present our problem setting.}

\subsection{Graphs and graph signals}
\label{sec:signal_model}
A graph $\set{G}$ consists of a set of $N$ vertices, a set of edges between these vertices, and associated edge weights. We assume $\mathcal{G}$ is connected and undirected, and that the combinatorial graph Laplacian $\matr{L}$ is real positive semidefinite with $N$ distinct eigenvalues $0 = \lambda_{1} < \lambda_{2} < \ldots < \lambda_{N}$ which are also called \emph{graph frequencies} \cite{ortega2018graph}\footnote{Although we focus on the combinatorial graph Laplacian, our results on LS also hold for the normalised graph Laplacian or any graph shift operator that is positive semidefinite.}. {Write the eigendecomposition of $\matr{L}$ as $\matr{L} = \matr{U}\matr{\Lambda}\matr{U}^{T}$ where $\matr{\Lambda} = diag(\lambda_{1},\ldots,\lambda_{N})$ and the columns of $\matr{U}$ are the eigenbasis of $\matr{L}$ and form an orthonormal basis of $\mathbb{R}^{N}$.

}
The most common signal model used in the graph signal processing literature is the bandlimited signal model, where a $k$-\emph{bandlimited signal} is a linear combination of the first $k$ columns of $\matr{U}$\cite{EOptimalChen}. This is a common smoothness model for graph signals. To obtain this, we define $\projbl$ as the projection operator that $k$-bandlimits a signal (we provide a symbolic definition in Section \ref{sec:notation}).

It is rare for observed signals to be perfectly bandlimited{. While this can be modelled by assuming the underlying signal to be reconstructed is not bandlimited but rather} `approximately bandlimited signals'\cite{chen2016signal, lin2019active}, or from other more general priors \cite{tanaka2020generalized, hara2022sampling}, we take the more common approach \cite{wang2018optimal, wang2019low,bai2020fast, puy2018random, tremblay2017determinantal} of assuming {the underlying signal is $k$-bandlimited} with additive observation noise. We assume we observe a corrupted signal $\vect{y} = \vect{x} + \vect{n}$ %
where
\begin{itemize}
    \item $\vect{x}$ is a random $k$-bandlimited signal with $\expect{\vect{x}} = 0$ and $\text{Cov}(\vect{x}) = \projbl$,
    \item $\vect{n} = \sigma \cdot \vect{\epsilon}$ is noise where $\sigma > 0$, $\expect{\vect{\epsilon}} = 0$ and either
    {%
    \begin{enumerate}
        \item $\text{Cov}(\vect{\epsilon}) = \matr{I}_{N}$ (`full-band noise')
        \item $\text{Cov}(\vect{\epsilon}) = \projbl $  (`$k$-bandlimited noise')
    \end{enumerate}
    }
\end{itemize}
{ Our results apply to any noise with these properties (e.g. Gaussian or Rademacher noise).}
We refer to the $\text{Cov}(\vect{\epsilon}) = \matr{I}_{N}$ case as `full-band noise' as the associated corrupted signal $\vect{y}$ has high frequency components, whereas the other case does not. In the literature, noise levels are often described using the SNR = $\frac{\expect{\sqnormvec{ \vect{x} }}}{\expect{\sqnormvec{\vect{n}}}}$, { a ratio of norms which must be positive\footnote{It is common in the literature to express the SNR in decibels, which may be negative, while its ratio form remains positive. We will use the ratio form unless otherwise noted, so for example $-20dB$ will be written as $10^{-20/10}  > 0$.}. We note that under full-band noise we have $\sigma^{2} = \frac{k}{N \cdot \text{SNR}}$ and under $k$-bandlimited noise we have $\sigma^{2} = \frac{1}{\text{SNR}}$.}

\subsection{Notation for sampling}
\label{sec:notation}
We use the same submatrix notation as \cite{zhang2000schur}.  For any matrix $\matr{X}$ and sets $\set{A},\set{B}$, we write $\matrsub[A,B]{X}$ to be the submatrix of $\matr{X}$ with row indices in $\set{A}$ and column indices in $\set{B}$. We define the subvector $\vectsub[A]{x}$ of a vector $\vect{x}$ similarly. We define a specific shorthand for taking a principal submatrix:
\begin{equation}
    \matrsub[A]{X} = \matrsub[A,A]{X}.
\end{equation}
We let $\set{N} = \{1, \ldots, N\}$ and $\set{K} = \{1, \ldots, k\}$. We also define two pieces of notation for projections. Let
\begin{align}
    \proj{B} &= \matrsub[N,B]{I}\matrsub[B,N]{I},   &
    \projbl &= \matrsubU{N}\matrsubU{N}^{T}.
\end{align}
Finally, $\set{A} \backslash \set{B} = \{ i \in A \, \mid \, i \notin B \} $ and $\set{A}^{c} = \set{N} \backslash \set{A}$. In general, we adhere to standard set notation.

{ We will mainly use $\set{N}$ to index the vertices of $\set{G}$ and use $\set{K}$ to index the first $k$ columns of $\matr{U}$. $\projbl$ can be understood as an ideal low-pass filter.}

\subsection{Reconstruction methods}
We define a \emph{reconstruction method} (or `interpolation operator' \cite{chamon2017greedy}) to take potentially noisy observations on a vertex sample set $\set{S}$ and reconstruct the signal across all vertices. In this paper we focus on LS and GLR, which have objectives: %
\begin{flalign}
        &\text{LS:} \hspace{0.13\columnwidth}  \hat{\vect{x}} =  \argmin_{\vect{x} \in \text{span}(\matrsubU{N})} || \vectsub[S]{x} - \vect{y} ||_{2}   & \label{eq:LS_Defn} \\ 
        &\text{GLR:} \hspace{0.1\columnwidth}  \hat{\vect{x}} = \hspace{11pt} \argmin_{\vect{x} \in \mathbb{R}^{N}}  \sqnormvec{ \vectsub[S]{x} - \vect{y}} + \mu \vect{x}^{T}\matr{L}\vect{x} & \label{eq:GLR_Defn}
\end{flalign}

\noindent We summarise their differences in Table \ref{tab:tbl1}, labelling the input parameters into the reconstruction, whether they are biased and whether they require computation of the {eigenbasis} $\matrsubU{N}$.

\renewcommand{\arraystretch}{1.3}
\begin{table}[h]
\caption{The LS and GLR reconstruction methods.}    \renewcommand*{\arraystretch}{1.3}

\begin{center}
    \begin{tabular}{|l|c|c|c|}
    \hline
      \textbf{Method} & \textbf{Parameters} & \textbf{Biased} & \textbf{Requires Computing $\matrsubU{N}$} \\
    \hline
        LS  & $k$ & $\times$ & $\checkmark$ \\
        \hline
        GLR  & $\mu$ & $\checkmark$ & $\times$ \\
        \hline
    \end{tabular}
\end{center}
\label{tab:tbl1}
\end{table}
Our analysis of LS also applies to the commonly used iterative reconstruction method, Projection onto Convex Sets \cite{narang2013localized}, as POCS converges to LS.

We call a reconstruction method \emph{linear} if it is linear in its observations. For a fixed vertex sample set $\set{S}$ we can represent a linear reconstruction method by a matrix $\matr{R}_{\set{S}} \in \mathbb{R}^{N \times |\set{S}|}$.

\begin{remark}
    LS and GLR are both linear:
    \begin{align}
    \textrm{LS:\quad} \matr{R}_{\set{S}} &= \matrsubU{N}\matrsubU{S}^{\dagger} \label{eq:defn_RS:LS}\\
    \textrm{GLR:\quad} \matr{R}_{\set{S}} &= \msub[N,S]{( \proj{S} + \mu \matr{L})^{-1}} \label{eq:defn_RS:GLR}
    \end{align}
\end{remark}
\noindent where for a matrix $\matr{A}$, $\matr{A}^{\dagger}$ is its Moore-Penrose pseudoinverse.

Across all linear models under the noisy setting, LS leads to the minimum-variance unbiased estimator of $\vect{x}$ \cite{gauss1823theoria} { for $k$-bandlimited signals and full-band noise}, which theoretically justifies us focusing our analysis on LS. 
On the other hand, GLR implicitly assumes a multivariate Gaussian signal with covariance $\matr{L}^{\dagger}$\cite{dong2016learning}, which is slightly mismatched with the $k$-bandlimited signal model we focus on in this paper. Nevertheless, in practice GLR is still often used for large graphs as computing $\matrsubU{N}$ for LS is slow \cite{puy2018random, bai2020fast}. It is also a typical regression model that promotes signal smoothness and has been widely adopted in the literature \cite{belkin2004semi}. Because of these considerations we believe GLR is still a meaningful case to study.

Finally, we clarify what we mean when we consider LS with sample size less than bandwidth. { In this case }if LS is defined as the minimisation of the objective (\ref{eq:LS_Defn}) there are multiple solutions. Thus, we follow \cite{anis2016efficient} and define the LS reconstruction as the 
unique minimum-norm solution \cite[Sect. 5.5.1]{golub13}, hence (\ref{eq:defn_RS:LS}) applies regardless of sample size. 

{%
\begin{remark}    
We note that if $\vect{x}, \vect{\epsilon}$ and $\sigma$ are Gaussian, we can compute the best predictor of $\vect{x}$ in MSE terms. This is the expectation of $\vect{x}$ conditional on the observation, and is also known as the `Optimal Bayesian Reconstruction'. It has been proven in this case that decreasing sample size must increase MSE \cite{chamon2017greedy}. We do not attempt to characterise this or other noise-adaptive methods; we assume that $\matr{R}_{\set{S}}$ is not dependent on $\sigma$.
\end{remark}
}

\subsection{Optimality criteria for sampling}
\label{sec:Optimality_Criteria}
To meaningfully contrast choices of vertex sample set size and selection, we need to evaluate reconstruction performance, and we do so by certain optimality criteria. In the noiseless case, the main optimality criterion for a vertex sample set $\set{S}$ is whether it is a \emph{uniqueness set} 
 \cite{pesenson2008sampling}, that is, if we can perfectly reconstruct any $k$-bandlimited signal observed on $\set{S}$. Such a set always exists and $\set{S}$ is a uniqueness set for a bandwidth $k$ if and only if $\text{rank}(\matrsubU{S}) = k$ \cite{anis2016efficient}. 
 
 In the case of additive observation noise, there are multiple common optimality criteria \cite[Chapter 6]{pukelsheim2006optimal}:

\begin{itemize}
    \item \emph{MMSE criterion}: Minimise  average MSE. \cite{wang2018optimal,wang2019low, mfn}
    \item \emph{Confidence Ellipsoid criterion}: Minimise the confidence ellipsoid around eigenbasis co-efficients. \cite{jayawant2021doptimal, tremblay2017determinantal, mfn}
    \item \emph{WMSE criterion}: Minimise worst-case MSE. \cite{bai2020fast, EOptimalChen}
\end{itemize}

\noindent 
Under LS, these criteria have the following names and forms:
\begin{align}
    \text{\emph{(MMSE)} A-Optimality:} &\text{ minimise } \text{tr}(\matr{P}^{-1}) \label{eq:A-optimality}\\
    \text{\emph{(Conf. Ellips.)} D-Optimality:} &\text{ maximise } \text{det}(\matr{P})
 \label{eq:D-optimality}\\
    \text{\emph{(WMSE)} E-Optimality:} &\text{ maximise }
        \lambda_{min}(\matr{P}) \label{eq:E-optimality}
\end{align}
\begin{flalign}
    &\text{where }&\matr{P} = \begin{cases}
        \projbl[S] &\text{if }|\set{S}| < k \\
        \matrsubU{S}^{T}\matrsubU{S} &\text{if }|\set{S}| \geq k
    \end{cases} \qquad\qquad&
\end{flalign}  
\noindent and we define $\trace{\matr{P}^{-1}} = +\infty$ in (\ref{eq:A-optimality}) if $\matr{P}$ is not invertible.

\subsection{Problem setting}
In this paper, we are interested in a theoretical characterisation of the impact of sample size on MSE under all possible SNRs. 
For our theoretical results and experiments, we assume:
\begin{itemize}
    \item A known graph $\mathcal{G}$ which is connected and undirected.
    \item A known bandwidth $k$.
    \item A clean underlying $k$-bandlimited signal $\vect{x}$ drawn from a known distribution.
    \item Observations of $\vect{x}$ are corrupted by noise which is either:
    \begin{itemize}
        \item full-band, so we observe a non-bandlimited signal.
        \item $k$-bandlimited, so we observe a $k$-bandlimited signal.
    \end{itemize}
    \item Linear reconstruction, in particular LS and GLR.
\end{itemize}
In this paper, we study the behaviour of the MMSE criterion, that is, the MSE averaged over a known signal model and known noise model, which we write as
\begin{equation}
    \textrm{MSE}_{\set{S}} = \expect[\vect{x},\vect{\epsilon}]{\sqnormvec{\hat{\vect{x}} - \vect{x}} \smallskip \mid \set{S} \textrm{ observed} }.
\end{equation}

\section{Main Theoretical Results}
\label{main_results_sec}
\subsection{Overview and proof approaches}
\label{sec:theory-overview}
In this section, we prove theorems showing how the relationship between sample size and MSE changes with different levels of observation noise, with a focus on showing when reducing sample size reduces MSE. We first present a high level sketch of our approach. To study the effect of noise, we perform a Bias-Variance decomposition on the MSE:
\begin{equation}
    \textrm{MSE}_{\set{S}} = \underbrace{\xi_{1}(\set{S})}_{\expect{\text{bias}^{2}}} + \underbrace{\sigma^{2} \cdot \xi_{2}(\set{S})}_{\expect{\text{variance}}} 
\end{equation}
where the bias term $\xi_{1}(\set{S}) = \text{MSE}_{\set{S}}$ when $\sigma^{2} = 0$ is the MSE attributable to reconstruction of the clean signal, and $\xi_{2}(\set{S})$ can be understood as a noise-sensitivity term (see Section \ref{sec:general} for derivations).
With this decomposition, the relationship between sample size and MSE under different levels of noise reduces to how the bias and noise-sensitivity vary with respect to sample size. 

The focus of the paper is not to characterise all the cases where decreasing sample size decreases MSE, but rather to clearly show that it does happen in a wide variety of cases. In service of this, we focus on certain broad cases that are more tractable, which we call `simplifications'. For example, we only compare a sample set $\set{S}$ to a subset of it, i.e., $\set{T} \subset \set{S}$.

Our general approach per reconstruction method is:
\begin{itemize}
    \item Choose a simplification (`simplify');
    \item Under this simplification, characterise all conditions when decreasing sample size can decrease MSE (`characterise');
    \item Show that these conditions actually happen (`existence');
    \item Study these conditions as $N \to \infty$, to prove the conditions may happen on large graphs (`asymptotics').
\end{itemize}

For LS, we simplify the problem by only considering decreasing the sample size by one at a time, which we call the `single vertex' simplification. We pay particular attention to subsets sampled by sampling schemes that are optimal in the noiseless setting (Subsection \ref{sec:LS_full_band}). For GLR, we compare observing the full graph to observing a subset of the vertices. We call this the `full observation' simplification. We focus on graphs which satisfy certain graph invariants (Subsection \ref{sec:GLR_full_band}). We justify these simplifications in the relevant subsections below.
 
We then consider reconstruction under bandlimited noise (Subsections \ref{sec:LS_bandlimited} and \ref{sec:GLR_bandlimited}) to show that the reduction in MSE from reducing sample size is not sensitive to our noise model, nor due to the high frequency component of the noise.

\xd{An overview of the four steps in our general approach together with the main results are summarised in Table \ref{tbl:general_theory}.}

\begin{table}[h]
\setlength{\tabcolsep}{4pt}
\caption{Structure of theoretical results.}
\centering
\begin{tabularx}{\linewidth}{|p{1.3cm}|X|p{1.1cm}|p{1.3cm}|p{1.1cm}|p{1.3cm}|}
\hline
 & \textbf{General} &\multicolumn{2}{c|}{\textbf{LS}} & \multicolumn{2}{c|}{\textbf{GLR}} \\
\hline
{Noise} & Any & Full-band & Bandlimited & Full-band & Bandlimited \\
\hline
\emph{Simplif- ication} & $\set{S} \supset \set{T}$ & \multicolumn{2}{c|}{$\set{S}$ vs $\set{S} \backslash \{v\}$} &  \multicolumn{2}{c|}{$\set{N}$ vs $\set{S}$}  \\
\hline
\emph{Character- isation} & Thm \ref{main_general} & Corr \ref{main_ls} & Corr \ref{corr:LS_bandlimited_noise_big_variance} & \multicolumn{2}{c|}{Corr \ref{corr:main_GLR_iff}}  \\
\hline
\emph{Existence} & & Thm \ref{thm:noiseless_optimality_means_noise_sensitivity} & Corr \ref{corr:LS_bandlimited_noise_sample_only_k} & Thm \ref{thm:main_GLR_exist} & Thm \ref{thm:main_GLR_bl} \\
\hline
\emph{Asymptotics} & & Remark \ref{rmk:LS_big_N} & Remark \ref{rmk:LS_big_N_bl} & Propn \ref{propn:GLR_big_N} & Propn \ref{propn:GLR_big_N_bl} \\
\hline
\end{tabularx}
\label{tbl:general_theory}

\end{table}

\subsection{General results}
\label{sec:general}
{%
We start with general results. All results in this Section \ref{sec:general} apply to a very general signal model; we only assume that $\vect{x}$ and $\vect{\epsilon}$ are drawn from independent distributions, that $\expect{\vect{\epsilon}} = 0$, $\expect{\sqnormvec{\vect{x}}} < \infty$ and $\expect{\sqnormvec{\vect{\epsilon}}} < \infty$ . We then specialise to the signal and noise models introduced in Section \ref{sec:signal_model}, based on which we present the analysis in Section \ref{sec:every_x} and results in Sections \ref{sec:LS_full_band}, \ref{sec:GLR_full_band}, \ref{sec:LS_bandlimited} and \ref{sec:GLR_bandlimited}.

To understand the effect of changing the sample size on MSE at different levels of noise, we use a variant of the Bias-Variance decomposition  \cite{geman1992neural} on the MSE to separate out the effect of noise. Let $\hat{\vect{x}}$ be a reconstruction of the signal $\vect{x}$, then
\begin{equation}
    \textrm{MSE}_{\set{S}} = \mathbb{E}_{\vect{x}}\Biggl[  \underbrace{\sqnormvec{\vect{x} - \expect[\vect{\epsilon}]{\hat{\vect{x}}}} \rule[-1.5ex]{0pt}{0pt} }_{\text{Bias}(\hat{\vect{x}},\vect{x})^{2}} + \underbrace{\expect[\vect{\epsilon}]{\sqnormvec{ \expect[\vect{\epsilon}]{\hat{\vect{x}}} - \hat{\vect{x}}}}}_{\text{Var}(\hat{\vect{x}})} \Biggr]\label{eq:Bias_var_decomp_general}
\end{equation}
This decomposition applies to \emph{any} reconstruction $\hat{\vect{x}}$ of $\vect{x}$.
\begin{proof}
    See Appendix \ref{app:bias-variance} or \cite[Chapter 5.4.4]{Goodfellow2016DeepLearning}.
\end{proof}
}
We consider reconstructing a signal with a linear reconstruction method. We first consider a generic $\matr{R}_{\set{S}}$. 
\noindent For linear reconstruction, we have $\hat{\vect{x}} = \matr{R}_{\set{S}}\vectsubgen[\set{S}]{\vect{x} + \sigma \cdot \vect{\epsilon}}$. By assumption, $\expect{\vect{\epsilon}} = 0$, so $\expect[\vect{\epsilon}]{\hat{\vect{x}}} = \matr{R}_{\set{S}}\vectsub[S]{x}$. Therefore
\begin{align}
    \text{Bias}(\hat{\vect{x}}, \vect{x})^{2} &= \expect[\vect{x}]{\sqnormvec{\left(\matr{I} - \matr{R}_{\set{S}}\matrsub[S,N]{I}\right)\vect{x}}} 
 = \xi_{1}(\set{S}) \label{eq:def_bias_gen}\\
    \text{Var}(\hat{\vect{x}}) &= \sigma^{2} \cdot \expect[\vect{\epsilon}]{\sqnormvec{\matr{R}_{\set{S}}\vectsub[S]{\epsilon}}} = \sigma^{2} \cdot \xi_{2}(\set{S})\label{eq:def_var_gen}
\end{align}
where $\xi_{1}(\set{S})$ and $\xi_{2}(\set{S})$ are averaged over $\vect{x}$ and $\vect{\epsilon}$, so are not random. Therefore
\begin{equation}
    \textrm{MSE}_{\set{S}} = \underbrace{\xi_{1}(\set{S})}_{\expect{\textrm{Bias}(\hat{\vect{x}},\vect{x})^{2}}} + \enspace \underbrace{\sigma^{2} \cdot \xi_{2}(\set{S})}_{\expect{\textrm{Var}(\hat{\vect{x}})}}. \label{eq:xi_decomp}
\end{equation}

\noindent We will refer to $\xi_{1}(\set{S})$ as the `bias' of $\matr{R}_{\set{S}}$ and $\xi_{2}(\set{S})$ as the `noise-sensitivity' of $\matr{R}_{\set{S}}$. 

In Statistical Learning Theory, the standard Bias-Variance decomposition is used to show how increasing the complexity of a model often increases its ability to fit the data (reducing `bias') while increasing its noise-sensitivity (increasing `variance'), and that the optimum model complexity minimising MSE balances these two components. In the rest of the paper, we will use our Bias-Variance decomposition to show that while decreasing the sample size for a reconstruction method might increase bias it can also decrease noise-sensitivity hence reducing the variance, and that the optimal sample size minimising MSE balances these two components. In one sense, this is analogous to avoiding `overfitting to noise' in machine learning, where increasing the number of parameters can increase variance more than it decreases bias.

We provide the definitions and a theoretical result to quantify this. Both the `single vertex' and `full observation' simplifications considered in the paper compare an observed set $\set{S}$ to its subset $\set{T} \subset \set{S}$ which is reflected in our definitions.

\begin{defn}
Let $\set{T} \subset \set{S}$. We say that 
\begin{align}
\renewcommand{\arraystretch}{0.95}
\!\!\!\!
    \begin{array}{l l}
        \matr{R}_{\set{T}} \text{ is \emph{less biased than }} \matr{R}_{\set{S}}  & \!\!\!\!\text{if } \xi_{1}(\set{T}) < \xi_{1}(\set{S}) \\
        \matr{R}_{\set{T}} \text{ is \emph{less noise-sensitive than }} \matr{R}_{\set{S}}  & \!\!\!\!\text{if } \xi_{2}(\set{T}) < \xi_{2}(\set{S})  
    \end{array}
\end{align}

Furthermore, we say that
\begin{align}
\!\!\!\!
\renewcommand{\arraystretch}{0.95}
    \begin{array}{l l}
        \set{T} \text{ is \emph{better than} }\set{S} &\text{if } \textrm{MSE}_{\set{T}} < \textrm{MSE}_{\set{S}} \\
        \set{T} \text{ is \emph{as good or better than} }\set{S} &\text{if } \textrm{MSE}_{\set{T}} \leq \textrm{MSE}_{\set{S}} \\
        \set{T} \text{ is \emph{worse than} }\set{S} &\text{if } \textrm{MSE}_{\set{T}} > \textrm{MSE}_{\set{S}}.
    \end{array}
\end{align}
\end{defn}

\noindent For $i \in \{1,2\}$ and $\set{T}\subset \set{S}$, let
\begin{equation}
    \Delta_i(\set{S}, \set{T}) = \xi_{i}(\set{S}) - \xi_{i}(\set{S} \backslash \set{T}). 
\end{equation}
$\Delta_{1}(\set{S},\set{T}) > 0$ means $\matr{R}_{\set{S}\backslash \set{T}}$ is less biased than $\matr{R}_{\set{S}}$ and $\Delta_{2}(\set{S},\set{T}) > 0$ means $\matr{R}_{\set{S}\backslash \set{T}}$ is less sensitive to noise than $\matr{R}_{\set{S}}$.
Then, by (\ref{eq:xi_decomp}), the change in MSE from reducing sample size is 
\begin{equation}
    \text{MSE}_{\set{S}} - \text{MSE}_{\set{S} \backslash \set{T} }
    = \Delta_{1}(\set{S},\set{T}) + \sigma^{2} \cdot \Delta_{2}(\set{S},\set{T}) \label{eq:EMSE_decomp_into_delta}
\end{equation}
so $\set{S} \backslash \set{T}$ is better than $\set{S}$ if and only if
\begin{equation}
    \Delta_{1}(\set{S},\set{T}) > - \sigma^{2} \cdot \Delta_{2}(\set{S},\set{T}). \label{eq:general:better_xi_ineq}
\end{equation}
\begin{remark}
 If either $\Delta_{1}(\set{S},\set{T})$ or $\Delta_{2}(\set{S},\set{T})$ are positive, we can always pick $\sigma^{2}$ so $\set{S} \backslash \set{T}$ is better than $\set{S}$. If both $\Delta_{1}(\set{S},\set{T})$ and $\Delta_{2}(\set{S},\set{T})$ are negative then  $\set{S} \backslash \set{T}$ is never better than $\set{S}$.   
\end{remark}

The following Theorem characterises our bias/variance trade-off by computing the noise level at which an increase in bias is outweighed by an decrease in noise-sensitivity (or vice-versa) on average.

\begin{theorem}
\label{main_general}
Assume a linear reconstruction method and consider $\set{S}\supset \set{T}$. Let 
\begin{equation}   
 \tau(\set{S}, \set{T}) = { \frac{\expect{\sqnormvec{\vect{x}}}}{\expect{\sqnormvec{\vect{\epsilon}}}} } \cdot \frac{\Delta_2(\set{S}, \set{T})}{- \Delta_1(\set{S}, \set{T})} 
\end{equation}
    then $\set{S} \backslash \set{T}$ is better than $\set{S}$ if and only if one of the following conditions is met:
\begin{subnumcases}{ \label{eq:main_thm_cond} }
       \text{SNR} < \tau(\set{S}, \set{T}) &and $\Delta_{1}(\set{S},\set{T}) < 0$ \label{eq:main_thm_cond:d1neg}\\
       \text{SNR} > \tau(\set{S}, \set{T}) &and $\Delta_{1}(\set{S},\set{T}) > 0$ \label{eq:main_thm_cond:d1pos} \\
    0 < \Delta_2(\set{S},\set{T}) &and $ \Delta_{1}(\set{S},\set{T}) = 0$. \label{eq:main_thm_cond:d1zero}
\end{subnumcases}

\end{theorem}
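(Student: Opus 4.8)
The plan is to start directly from the per-set decomposition already established in the paper. By (\ref{eq:EMSE_decomp_into_delta}) we have $\text{MSE}_{\set{S}} - \text{MSE}_{\set{S}\backslash\set{T}} = \Delta_1(\set{S},\set{T}) + \sigma^2\,\Delta_2(\set{S},\set{T})$, so by definition ``$\set{S}\backslash\set{T}$ is better than $\set{S}$'' is equivalent to the single scalar inequality $\Delta_1(\set{S},\set{T}) + \sigma^2\,\Delta_2(\set{S},\set{T}) > 0$, which is exactly (\ref{eq:general:better_xi_ineq}). The entire proof then consists of rewriting this inequality in terms of the SNR. Using the noise-model identity $\sigma^2 = \frac{k}{N\cdot\text{SNR}}$ and the fact that $\text{SNR} > 0$, I multiply through by the strictly positive quantity $\frac{N\cdot\text{SNR}}{k}$ to clear $\sigma^2$, obtaining the equivalent inequality $\text{SNR}\cdot\Delta_1(\set{S},\set{T}) > -\tfrac{k}{N}\,\Delta_2(\set{S},\set{T})$.

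Next I would perform the three-way case split on the sign of $\Delta_1(\set{S},\set{T})$, which is precisely the trichotomy appearing in the statement. If $\Delta_1(\set{S},\set{T}) > 0$, dividing by $\Delta_1(\set{S},\set{T})$ preserves the inequality direction and yields $\text{SNR} > -\tfrac{k}{N}\tfrac{\Delta_2(\set{S},\set{T})}{\Delta_1(\set{S},\set{T})} = \tau(\set{S},\set{T})$, which is (\ref{eq:main_thm_cond:d1pos}). If $\Delta_1(\set{S},\set{T}) < 0$, dividing reverses the inequality and gives $\text{SNR} < \tau(\set{S},\set{T})$, which is (\ref{eq:main_thm_cond:d1neg}). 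If $\Delta_1(\set{S},\set{T}) = 0$, the inequality collapses to $\tfrac{k}{N}\Delta_2(\set{S},\set{T}) > 0$, i.e. $0 < \Delta_2(\set{S},\set{T})$, which is (\ref{eq:main_thm_cond:d1zero}). Every step in each branch is an equivalence, so the three conditions are jointly necessary and sufficient; I would also add a one-line remark that when $\Delta_1(\set{S},\set{T})\neq 0$ but $\Delta_2(\set{S},\set{T})$ carries the ``wrong'' sign (so $\tau(\set{S},\set{T})\le 0$), the criterion is automatically consistent with the Remark following (\ref{eq:general:better_xi_ineq}), since a positive SNR can never lie below a non-positive threshold (the $\Delta_1<0$ case) and always lies above it (the $\Delta_1>0$ case).

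There is no genuine analytic obstacle here: all the work is in the Bias–Variance decomposition (\ref{eq:xi_decomp})–(\ref{eq:EMSE_decomp_into_delta}), which is already derived. The only points requiring care are purely bookkeeping: (i) dividing a strict inequality by $\Delta_1(\set{S},\set{T})$ flips its direction exactly when $\Delta_1(\set{S},\set{T})<0$, which is what pairs (\ref{eq:main_thm_cond:d1neg}) with ``$<$'' and (\ref{eq:main_thm_cond:d1pos}) with ``$>$''; (ii) $\tau(\set{S},\set{T})$ must be left undefined in the $\Delta_1(\set{S},\set{T})=0$ branch, which is why that case is phrased solely in terms of $\Delta_2(\set{S},\set{T})$; and (iii) one must use the $\sigma^2$–SNR conversion appropriate to the noise model in force, $\sigma^2 = \frac{k}{N\cdot\text{SNR}}$ for full-band noise, with the bandlimited-noise variant following the identical argument after substituting the relevant $\sigma^2$ and the corresponding definition of $\xi_2$ from (\ref{eq:xi_2_def}). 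I expect the final write-up to be well under half a page.
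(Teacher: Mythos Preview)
Your proposal is correct and follows essentially the same route as the paper's proof in Appendix~\ref{app:main_thm_proof}: start from (\ref{eq:EMSE_decomp_into_delta}), substitute $\sigma^{2}=\frac{k}{N\cdot\text{SNR}}$, clear the SNR, and case-split on the sign of $\Delta_{1}$. One trivial slip to fix in the write-up: multiplying by $\frac{N\cdot\text{SNR}}{k}$ does not produce the inequality you state---you simply want to multiply by $\text{SNR}$ (as the paper does) to reach $\text{SNR}\cdot\Delta_{1} > -\tfrac{k}{N}\Delta_{2}$.
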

\begin{proof}[Proof Sketch]
    We first get
    $
    { \frac{\expect{\sqnormvec{\vect{x}}}}{\expect{\sqnormvec{\vect{\epsilon}}}} } \cdot
        \Delta_{2}(\set{S},\set{T}) > -\Delta_{1}(\set{S},\set{T}) \cdot \textrm{SNR} 
    $ from (\ref{eq:general:better_xi_ineq}) 
    and then divide by $-\Delta_{1}(\set{S},\set{T})$, case-splitting on its different possible signs. See Appendix \ref{app:main_thm_proof} for a full proof.
\end{proof}

We interpret each of these conditions as follows:
\begin{enumerate}
    \item[(\ref{eq:main_thm_cond:d1neg})]
    If $\matr{R}_{\set{S} \backslash \set{T}}$ is more biased than $\matr{R}_{\set{S}}$, then $\set{S} \backslash \set{T}$ is better than $\set{S}$ if SNR is low enough (below a threshold $\tau$).
    \item[(\ref{eq:main_thm_cond:d1pos})] 
    If $\matr{R}_{\set{S} \backslash \set{T}}$ is less biased than $\matr{R}_{\set{S}}$, then $\set{S} \backslash \set{T}$ is better than $\set{S}$ if SNR is \emph{high} enough (above a threshold $\tau$).
    \item[(\ref{eq:main_thm_cond:d1zero})]
    If $\matr{R}_{\set{S}\backslash\set{T}}$ and $\matr{R}_{\set{S}}$ are equally biased and $\matr{R}_{\set{S} \backslash \set{T}}$ is less noise-sensitive than $\matr{R}_{\set{S}}$, then $\set{S} \backslash \set{T}$ is better than $\set{S}$ at every non-zero noise level.
\end{enumerate}
Theorem \ref{main_general} does not guarantee that any of the conditions for $\set{S} \backslash \set{T}$ being better than $\set{S}$ would happen; for example, $\matr{R}_{\set{S}}$ could be both less biased and less sensitive to noise than $\matr{R}_{\set{S} \backslash \set{T}}$. To show reducing sample size can reduce MSE, i.e. $\Delta_{1}$ and $\Delta_{2}$ are not always both non-positive, we look at specific reconstruction methods below.

{%

To show that the conditions in Theorem \ref{main_general} happen, we focus on showing that $\Delta_{2}$ can be positive, i.e., decreasing sample size can decrease noise-sensitivity of the reconstruction method.
This focus on the noise model means results about $\Delta_{2}$ under one signal model imply results for \emph{every} finite variance signal model, even if the signal model is not bandlimited:

\begin{propn}
\label{propn:averages_generalise_to_forall}
    Fix a mean zero noise model. Assume that $\matr{R}_{\set{S}}$ is not a funtion of $\sigma$ and suppose that we prove that $\Delta_{2}(\set{S},\set{T}) > 0$ under a specific signal model. If instead $\vect{x}$ is drawn from \emph{any} signal model where the signal is independent to the noise and  $\expect{\sqnormvec{\vect{x}}} < \infty$, then there exists a threshold $\tau'(\set{S},\set{T}) > 0$ (which is dependent on the choice of signal model) such that if
    \begin{equation}
        \textrm{SNR} < \tau'(\set{S},\set{T})
    \end{equation}
    then under that new signal model,
    \begin{equation}
        \textrm{MSE}_{\set{S} \backslash \set{T}}  < \textrm{MSE}_{\set{S}}.
    \end{equation}
\end{propn}
\begin{proof}
    See Appendix \ref{app:every_x}.
\end{proof}  

\noindent This includes the signal model where $\vect{x}$ is some fixed signal. While Proposition \ref{propn:averages_generalise_to_forall} means that our approach applies to any signal model, any result computing $\tau$ is specific to the chosen signal model. This is the focus of the rest of the section.

\subsection{Specialisation to signal and noise models}
\label{sec:every_x}
We now move to our specific signal and noise models introduced in Section \ref{sec:signal_model}. Firstly, we note that in Theorem \ref{main_general}, $\frac{\expect{\sqnormvec{\vect{x}}}}{\expect{\sqnormvec{\vect{\epsilon}}}}$
 is $\frac{k}{N}$ for full-band noise and $1$ for bandlimited noise.

For any zero-mean random vector $\vect{g}$ and compatible matrix $\matr{A}$, write $\text{Cov}(\vect{g}) = \matr{X}\matr{X}^{T}$. Using 
\begin{align}
    \expect[\vect{g}]{\sqnormvec{\matr{A}\vect{g}}} &= \expect[\vect{g}]{\trace{\matr{A}\vect{g}\vect{g}^{T}\matr{A}^{T}}} \\
    &= \trace{\matr{A}\text{Cov}(\vect{g})\matr{A}^{T}} = \sqfrob{\matr{A}\matr{X}}
\end{align} 
gives
\begin{align}
    \xi_{1}(\set{S}) &= \sqfrob{\matrsubU{N} - \matr{R}_{\set{S}}\matrsubU{S}} \label{eq:xi_1_def} \\
    \xi_{2}(\set{S}) &= \begin{cases}
        \sqfrob{\matr{R}_{\set{S}}} &\text{if full-band noise} \label{eq:xi_2_def} \\
\sqfrob{\matr{R}_{\set{S}}\matrsubU{S}} &\text{if }k\textrm{-bandlimited noise} 
    \end{cases}
\end{align}

We note that given Theorem \ref{main_general} and these terms, for any given reconstruction method and two sets $\set{S} \supset \set{T}$ we can now analytically compute if a given $\set{T}$ is better than a set $\set{S}$ under our signal model, and at what noise levels. The rest of the paper focuses on how to find such sets.
}

To show when $\Delta_{2} > 0$, we first consider a `single vertex' simplification where $\set{T} = \{v\}$. This is the approach we take for LS, and also motivates the `full observation' simplification for GLR. For a vertex $v$, as shorthand we write:
\begin{align}
    \tau(\set{S},v) = \tau(\set{S},\{v\})   
    , \quad \Delta_{i}(\set{S},v) = \Delta_{i}(\set{S},\{v\}). 
\end{align}
We start by looking at possible combinations of signs of $\Delta_{1}(\set{S},v)$ and $\Delta_{2}(\set{S},v)$, which we present in Table \ref{tbl:Delta_Behaviour}.

\begin{table}[h]
\setlength{\belowcaptionskip}{3pt}
\caption{Possible signs of $\Delta_{1}(\set{S},v)$ and $\Delta_{2}(\set{S},v)$.}
    \begin{subtable}[h]{0.48\linewidth}
        \begin{tabular}{| l | c | c |}
        \hline
         & \thead{$\Delta_{2}>0$} & $\Delta_{2} \leq 0$ \\
        \hline
        $\Delta_{1} > 0$ & $\times$ & $\times$\\
        $\Delta_{1} < 0$ & $\checkmark$ & $\times$\\
        $\Delta_{1} = 0$ & $\times$ & $\checkmark$\\ \hline
       \end{tabular}
       \caption{LS Reconstruction}
       \label{tbl:LS_Reconstruction_Delta}
    \end{subtable}
    \hspace{\fill}
    \begin{subtable}[h]{0.48\linewidth}
        \begin{tabular}{| l | c | c |}
        \hline
         & \thead{$\Delta_{2}>0$} & \thead{$\Delta_{2} \leq 0$} \\
        \hline
        $\Delta_{1} > 0$ & $\checkmark$ & $\checkmark$\\
        $\Delta_{1} < 0$ & $\checkmark$ & $\checkmark$\\
        $\Delta_{1} = 0$ & $\sim$ & $\sim$\\ \hline
       \end{tabular}
        \caption{GLR Reconstruction}
    \label{tbl:GLR_Reconstruction_Delta}
     \end{subtable}
     \label{tbl:Delta_Behaviour}
\end{table}
\noindent In Table \ref{tbl:Delta_Behaviour}, $\times$ will not happen, $\checkmark$ may happen and $\sim$ may theoretically happen but are unlikely. A more detailed explanation of the two tables, and proofs of the `$\times$' subcases, are presented in Appendix \ref{app:table_delta_proof}. We now discuss LS and GLR separately in the following sections.

\subsection{LS with full-band noise}
\label{sec:LS_full_band}
In this subsection we show how decreasing sample size by one can decrease MSE under LS. From Table \ref{tbl:LS_Reconstruction_Delta}, we see that reducing sample size never reduces bias under LS, so we focus on when reducing sample size reduces noise-sensitivity.  %

\subsubsection{Overview and simplification}
Our approach in this subsection is as follows. We consider the `single vertex' simplification. For a sample set $\set{S}$ and $v\in\set{S}$, we first characterise under exactly what conditions $\set{S} \backslash v$ is better than $\set{S}$ (Corollary \ref{main_ls}). We then show that the conditions must occur under sampling schemes which are optimal in the noiseless case (Theorem \ref{thm:noiseless_optimality_means_noise_sensitivity}). Finally, we comment on how the conditions persist as $N \to \infty$.

\subsubsection{Characterisation}
By Table \ref{tbl:LS_Reconstruction_Delta} we can eliminate conditions (\ref{eq:main_thm_cond:d1pos}) and (\ref{eq:main_thm_cond:d1zero}) in Theorem \ref{main_general} for the single-vertex simplification under LS.

We simplify Theorem \ref{main_general} to the following:

\begin{corollary}
\label{main_ls}
    Assume LS reconstruction. Then 
    \begin{equation}
        \tau(\set{S},v) = \frac{k}{N} \cdot \Delta_{2}(\set{S},v) \label{eq:main_ls_simplified_tau}
    \end{equation}
    and  $\set{S} \backslash \{v\}$ is better than $\set{S}$ if and only if
 \begin{equation}
 \label{eq:LS_corol_iff}
     \textrm{SNR} < \tau(\set{S},v).
 \end{equation}
 \end{corollary}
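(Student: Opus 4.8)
The plan is to read Corollary~\ref{main_ls} off Theorem~\ref{main_general} after specialising to LS with $\set{T} = \{v\}$. The two facts I would invoke about this setting are both already contained in Table~\ref{tbl:LS_Reconstruction_Delta} (and proved in the appendix, following the arXiv version of~\cite{sripathmanathan2023impact}): first, that $\xi_{1}(\set{S}) = k - \rank{\matrsubU{S}}$, so that $\Delta_{1}(\set{S},v)$ equals the change in $\rank{\matrsubU{S}}$ caused by deleting the single row indexed by $v$ --- and since deleting one row drops the rank by at most one and can never raise it, $\Delta_{1}(\set{S},v) \in \{-1,0\}$; second, that $\Delta_{1}(\set{S},v) < 0 \iff \Delta_{2}(\set{S},v) > 0$. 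The first fact already removes clause~\eqref{eq:main_thm_cond:d1pos} of Theorem~\ref{main_general} from consideration (it requires $\Delta_{1} > 0$), so I only have to treat the two cases $\Delta_{1}(\set{S},v) = -1$ and $\Delta_{1}(\set{S},v) = 0$.

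First I would handle $\Delta_{1}(\set{S},v) = -1$. Here $-\Delta_{1}(\set{S},v) = 1$, so the threshold in Theorem~\ref{main_general} simplifies directly to $\tau(\set{S},v) = \tfrac{k}{N}\Delta_{2}(\set{S},v)$, which is~\eqref{eq:main_ls_simplified_tau}, and clause~\eqref{eq:main_thm_cond:d1neg} is exactly the assertion that $\set{S}\backslash\{v\}$ is better than $\set{S}$ iff $\text{SNR} < \tau(\set{S},v)$, i.e.~\eqref{eq:LS_corol_iff}. Then I would handle $\Delta_{1}(\set{S},v) = 0$, where the quotient defining $\tau$ in Theorem~\ref{main_general} is undefined but the Corollary's formula $\tau(\set{S},v) = \tfrac{k}{N}\Delta_{2}(\set{S},v)$ still makes sense; the claim is that both sides of~\eqref{eq:LS_corol_iff} are then false. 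By the second fact, $\Delta_{1}(\set{S},v) = 0$ forces $\Delta_{2}(\set{S},v) \le 0$, so $\tau(\set{S},v) \le 0 < \text{SNR}$ (SNR being strictly positive), which falsifies the right-hand side; and on the left, the only clause of Theorem~\ref{main_general} applicable when $\Delta_{1} = 0$ is~\eqref{eq:main_thm_cond:d1zero}, which needs $\Delta_{2}(\set{S},v) > 0$ and hence also fails, so $\set{S}\backslash\{v\}$ is not better than $\set{S}$. Since the biconditional~\eqref{eq:LS_corol_iff} holds in each case, combining the cases proves the Corollary.

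I do not expect a genuine obstacle here: the mathematical content sits entirely in the two cited lemmas about the rank of $\matrsubU{S}$ and the structure of $\matr{R}_{\set{S}} = \matrsubU{N}\matrsubU{S}^{\dagger}$, which are established separately. The one place I would be careful is the degenerate case $\Delta_{1}(\set{S},v) = 0$, where Theorem~\ref{main_general}'s threshold is undefined: I need to confirm that extending it by the natural formula $\tfrac{k}{N}\Delta_{2}(\set{S},v)$ does not break the ``iff $\text{SNR} < \tau$'' characterisation --- and it does not, precisely because $\Delta_{2}(\set{S},v) \le 0$ there makes $\tau(\set{S},v)$ non-positive, so the strict inequality against a positive SNR can never be met, matching the fact that reducing the sample size cannot help in that case.
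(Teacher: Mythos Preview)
Your proposal is correct and follows essentially the same route as the paper: both case-split on $\Delta_{1}(\set{S},v)\in\{-1,0\}$ via Lemma~\ref{lemma:LS_xi_1_is_rank}, use Lemma~\ref{lemma:LS_delta_1_improvement_means_delta_2_worse} to tie the sign of $\Delta_{2}$ to that of $\Delta_{1}$, and then read off the threshold. The only cosmetic difference is that you invoke Theorem~\ref{main_general} directly (as the proof sketch does), whereas the appendix proof restarts from the decomposition~\eqref{eq:EMSE_decomp_into_delta}, thereby sidestepping the division-by-zero in $\tau$ when $\Delta_{1}=0$ that you handle explicitly.
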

\begin{proof}[Proof Sketch]  Under LS, $\Delta_{1}(\set{S},v)$ can only be $-1$ or $0$. This simplifies $\tau$ in Theorem \ref{main_general} to (\ref{eq:main_ls_simplified_tau}):
\newline
    \emph{If $\Delta_{1}(\set{S},v) = 0$: }
    We have $\Delta_{2}(\set{S},v) \leq 0$ by Table \ref{tbl:LS_Reconstruction_Delta}. By Theorem \ref{main_general}, $\set{S} \backslash \{v\}$ is never better than $\set{S}$.
    
    \noindent\emph{If $\Delta_{1}(\set{S},v) = -1$: }
    We have $\Delta_{2}(\set{S},v) > 0$ by Table \ref{tbl:LS_Reconstruction_Delta} so $\tau(\set{S},v) > 0$. Theorem \ref{main_general}'s conditions reduce to this case.
    
\noindent See Appendix \ref{proof_appendix_LS} for a full proof.
\end{proof}

{ We note that the definition of $\tau$ in Corollary \ref{main_ls} is not completely equivalent to the definition of $\tau$ in Theorem \ref{main_general} -- specifically, if $\Delta_{1} = 0$ then $\tau$ in Theorem \ref{main_general} is undefined, and $\tau$ in Corollary \ref{main_ls} is negative. We make this change for ease of presentation -- with this new definition $\tau(\set{S},v) > 0$ if and only if there is a noise level where $\set{S} \backslash \{v\}$ is better than $\set{S}$. }

Corollary \ref{main_ls} says that if SNR is too low (below a threshold $\tau$ that depends on the bandwidth and the chosen samples), then using a smaller sample set improves the average reconstruction error. Note that we have not yet proven that condition (\ref{eq:LS_corol_iff}) is ever satisfied, i.e., reducing sample size can reduce MSE. We outline why it is not immediately obvious that (\ref{eq:LS_corol_iff}) can hold, and then show situations where it does hold and thus reducing sample size will reduce MSE.

\begin{remark}
    If $\Delta_{2} \leq 0$ , we have $\tau(\set{S},v) \leq 0 < $ SNR, so (\ref{eq:LS_corol_iff}) cannot hold and so $\set{S} \backslash \{v\}$ is never better than $\set{S}$ for any SNR.
\end{remark}

\subsubsection{Existence}
We first note that Corollary \ref{main_ls} { as stated} leaves room for a clever sampling scheme which picks $\set{S}_{i}$ where $\tau(\set{S}_{i},v)$ is always non-positive and so condition (\ref{eq:LS_corol_iff}) never holds, and hence $\set{S} \backslash \{v\}$ would never be better than $\set{S}$ for any $v \in \set{S}$. 

{ Corollary \ref{main_ls} uses that $\Delta_{2}(\set{S},v) > 0$ if and only if $ \rank{\matrsubUGen{\set{S}}} = 1 + \rank{\matrsubUGen{\set{S} \backslash \{v\}}}$; this condition can be understood as observing $v$ being informative in noise-free reconstruction. Note that the addition of a row to a matrix can only change the rank by $0$ or $1$. This equivalence also means that the `clever sampling scheme' must never increase said rank as it picks nodes, which is impossible -- else we could pick $\set{N}$ sequentially s.t. $\rank{\matrsubU{N}} = 0$. We now spell out these consequences formally and contextualise them. }

Most sampling schemes in the literature  construct sample sets by adding vertices one-by-one (e.g. greedy schemes, which are `near-optimal' \cite{chamon2016near}). We call such schemes `sequential'. We now show that for any graph, sequential sampling schemes always eventually add a vertex where (\ref{eq:LS_corol_iff}) can be satisfied.

\begin{propn}
   \label{propn:main_existence_LS} 
   Consider a sequential sampling scheme that constructs sample sets $\set{S}_{1}, \ldots, \set{S}_{N}$ where $\set{S}_{i} = \set{S}_{i-1} \cup \{ v_{i} \}$. Then there are exactly $k$ indices $1 \leq I_{1}, \dots, I_{k} \leq N$ where
   \begin{equation}
        \forall 1\leq j \leq k: \tau(\set{S}_{I_{j}}, v_{I_{j}}) > 0,
    \end{equation}
    and so $\set{S}_{I_{j}} \backslash \{v_{I_{j}}\}$ is better than $\set{S}_{I_{j}}$ at some SNR.
\end{propn}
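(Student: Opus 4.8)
The plan is to reduce the whole statement to a telescoping argument about how the rank of $[\matr{U}]_{\set{S}_i,\set{K}}$ grows as vertices are added one at a time. First I would recall the closed form for the LS bias term (Lemma~\ref{lemma:LS_xi_1_is_rank}), namely $\xi_1(\set{S}) = k - \rank{[\matr{U}]_{\set{S},\set{K}}}$, valid for every sample size (this is exactly why $\matr{P} = \projbl[S]$ was defined separately for $|\set{S}| < k$). The key structural observation about a sequential scheme is that removing the $i$-th vertex from $\set{S}_i$ returns the previous set, $\set{S}_i \backslash \{v_i\} = \set{S}_{i-1}$, so
\[
\Delta_1(\set{S}_i, v_i) \;=\; \xi_1(\set{S}_i) - \xi_1(\set{S}_{i-1}) \;=\; \rank{[\matr{U}]_{\set{S}_{i-1},\set{K}}} - \rank{[\matr{U}]_{\set{S}_i,\set{K}}}.
\]
Combining this with Corollary~\ref{main_ls} and Table~\ref{tbl:LS_Reconstruction_Delta}, which together give $\tau(\set{S}_i,v_i) = \tfrac{k}{N}\Delta_2(\set{S}_i,v_i)$ and $\tau(\set{S}_i,v_i) > 0 \iff \Delta_2(\set{S}_i,v_i) > 0 \iff \Delta_1(\set{S}_i,v_i) = -1$, I obtain that $\tau(\set{S}_i,v_i) > 0$ holds precisely when adding $v_i$ to $\set{S}_{i-1}$ strictly increases $\rank{[\matr{U}]_{\cdot,\set{K}}}$.

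Next I would set $r_i := \rank{[\matr{U}]_{\set{S}_i,\set{K}}}$ with the convention $r_0 := 0$ (the empty submatrix). Deleting one row drops the rank by at most one, so each increment $r_i - r_{i-1}$ lies in $\{0,1\}$, and by the previous step $\tau(\set{S}_i,v_i) > 0$ exactly when $r_i - r_{i-1} = 1$. Hence the number of indices $i \in \{1,\dots,N\}$ with $\tau(\set{S}_i,v_i) > 0$ is $\sum_{i=1}^{N} (r_i - r_{i-1}) = r_N - r_0 = r_N$. Since the scheme adds a new vertex at each step we have $\set{S}_N = \set{N}$, and $r_N = \rank{[\matr{U}]_{\set{N},\set{K}}} = k$ because the first $k$ columns of the orthogonal matrix $\matr{U}$ are linearly independent. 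This yields exactly $k$ indices $I_1,\dots,I_k$, and for each of them $\tau(\set{S}_{I_j},v_{I_j}) > 0$, so the range $0 < \textrm{SNR} < \tau(\set{S}_{I_j},v_{I_j})$ is nonempty and Corollary~\ref{main_ls} makes $\set{S}_{I_j}\backslash\{v_{I_j}\}$ better than $\set{S}_{I_j}$ there, which finishes the proof.

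I do not expect a genuine obstacle: once the ingredients are in place this is a one-line telescoping identity. The step that needs the most care is the bookkeeping in the small-sample regime $|\set{S}_i| \le k$, where $\set{S}_i$ is not a uniqueness set and $\xi_1(\set{S}_i) > 0$, so I must make sure the rank formula for $\xi_1$ and the $\tau$-characterisation from Table~\ref{tbl:LS_Reconstruction_Delta} are legitimately invoked for all $i$, not only $i > k$. A minor point worth stating explicitly is that $\Delta_1(\set{S}_i,v_i) \in \{-1,0\}$ always (it is never positive), which is what collapses the three cases of Theorem~\ref{main_general} to the single inequality $\textrm{SNR} < \tau$ of Corollary~\ref{main_ls} at every step of the scheme.
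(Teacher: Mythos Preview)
Your proposal is correct and follows essentially the same argument as the paper: both reduce the count to a telescoping of $\xi_1(\set{S}_i) = k - \rank{[\matr{U}]_{\set{S}_i,\set{K}}}$ from $\xi_1(\emptyset)=k$ to $\xi_1(\set{N})=0$, use that each step $\Delta_1(\set{S}_i,v_i)\in\{0,-1\}$, and then invoke the equivalence $\Delta_1<0 \iff \Delta_2>0 \iff \tau>0$ from Lemma~\ref{lemma:LS_delta_1_improvement_means_delta_2_worse} (equivalently Table~\ref{tbl:LS_Reconstruction_Delta}) together with $\tau=\tfrac{k}{N}\Delta_2$. Your write-up is slightly more explicit about the telescoping and the small-sample bookkeeping, but there is no substantive difference.
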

\begin{proof}[Proof Sketch]
    By Table \ref{tbl:LS_Reconstruction_Delta}, $\tau \propto \Delta_{2} > 0 \iff \Delta_{1} < 0$. As $\Delta_{1}(\set{S},v) = \xi_{1}(\set{S}) - \xi_{1}(\set{S}\backslash \{v\}) \in \{0,-1\}$ under LS and $\xi_{1}(\emptyset) = k$ and $\xi_{1}(\set{N}) = 0$, we have that $\Delta_{2} > 0$ exactly $k$ times. See Appendix \ref{app:LS_satisfied} for a full proof.
\end{proof}
    Proposition \ref{propn:main_existence_LS} still leaves room for a hypothetical sequential sampling scheme where $\tau(\set{S},v) > 0$ only for the last $k$ chosen vertices, and therefore if such a scheme selects $|\set{S}| \leq N-k$ then $\set{T} \subset \set{S}$ is never better than $\set{S}$. We now show that there is a trade-off between such a property and performance in the noiseless case, namely that
    any scheme which is optimal in the noiseless case, like most deterministic schemes in the literature, could not have this property. We first define optimality in the noiseless case.  %
    \begin{defn}
        A sampling scheme is \emph{noiseless-optimal} for LS reconstruction of $k$-bandlimited signals if the first $k$ vertices it samples form a uniqueness set. That is, it finds the smallest possible uniqueness set  \cite{shomorony2014sampling}.
    \end{defn}

\begin{remark}
\label{remark:ADE_are_noiseless_optimal}
     A, D and E-optimal sampling are noiseless-optimal (see Appendix \ref{app:proof_of_remark_ADE_are_noiseless_optimal} for a proof). \end{remark} 
     
We now show such schemes find $\tau(\set{S},v) > 0$ for the first $k$ vertices they pick.
   
\begin{theorem}
\label{thm:noiseless_optimality_means_noise_sensitivity}
    Suppose we use a sequential noiseless-optimal scheme to select a vertex sample set $\set{S}_{m}$ of size $m$. For $m \leq k$:
    \begin{equation}
    \label{eq:greedy_sampling_first_k}
        \forall v \in \set{S}_{m}: \enskip \tau(\set{S}_{m},v) \geq \frac{k}{N},
    \end{equation}
    i.e., for \emph{any} vertex $v \in \set{S}_{m}$, if $\text{SNR} < \tau(\set{S}_{m},v)$ (which always holds if $\text{SNR} < \frac{k}{N}$) then $\set{S} \backslash \{v\}$ is better than $\set{S}$. For $m > k$:
    \begin{equation}
        \label{eq:greedy_sampling_over_k}
        \forall v_{+} \in \set{S}_{m} \backslash \set{S}_{k}: \enskip \tau(\set{S}_{m},v_{+}) \leq 0.
    \end{equation}
    That is, $\set{S}_{m}\backslash \{v_{+}\}$ is never better than $\set{S}_{m}$ for any of the vertices $v_{+}$ the sampling scheme adds beyond the first $k$.
\end{theorem}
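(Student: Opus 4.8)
The plan is to prove both displays by working only with the sensitivity-to-noise term: by Corollary~\ref{main_ls} we have $\tau(\set{S}_{m},v)=\tfrac{k}{N}\Delta_{2}(\set{S}_{m},v)$, so (\ref{eq:greedy_sampling_first_k}) is equivalent to $\Delta_{2}(\set{S}_{m},v)\ge 1$ for all $v\in\set{S}_{m}$, and (\ref{eq:greedy_sampling_over_k}) to $\Delta_{2}(\set{S}_{m},v_{+})\le 0$. The preliminary fact I would record first is that, because the columns of $\matrsubU{N}$ are orthonormal, $\xi_{2}(\set{S})=\sqfrob{\matr{R}_{\set{S}}}=\sqfrob{[\matr{U}]_{\set{S},\set{K}}^{\dagger}}$; when $[\matr{U}]_{\set{S},\set{K}}$ has full row rank this equals $\trace{\big([\matr{U}]_{\set{S},\set{K}}[\matr{U}]_{\set{S},\set{K}}^{T}\big)^{-1}}$, i.e.\ the trace of the inverse of the Gram matrix of the sampled rows of $\matrsubU{N}$, and when it has full column rank it equals $\trace{\big([\matr{U}]_{\set{S},\set{K}}^{T}[\matr{U}]_{\set{S},\set{K}}\big)^{-1}}$.

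\emph{Case $m\le k$.} Since the scheme is sequential and noiseless-optimal, $\set{S}_{1}\subseteq\dots\subseteq\set{S}_{k}$ and $[\matr{U}]_{\set{S}_{k},\set{K}}$ is an invertible $k\times k$ matrix; hence the rows of $\matr{U}$ indexed by any subset of $\set{S}_{k}$ are linearly independent, so both $[\matr{U}]_{\set{S}_{m},\set{K}}$ and $[\matr{U}]_{\set{S}_{m}\backslash\{v\},\set{K}}$ have full row rank. Write $\matr{G}=[\matr{U}]_{\set{S}_{m},\set{K}}[\matr{U}]_{\set{S}_{m},\set{K}}^{T}$ (positive definite), move $v$ to the last index, and block it as $\matr{G}=\bigl(\begin{smallmatrix}\matr{A}&\vect{b}\\\vect{b}^{T}&c\end{smallmatrix}\bigr)$, where $\matr{A}$ is the Gram matrix of the rows indexed by $\set{S}_{m}\backslash\{v\}$ and $c=\sqnormvec{[\matr{U}]_{v,\set{K}}}$. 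The block-inverse identity gives $\trace{\matr{G}^{-1}}=\trace{\matr{A}^{-1}}+\frac{1+\sqnormvec{\matr{A}^{-1}\vect{b}}}{s}$ with Schur complement $s=c-\vect{b}^{T}\matr{A}^{-1}\vect{b}>0$, so $\Delta_{2}(\set{S}_{m},v)=\trace{\matr{G}^{-1}}-\trace{\matr{A}^{-1}}=\frac{1+\sqnormvec{\matr{A}^{-1}\vect{b}}}{s}\ge\frac1s$. Finally $s\le c=\sqnormvec{[\matr{U}]_{v,\set{K}}}\le 1$ because the $v$-th row of the orthogonal matrix $\matr{U}$ has unit norm, so $\Delta_{2}(\set{S}_{m},v)\ge 1$, which is (\ref{eq:greedy_sampling_first_k}).

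\emph{Case $m>k$ and $v_{+}\in\set{S}_{m}\backslash\set{S}_{k}$.} Here both $\set{S}_{m}$ and $\set{S}_{m}\backslash\{v_{+}\}$ contain the uniqueness set $\set{S}_{k}$, so $\rank{[\matr{U}]_{\set{S}_{m},\set{K}}}=\rank{[\matr{U}]_{\set{S}_{m}\backslash\{v_{+}\},\set{K}}}=k$ and therefore (Lemma~\ref{lemma:LS_xi_1_is_rank}) $\xi_{1}(\set{S}_{m})=\xi_{1}(\set{S}_{m}\backslash\{v_{+}\})=0$, i.e.\ $\Delta_{1}(\set{S}_{m},v_{+})=0$. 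By Table~\ref{tbl:LS_Reconstruction_Delta} this forces $\Delta_{2}(\set{S}_{m},v_{+})\le 0$; concretely, adding the rank-one term $[\matr{U}]_{v_{+},\set{K}}^{T}[\matr{U}]_{v_{+},\set{K}}$ to the invertible $[\matr{U}]_{\set{S}_{m}\backslash\{v_{+}\},\set{K}}^{T}[\matr{U}]_{\set{S}_{m}\backslash\{v_{+}\},\set{K}}$ can only increase it in the positive-semidefinite order and hence decrease $\trace{(\cdot)^{-1}}$, so $\xi_{2}(\set{S}_{m})\le\xi_{2}(\set{S}_{m}\backslash\{v_{+}\})$ and $\tau(\set{S}_{m},v_{+})=\tfrac{k}{N}\Delta_{2}(\set{S}_{m},v_{+})\le 0$, which is (\ref{eq:greedy_sampling_over_k}).

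The main obstacle is obtaining the \emph{sharp} constant $\Delta_{2}\ge 1$ rather than merely $\Delta_{2}>0$ (the latter already follows from Table~\ref{tbl:LS_Reconstruction_Delta}): this is precisely where one needs the Schur-complement inequality $s\le c$ combined with the normalisation $\sqnormvec{[\matr{U}]_{v,\set{K}}}\le 1$ coming from orthonormality of $\matr{U}$ --- everything else is routine rank bookkeeping and positive-semidefinite monotonicity. The remaining care points are checking non-degeneracy ($\matr{G}$ positive definite, so $s>0$ and $c>0$) and the boundary case $m=1$, where $\matr{A}$ is empty and, using $\xi_{2}(\emptyset)=0$, the identity reduces to $\Delta_{2}(\set{S}_{1},v)=1/c\ge 1$.
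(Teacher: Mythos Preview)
Your proof is correct and follows essentially the same approach as the paper: for $m\le k$ both arguments compare $\trace{\matr{G}^{-1}}$ with $\trace{\matr{A}^{-1}}$ via the Schur complement of $\matr{G}=\projblgen[\set{S}_{m}]$ (you via the explicit block-inverse formula, the paper via the submatrix-of-inverse inequality \cite[Eq.~5]{zhang2000schur}) and both close with $c=\projblgen[\{v\}]\le 1$; for $m>k$ both reduce to $\Delta_{1}=0\Rightarrow\Delta_{2}\le 0$ via Lemma~\ref{lemma:LS_delta_1_improvement_means_delta_2_worse}. Your write-up is slightly more self-contained (no external citation needed) and yields the sharper intermediate bound $\Delta_{2}\ge 1/s\ge 1/c$, but the substance is the same.
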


\begin{proof}[Proof Sketch]
For (\ref{eq:greedy_sampling_first_k}), $\projblgen[\set{S}_{m}]$ and $\projblgen[\set{S}_{m-1}]$ are full-rank so $\Delta_{1} = -1$; we then show $\Delta_{2} \geq 1$. For (\ref{eq:greedy_sampling_over_k}), as $m>k$, $\xi_{1}(\set{S}_{k}) = \xi_{1}(\set{S}_{m})$. See Appendix \ref{app:optimal_noiseless_schemes_immediately_satisfy} for a full proof.
\end{proof}

 Theorem \ref{thm:noiseless_optimality_means_noise_sensitivity} explicitly demonstrates a case where reducing sample size reduces MSE: namely if $\text{SNR} \leq \frac{k}{N}$, when the sample is picked by an A-, D- or E-optimal scheme, and $|\set{S}| \leq k$, then reducing sample size will always reduce MSE.  On the other hand, for $|\set{S}| > k$ we transition into a regime where reducing sample size \emph{cannot} reduce MSE as long as the smaller set also has size $\geq k$.

Theorem \ref{thm:noiseless_optimality_means_noise_sensitivity} also gives us the shape of the MSE curve as sample size varies at high noise levels.  At high noise levels, $\text{MSE}_{\set{S}} \approx \sigma^{2} \cdot \xi_{2}(\set{S})$ so  $\text{MSE}_{\set{S}} - \text{MSE}_{\set{S}\backslash \{v\}} \approx \sigma^{2} \cdot \Delta_{2} = \sigma^{2} \cdot \frac{k}{N} \tau$ by Corollary \ref{main_ls}. Theorem \ref{thm:noiseless_optimality_means_noise_sensitivity} gives the sign of $\tau$, which therefore shows whether the MSE increases or decreases with sample size. %

\begin{remark}
\label{remark:LS_error_lambda_shaped}
Using a sequential noiseless-optimal sampling scheme, such as greedy A-, D- or E- optimal sampling, under sufficiently high noise leads to the MSE being $\Lambda$-shaped with regards to sample size with a peak at $|\set{S}| = k$.
\end{remark}

The intuition behind a $\Lambda$-shaped MSE under noiseless-optimal sampling is as follows. When the sample size is below $k$ and we increase it, we infer more about the signal---which can be seen in the noiseless case, as each of these samples improves our prediction---and inevitably incorporate some of the noise into our reconstruction. Beyond the first $k$, samples provide no new information about the signal in the noiseless case. These `additional samples' (corresponding to (\ref{eq:greedy_sampling_over_k})) are much like getting repetitive observations of already-seen vertices, which we can average to reduce the effect of noise. This is what causes the transition at $|\set{S}|=k$.
    
\begin{remark}
\label{remark:remove_multiple_nodes}
    If the MSE is $\Lambda$-shaped, then even if removing one vertex does not improve $\set{S}$, removing multiple vertices might decrease MSE. This happens if we reduce sample size enough to transition from the right side of the peak to the left side of the peak of $\Lambda$.
\end{remark}

\subsubsection{Asymptotics}
Finally, we discuss the large graphs case. 
{%
\begin{remark}
\label{rmk:LS_big_N}
If $\frac{k}{N}$ is fixed, the lower bound on $\tau$ in Theorem \ref{thm:noiseless_optimality_means_noise_sensitivity} is constant. Therefore, at a fixed SNR (e.g., $\frac{k}{2N}$), decreasing sample size decreases MSE on arbitrarily large graphs.
\end{remark}}

\subsection{LS with k-bandlimited noise}
\label{sec:LS_bandlimited}
We have observed that MSE can decrease when sample size decreases under LS and full band noise. This raises the question of whether the decrease is caused by some sort of interference effect between the high-frequency components of the noise and the bandlimited (low-frequency) signal. 
In Appendix \ref{app:LS_bandlimited} we show that MSE can decrease when sample size decreases under LS with $k$-bandlimited noise, demonstrating that this is not the case.

\subsection{GLR with full-band noise}
\label{sec:GLR_full_band}
In this subsection, we show how decreasing sample size can decrease MSE under GLR reconstruction and full-band noise.

\subsubsection{Overview and simplification}
We start by trying to simplify Theorem \ref{main_general}. Table \ref{tbl:GLR_Reconstruction_Delta} contains no $\times$ scenarios and so the `single vertex' simplification cannot eliminate any conditions in Theorem \ref{main_general}: surprisingly, $\matr{R}_{\set{S} \backslash \{ v \}}$ can be \emph{less} biased than $\matr{R}_{\set{S}}$ for GLR, which can be observed experimentally.
Instead, as we focus on tractability and showing that it is possible to reduce sample size to reduce MSE, rather than fully characterizing all such cases, we pick a situation where $\Delta_{1} \geq 0$ so we can simplify Theorem \ref{main_general}. Specifically, we compare the full observation set $\set{S}=\set{N}$ to a subset of it, which we call the `full observation' simplification. As it is hard to interpret what reconstruction means under full observation \cite{chen2017GLRbias}, our results should be understood as approximately showing that reducing the sample size from nearly full observation to some smaller size may reduce MSE.

Our approach is then as follows. We first characterise under exactly which conditions a sample set $\set{S} \subset \set{N}$ is better than $\set{N}$ (Corollary \ref{corr:main_GLR_iff}). We then show that these conditions must occur if certain graph invariants hold (Theorem \ref{thm:main_GLR_exist}). Finally, we analyse the parameters in Theorem \ref{thm:main_GLR_exist} to obtain a `suggested sample size' (Remark \ref{remark:mopt}) and show the conditions still occur as $N \to \infty$ (Proposition \ref{propn:GLR_big_N}). 

\subsubsection{Characterisation}
We first present the following Corollary of Theorem \ref{main_general}.

\begin{corollary}
    \label{corr:main_GLR_iff}
    Assume GLR and that $k > 1$. Consider a non-empty sample set $\set{S} \subset \set{N}$.  Then 
    \begin{equation}
        \tau(\set{N},\set{S}^{c}) = \frac{k}{N} \cdot \frac{\Delta_{2}(\set{N},\set{S}^{c})}{-\Delta_{1}(\set{N},\set{S}^{c})} \label{eq:main_GLR_simplified_tau}
    \end{equation}
    and $\set{S}$ is better than $\set{N}$ if and only if one of the following conditions is met:
        \begin{subnumcases}{ \label{eq:GLR_corol_iff_overall} }
        \text{SNR} < \tau(\set{N}, \set{S}^{c}) &and $ \msubgen[\set{S}^{c},\{2,\ldots,k\}]{\matr{U}} \neq \matr{0}$ \label{eq:GLR_corol_iff}\\
        0 < \Delta_2(\set{N},\set{S}^{c}) &and $ \msubgen[\set{S}^{c},\{2,\ldots,k\}]{\matr{U}} = \matr{0}$. \label{eq:GLR_corol_iff_weird}
\end{subnumcases}
where $\msubgen[\set{S}^{c}, \{2,\ldots,k\}]{\matr{U}} = \vect{0}$ corresponds to any $k$-bandlimited signal always being constant on all of $\set{S}^{c}$.
\end{corollary}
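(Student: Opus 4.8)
The plan is to obtain the Corollary as a specialisation of Theorem~\ref{main_general} to the ``full observation'' comparison, taking $\set{S}\mapsto\set{N}$ and $\set{T}\mapsto\set{S}^{c}$ (so $\set{N}\backslash\set{S}^{c}=\set{S}$, and ``$\set{S}$ is better than $\set{N}$'' is literally ``$\set{N}\backslash\set{S}^{c}$ is better than $\set{N}$''). Under this substitution the claimed formula for $\tau(\set{N},\set{S}^{c})$ is simply the definition of $\tau$ from Theorem~\ref{main_general}, so there is nothing to prove there, and $\xi_{2}$ never needs to be analysed directly: it enters only through $\tau$ and the sign test $\Delta_{2}>0$, both inherited verbatim. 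Everything therefore reduces to pinning down $\Delta_{1}(\set{N},\set{S}^{c})=\xi_{1}(\set{N})-\xi_{1}(\set{S})$. I claim it suffices to show (i) $\Delta_{1}(\set{N},\set{S}^{c})\le 0$ always, and (ii) $\Delta_{1}(\set{N},\set{S}^{c})=0$ if and only if $\msubgen[\set{S}^{c},\{2,\ldots,k\}]{\matr{U}}=\matr{0}$: then case~(\ref{eq:main_thm_cond:d1pos}) of Theorem~\ref{main_general} is vacuous, case~(\ref{eq:main_thm_cond:d1neg}) collapses to~(\ref{eq:GLR_corol_iff}), and case~(\ref{eq:main_thm_cond:d1zero}) collapses to~(\ref{eq:GLR_corol_iff_weird}).

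For (i) and (ii) I would first get a usable closed form for $\xi_{1}$ under GLR. From $\matr{R}_{\set{S}}=\msub[N,S]{(\proj{S}+\mu\matr{L})^{-1}}$ and $\proj{S}=\matrsub[N,S]{I}\matrsub[S,N]{I}$ one gets $\matr{R}_{\set{S}}\matrsub[S,N]{I}=(\proj{S}+\mu\matr{L})^{-1}\proj{S}$, whence the identity $\matr{I}-(\proj{S}+\mu\matr{L})^{-1}\proj{S}=\mu(\proj{S}+\mu\matr{L})^{-1}\matr{L}$ (multiply on the right by $\proj{S}+\mu\matr{L}$). Since $\matr{L}\vect{u}_{j}=\lambda_{j}\vect{u}_{j}$ and $\lambda_{1}=0$, definition~(\ref{eq:xi_1_def}) collapses to
\begin{equation*}
    \xi_{1}(\set{S})=\mu^{2}\sum_{j=2}^{k}\lambda_{j}^{2}\,\sqnormvec{(\proj{S}+\mu\matr{L})^{-1}\vect{u}_{j}},
\end{equation*}
and in particular $\xi_{1}(\set{N})=\mu^{2}\sum_{j=2}^{k}\lambda_{j}^{2}(1+\mu\lambda_{j})^{-2}$ (here $\proj{N}=\matr{I}$ and $(\matr{I}+\mu\matr{L})^{-1}\vect{u}_{j}=\vect{u}_{j}/(1+\mu\lambda_{j})$). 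Thus $-\Delta_{1}(\set{N},\set{S}^{c})=\mu^{2}\sum_{j=2}^{k}\lambda_{j}^{2}\big(\sqnormvec{(\proj{S}+\mu\matr{L})^{-1}\vect{u}_{j}}-(1+\mu\lambda_{j})^{-2}\big)$, and the whole statement reduces to the per-eigenvector inequality $\normvec{(\proj{S}+\mu\matr{L})^{-1}\vect{u}_{j}}\ge(1+\mu\lambda_{j})^{-1}$ for $j\in\{2,\ldots,k\}$.

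This per-eigenvector bound is the step I expect to be the main obstacle, because the obvious approach via operator monotonicity does not work: $\proj{S}+\mu\matr{L}\preceq\matr{I}+\mu\matr{L}$ controls the inverses but not their \emph{squares}, which is what the Euclidean norm sees. Instead I would put $\vect{w}=(\proj{S}+\mu\matr{L})^{-1}\vect{u}_{j}$, so $\proj{S}\vect{w}+\mu\matr{L}\vect{w}=\vect{u}_{j}$; taking the inner product with $\vect{u}_{j}$ gives $1=\vect{u}_{j}^{T}\proj{S}\vect{w}+\mu\lambda_{j}\vect{u}_{j}^{T}\vect{w}$, and bounding each term by $\normvec{\vect{w}}$ via Cauchy--Schwarz (with $\normvec{\proj{S}\vect{u}_{j}}\le\normvec{\vect{u}_{j}}=1$ and $\mu\lambda_{j}\ge0$) yields $1\le(1+\mu\lambda_{j})\normvec{\vect{w}}$. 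This proves (i); for (ii), since every summand of $-\Delta_{1}$ is now nonnegative and $\lambda_{j}>0$ for $j\ge 2$, equality holds exactly when the Cauchy--Schwarz step is tight for every $j\in\{2,\ldots,k\}$, which forces $\normvec{\proj{S}\vect{u}_{j}}=\normvec{\vect{u}_{j}}$, i.e. $\vect{u}_{j}$ vanishing on $\set{S}^{c}$ --- precisely $\msubgen[\set{S}^{c},\{2,\ldots,k\}]{\matr{U}}=\matr{0}$; the converse is immediate since then $\proj{S}\vect{u}_{j}=\vect{u}_{j}$ and $(\proj{S}+\mu\matr{L})^{-1}\vect{u}_{j}=\vect{u}_{j}/(1+\mu\lambda_{j})$. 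Finally, as $\mathcal{G}$ is connected we have $\vect{u}_{1}\propto\vect{1}$, so this condition says exactly that every $k$-bandlimited signal restricts to a constant on $\set{S}^{c}$, which is the parenthetical remark; the hypothesis $k>1$ ensures $\{2,\ldots,k\}$ is non-empty so that case~(\ref{eq:GLR_corol_iff}) is not degenerate. Substituting (i) and (ii) back into Theorem~\ref{main_general} then gives~(\ref{eq:GLR_corol_iff})--(\ref{eq:GLR_corol_iff_weird}).
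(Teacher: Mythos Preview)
Your proposal is correct and follows essentially the same route as the paper: both derive the same closed form for $\xi_1$ via the identity $\matr{I}-(\proj{S}+\mu\matr{L})^{-1}\proj{S}=\mu(\proj{S}+\mu\matr{L})^{-1}\matr{L}$, reduce to the per-eigenvector bound $\normvec{(\proj{S}+\mu\matr{L})^{-1}\vect{u}_j}\ge(1+\mu\lambda_j)^{-1}$, and establish it by Cauchy--Schwarz with equality exactly when $\normvec{\proj{S}\vect{u}_j}=1$. The only difference is cosmetic --- the paper applies Cauchy--Schwarz to the pair $\big(A^{-1}\vect{u}_j,\,A\vect{u}_j\big)$ with $A=\proj{S}+\mu\matr{L}$ and then expands $\vect{u}_j^{T}A^{2}\vect{u}_j=(\mu\lambda_j)^2+(1+2\mu\lambda_j)\sqnormvec{\proj{S}\vect{u}_j}$, whereas you take the inner product of $A\vect{w}=\vect{u}_j$ with $\vect{u}_j$ and bound term by term; both arguments yield the same inequality with the same equality case, after which Theorem~\ref{main_general} is invoked identically.
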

\begin{proof}[Proof Sketch]
    We use Cauchy-Schwartz and Lemma \ref{lemma:GLR_full_observation_MSE} to lower bound $\xi_{1}(\set{S})$ and show that either all $k$ columns of $\matrsubU{N}$ are eigenvectors of $\proj{S} + \mu\matr{L}$ (which is exactly when $\msubgen[\set{S}^{c},\{2,\ldots,k\}]{\matr{U}} = \matr{0}$) and $\Delta_{1}(\set{N},\set{S}^{c}) = 0$, corresponding to (\ref{eq:GLR_corol_iff_weird}), or $\Delta_{1}(\set{N},\set{S}^{c}) < 0$, corresponding to (\ref{eq:GLR_corol_iff}). We then apply Theorem \ref{main_general}. See Appendix \ref{app:proof_main_GLR_iff} for a full proof.
\end{proof}

We now explain the conditions in Corollary \ref{corr:main_GLR_iff}. Condition (\ref{eq:GLR_corol_iff}) corresponds to the single case we see in Corollary \ref{main_ls}. Condition (\ref{eq:GLR_corol_iff_weird}) is more of an edge case,  e.g., if $\lambda_{k} < N$ and every vertex in $\set{S}^{c}$ has degree $N-1$ \cite[Corollary 2.3]{merris1998laplacian}. In (\ref{eq:GLR_corol_iff_weird}), $\msubgen[\set{S}^{c}, \{2,\ldots,k\}]{\matr{U}} = \vect{0}$ means any $k$-bandlimited signal will be constant on all of $\set{S}^{c}$. Our proof shows that in the noiseless case this implies that observing $\set{S}^{c}$ will not improve the MSE, i.e., $\text{MSE}_{\set{S} \cup \set{T}_{c}} = \text{MSE}_{\set{S}}$ for any $\set{T}_{c} \subseteq \set{S}^{c}$. %
The other condition in (\ref{eq:GLR_corol_iff_weird}), i.e., $0 < \Delta_2(\set{N},\set{S}^{c})$, corresponds to an increase in noise-sensitivity from reconstructing from those additional vertices. Therefore, condition (\ref{eq:GLR_corol_iff_weird}) says that if $\set{S}^{c}$ reveals nothing new about the underlying signal and also makes the reconstruction more sensitive to noise, one should not observe $\set{S}^{c}$ and only observe $\set{S}$.

\subsubsection{Existence}
Like Corollary \ref{main_ls}, Corollary $\ref{corr:main_GLR_iff}$ does not show that any set $\set{S}$ is ever better than $\set{N}$, i.e., that $\tau(\set{N},\set{S}^{c}) > 0$ ever happens. { As described in Section \ref{sec:every_x}, we will find $\set{S}$ s.t. $\Delta_{2}(\set{N},\set{S}^{c}) > 0$. We illustrate our proof method in Fig.~\ref{fig:GLR_diagram}.
\begin{figure}[t]
\setlength{\abovecaptionskip}{-2pt}
    \centering
    \resizebox{0.75\width}{0.75\height}
    {\includegraphics[width=\columnwidth]{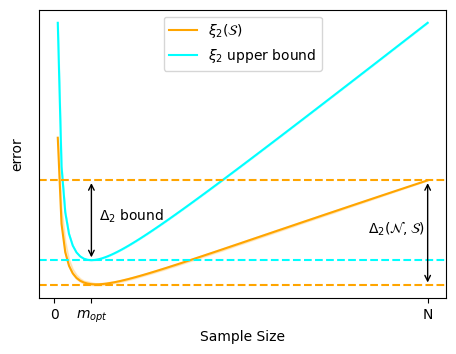}}
    \caption{Proof direction to show $\Delta_{2}(\set{N},\set{S}^{c}) > 0$.}
    \label{fig:GLR_diagram}
\end{figure}
\noindent Fundamentally, our approach rests on the fact that $\xi_{2}$, which is approximately proportional to $\text{MSE}_{\set{S}}$ at high noise levels, is  approximately U-shaped, and we can use an upper bound to approximate the minimum and thus bound $\Delta_{2}(\set{N},\set{S}^{c})$ below. Explicitly, our method to show $\Delta_{2}(\set{N},\set{S}^{c}) > 0$ is:
\begin{enumerate}
    \item Find an upper bound $B(\cdot)$ for $\xi_{2}(\set{S})$ that only depends on $\matr{L}$ and $|\set{S}|$, not on $\mu$ or the choice of $\set{S}$.
    \item Find the sample size $m_{opt}$ that minimises the bound and show $m_{opt} < N$.
    \item Identify when $\xi_{2}(\set{N}) - B(m_{opt}) > 0$ (this is the `$\Delta_{2}$ bound' in Fig. \ref{fig:GLR_diagram}). As this quantity is a lower bound for $\Delta_{2}(\set{N},\set{S}^{c})$ this shows that $\Delta_{2}(\set{N},\set{S}^{c})>0$ for any $\set{S}$ of size $m_{opt}$.
    \item Bound $\Delta_{1}$ using the same tools and use these to lower bound $\tau(\set{N}, \set{S}^{c})$
\end{enumerate}
We see that $m_{opt}$ suggests an `optimal sample size', in the sense that the $\xi_{2}$ upper bound approximates $\xi_{2}$, and we can optimise it as a surrogate. Following step 1, we now bound $\xi_{2}(\set{S})$.
\begin{lemma}
\label{lemma:GLR_xi_2_bound_main}
    Let $\lambda_{i}$ be the eigenvalues of $\matr{L}$, the combinatorial Laplacian. Let $|\set{S}| = m$. Then
    \begin{align}
        \xi_{2}(\set{S}) &\leq r\frac{N}{m} + \sum_{i=2}^{m} \omega\left(\max\left[1,\frac{\lambda_{N+2-i}}{\lambda_{i}}\right]\right)  \label{eq:xi_2_GLR_bound_strong}
        \\ 
        &\leq r\frac{N}{m} + r(m -1) . \label{eq:xi_2_GLR_bound_weak}
    \end{align}
    where we define $B(m)$ to be the RHS of (\ref{eq:xi_2_GLR_bound_strong}) and where
    \begin{align}
        \omega(x) &= \frac{1}{4}\left(\sqrt{x}+ \frac{1}{\sqrt{x}} \right)^{2}, \quad
        r = \omega\left(\frac{\lambda_{N}}{\lambda_{2}}\right)
    \end{align}
   and the summation in (\ref{eq:xi_2_GLR_bound_strong}) is 0 if $m = 0$ or $1$. 
    
\end{lemma}
\begin{proof}
    Decompose $\matr{R}_{\set{S}}$ and apply Kantorovich's inequality. See Appendix \ref{app:proof_unif_ub_xi_2}.
\end{proof}

The function $\omega$ arises from the Kantorovich inequality, commonly used in variance bounds \cite{khatri1982some, householder1965kantorovich}. $\omega$ is an increasing function and for $x>1, \omega(x) \leq x$; this means $r$ increases with $\frac{\lambda_{N}}{\lambda_{2}}$. We discuss $r$ in more detail in Appendix \ref{app:GLR_sensitivity}.

As mentioned, in step 4 we need a bound on $\xi_{1}(\set{S})$ to lower bound $\tau(\set{N},\set{S}^{c})$. The shape and accuracy of this bound do not affect the core of our argument, only how tight our approximation of $\tau$ is. Therefore, we leave it in Appendix \ref{app:proof_unif_ub_xi_1_MSE}; the term $B_{k}(m)$ it introduces in the proof is a tightening of $B(m)$ ($B_{k}(m) < B(m)$) and is explained in Appendix  \ref{app:GLR_bandlimited}.

By steps 2--4, using the bound $B(m)$ from Lemma \ref{lemma:GLR_xi_2_bound_main} yields:

\begin{theorem}
\label{thm:main_GLR_exist}
{ Let $B(m)$ be defined as in Lemma \ref{lemma:GLR_xi_2_bound_main} and let $m_{opt}$ be the sample size minimising $B$.
    \begin{flalign}
        &\text{If} \hspace{0.35\columnwidth} B(m_{opt}) < N& \label{eq:GLR_exist_thm_B_constraint}
    \end{flalign}
then $m_{opt} < \lceil\frac{N+1}{2}\rceil$. Furthermore, $\exists \mu_{ub} > 0$ and $\tau_{GLR}(\mu) > 0$ s.t. under GLR with parameter $\mu \in (0, \mu_{ub})$,
\begin{flalign}
        &\text{if} \hspace{0.3\columnwidth} \textrm{SNR} < \tau_{GLR}(\mu)&
    \end{flalign}

    \noindent then,  \emph{any} sample set $\set{S}$ of size $m_{opt}$ is better than $\set{N}$.}
\end{theorem}
\begin{proof}[Proof Sketch] 
We prove that for any $\mu$, $\xi_{1}(\set{S}) \leq k + B_{k}(m)$ and $\xi_{2}(\set{S}) \leq B(m)$ via Lemmas \ref{lemma:GLR_xi_2_bound_main} and \ref{lemma:GLR_xi_2_bound_main_bl}. We also show $\xi_{1}(\set{N}) = \sum_{i=1}^{k} \left(1 - \frac{1}{1+\mu\lambda_{i}}\right)^{2}$ and $\xi_{2}(\set{N}) = \sum_{i=1}^{N} \left({1+\mu\lambda_{i}}\right)^{-2}$.
We use these to bound $\Delta_{i}$, and finally apply Corollary \ref{corr:main_GLR_iff} as our conditions mean $\Delta_{2}(\set{N},\set{S}^{c}) > 0$.  See Appendix \ref{app:Proof_thm_main_GLR_exist} for a full proof.
\end{proof}
}
 We contrast Theorems \ref{thm:noiseless_optimality_means_noise_sensitivity} and \ref{thm:main_GLR_exist}. Theorem \ref{thm:noiseless_optimality_means_noise_sensitivity} provides necessary and sufficient conditions. Theorem  \ref{thm:main_GLR_exist}, while still useful, only provides sufficient conditions. 
 Theorem \ref{thm:noiseless_optimality_means_noise_sensitivity} applies to any graph, but only to sample sets chosen under noiseless-optimal sampling schemes, while Theorem \ref{thm:main_GLR_exist} has no requirement on sampling schemes, but only applies to graphs which fulfill certain graph invariants. %

{%
Because of the difficulty in writing down an analytic form for $m_{opt}$, the parameters in Theorem \ref{thm:main_GLR_exist} -- $m_{opt}$, $\mu_{ub}$ and $\tau_{GLR}$ -- are difficult to analyse. To better understand  we present a version of Theorem \ref{thm:main_GLR_exist} instead based on  
the weaker bound (\ref{eq:xi_2_GLR_bound_weak}). This bound is a linear transformation of $\frac{N}{m} + m$ and so is clearly U-shaped and has an minimum at $m=\sqrt{N}$, and thus following steps 2, 3 and 4 yields the following proposition:

\begin{propn}
\label{propn:GLR_simple}
    Let $\bar{\lambda} = \frac{1}{N}\sum_{i=1}^{N}\lambda_{i}$ be the mean eigenvalue of $\matr{L}$ and let $r$ be defined as in Lemma \ref{lemma:GLR_xi_2_bound_main}.
    \begin{flalign}
    &\text{If} \hspace{0.35\columnwidth} 2r\sqrt{N} < N,& \label{eq:weak_GLR_constraint}\\[0.25em]
    &\text{then let} \hspace{0.12\columnwidth} \mu_{ub\_weak} = \bar{\lambda}^{-1}\left(\sqrt[4]{N}\cdot ({2r})^{-\frac{1}{2}} - 1 \right) \\
    &\text{and} \hspace{0.15\columnwidth} \tau_{GLR\_weak} =  \frac{\sqrt{N}(1 + \mu\bar{\lambda})^{-2} - 2r}{\sqrt{N} + 2r \cdot \frac{N}{k} }
\end{flalign}
    then under GLR with parameter $\mu \in (0,\mu_{ub\_weak})$, if
    \begin{equation}
        \text{SNR} < \tau_{GLR\_weak}(\mu)
    \end{equation}
    then \emph{any} sample set $\set{S}$ of size $\lceil\sqrt{N}\rceil$ is better than $\set{N}$.
\end{propn}
\begin{proof}[Proof Sketch]
    We bound $\xi_{1}(\set{S})$ and $\xi_{2}(\set{S})$ in terms of $r\left(\frac{N}{m} + m - 1\right)$, which is minimised at $m = \sqrt{N}$. We explicitly compute $\xi_{i}(\set{N})$ and apply Corollary \ref{corr:main_GLR_iff}. See Appendix \ref{app:proof_propn_GLR_simple} for a full proof.
\end{proof}

Proposition \ref{propn:GLR_simple} is sufficient to show that decreasing sample size can decrease MSE on a restricted class of graph models (e.g. Erdős–Rényi graphs). 
Proposition \ref{propn:GLR_simple} involves several parameters: $r$, $2r\sqrt{N}$, $\mu_{ub\_weak}$ and $\tau_{GLR\_weak}$ which can be interpreted as graph properties (see Appendix \ref{app:GLR_sensitivity} for a detailed explanation and sensitivity analysis).

Intuitively, we expect the equivalent parameters in Proposition \ref{propn:GLR_simple} and Theorem \ref{thm:main_GLR_exist} to behave similarly; for $m_{opt}$ to behave like $\left\lceil \sqrt{N} \right\rceil$, for $B(m_{opt})$ to behave like $2r\sqrt{N}$ and for $\mu_{ub}$ and $\tau_{GLR}$ to behave like their weak counterparts. We can explicitly show that $m_{opt}$ is $\mathcal{O}(\sqrt{N})$:

\begin{remark}
\label{remark:mopt}
If condition (\ref{eq:GLR_exist_thm_B_constraint}) in Theorem \ref{thm:main_GLR_exist} holds
    then $m_{opt} \in \left[\left\lfloor\sqrt{N}\right\rfloor, \left\lceil\sqrt{rN}\right\rceil \right]$ and $m_{opt} \leq \left\lceil\frac{N+1}{2}\right\rceil$.
\end{remark}
\begin{proof}
    See Appendix \ref{app:proof_of_remark_GLR_mopt}.
\end{proof}

}

The proof of Theorem \ref{thm:main_GLR_exist} leads to an upper bound of $\text{MSE}_{\set{S}}$. { We present this upper bound to link the experiments in Section \ref{sec:experiments} with Fig. \ref{fig:GLR_diagram}, which can be considered to be the case where $\sigma^{2} \to \infty$.
\begin{corollary}
\label{corr:unif_ub_xi_1_MSE}
    Let $B(m)$ be defined as in Lemma \ref{lemma:GLR_xi_2_bound_main} and $B_{k}(m)$ be defined as it will be in Lemma \ref{lemma:GLR_xi_2_bound_main_bl}. 
    
    For a sample set $\set{S}$ of size $m$,
    \begin{align}
    \textrm{MSE}_{\set{S}} &\leq (k + B_{k}(m)) + \sigma^{2} \cdot B(m). \label{eq:unif_ub_MSE}
    \end{align}
\end{corollary}
\begin{proof}
By Lemma \ref{lemma:GLR_xi_2_bound_main} $\xi_{2}(\set{S}) \leq B(m)$. By Lemma \ref{lemma:unif_ub_xi_1_GLR} in Appendix \ref{app:proof_unif_ub_xi_1_MSE}, $\xi_{1}(\set{S}) \leq k + B_{k}(m)$. Combining these using (\ref{eq:xi_decomp}) gives the bound.
\end{proof}
}

\subsubsection{Asymptotics}
Finally, we consider the large graph case.

{ While our sensitivity analysis suggests that $\tau_{GLR} \to 0$ as $N$ gets large, this is an oversimplification from assuming indepence of parameters in our analysis. We present a Proposition to show that that $\tau_{GLR} \not\to 0$ as $N \to \infty$ for ER graphs.
\begin{propn}
\label{propn:GLR_big_N}
For Erdős–Rényi graphs, as $N \to \infty$, condition (\ref{eq:GLR_exist_thm_B_constraint}) in Theorem \ref{thm:main_GLR_exist} holds w.h.p. and $\tau_{GLR}$ tends to a positive constant even under optimal choice of $\mu$ \cite{chen2017GLRbias}.
\end{propn}
\begin{proof}
    Full statement and proof in Appendix 
 \ref{app:Proof_GLR_big_N}.
\end{proof}
}
{ This shows that reducing sample size can reduce MSE on arbitrarily large graphs under GLR even with optimal $\mu$. 
}

\subsection{GLR with k-bandlimited noise}
\label{sec:GLR_bandlimited}
Once more, one might ask whether the MSE increasing with sample size under GLR is caused by some sort of interference effect between the high-frequency components of the noise and the bandlimited (low-frequency) signal. We present a variant of Theorem \ref{thm:main_GLR_exist} in Appendix \ref{app:GLR_bandlimited} to disprove this.
\vspace{-0.1cm}

\section{Experiments}
\label{sec:experiments}
\label{experiments_sec}

Our theoretical results show how the relationship between sample size and MSE changes with the level of noise, focusing on how reducing sample size will reduce MSE if the SNR is below a threshold. In this section, we demonstrate  the applicability and validity of these results via empirical experiments. 

We first demonstrate the applicability of our results with plots of the thresholds $\tau(\set{S},v)$, $\tau_{GLR}$ and $\tau_{GLR\_bl}$ against sample size (Figs. \ref{LS_SNR_Threshold_plots_all} and \ref{GLR_Threshold_plots}). These plots show concrete SNR values for the thresholds, giving a practical understanding of how high noise levels need to be for reducing sample size to reduce MSE for different random graph models and parameters. Additionally, we empirically tabulate the probabilities that graphs from each model  satisfy the conditions of our theorems (Table \ref{tbl:empirical_probabilities_conditions}), helping the reader evaluate the impact of our theorems across different applications. We then demonstrate the validity of our results by plotting $\textrm{MSE}_{\set{S}}$ against sample size (Figs. \ref{LS_ER_MSE_fig} and  \ref{GLR_ER_MSE_fig}) at SNRs below, near and above the derived thresholds, showing that the behaviour of $\textrm{MSE}_{\set{S}}$ follows our theoretical results. We finally present similar results on real-world datasets, validating the applicability of our results to real-world datasets.

\vspace{-0.2cm}
\subsection{Experimental setup}
We now present the setup of the experiments. All results are presented with 90\% confidence intervals and all experiments use the combinatorial Laplacian $\matr{L}$ and its eigenbasis.
\subsubsection{Synthetic graph generation}
We consider each of the following unweighted random graph models:
\begin{itemize}
    \item Erdős–Rényi (ER) with edge probability $p=0.8$ (experiments with other values of $p$ show similar results)
    \item Barabási-Albert (BA) with a preferential attachment to 3 vertices at each step of its construction
    \item Stochastic Blockmodel (SBM) with intra- and inter-cluster edge probabilities of $0.7 \text{ and }0.1$ respectively
\end{itemize}
We consider 10 instantiations of each model for plots, and 1000 instantiations of each model to assess the probability the graph invariant conditions in our Theorems are met.

We present threshold plots for graphs with 500, 1000, 2000 and 3000 vertices. We only present MSE plots for graphs with 500 vertices  {  (like \cite[Fig 8]{bai2020fast})} as they are intended as an accompaniment to our threshold plots and theorems to demonstrate their validity, and a single graph size suffices. 

\subsubsection{\xd{Synthetic} signal generation}
We set the bandwidth  $k = \lfloor \frac{N}{10} \rfloor$, per \cite{bai2020fast}.  We consider these SNRs for full-band noise:
\begin{itemize}
    \item{\makebox[1cm][l]{LS:}  $10^{-1}, 10^{2}, 10^{10}$ (i.e. $-10dB, 20dB, 100dB$)}
    \item{\makebox[1cm][l]{GLR:} $10^{-1}, 0.5, 10^{10}$ (i.e. $-10dB, -3dB, 100dB$)}
\end{itemize}
and the following SNRs for bandlimited noise:
\begin{itemize}
    \item{\makebox[1cm][l]{LS:}  $10^{-1}, 1, 10^{10}$ (i.e. $-10dB, 0dB, 100dB$)}
    \item{\makebox[1cm][l]{GLR:} $10^{-2}, 0.5, 10^{10}$ (i.e. $-20dB, -3dB, 100dB$)}
\end{itemize}

These SNRs are chosen to demonstrate that there are three regimes for MSE with distinctive properties---the high noise regime, the transitionary regime and the approximately noiseless regime---and that $\tau$ captures when the regimes change. Suitable values of SNRs to demonstrate this vary between reconstruction methods and noise types, hence our choices.

To test the MSE in reconstructing signals from samples on random graph models, we generate 200 signals by sampling $\vect{y} = \vect{x}_{raw} + \sigma \cdot \epsilon_{raw}$ where:
\begin{enumerate}
    \item $\vect{x}_{raw} \sim \mathcal{N}(\vect{0}, \matr{\Pi}_{bl(\set{K})})$ 
    \item[2a)] If full-band noise, $\vect{\epsilon}_{raw} \sim \mathcal{N}(\vect{0}, \matr{I}_{N})$, $\sigma = \sqrt{\frac{k}{{N \cdot\text{SNR}}}}$
    \item[2b)]  If bandlimited noise, $\vect{\epsilon}_{raw} \sim \mathcal{N}(\vect{0}, \projbl )$, $\sigma = \frac{1}{\sqrt{\text{SNR}}}$ 
\end{enumerate}

\subsubsection{Real-world datasets}

We also consider two real-world datasets as in \cite{zhi2023gaussian}. The first is an FMRI dataset, where the original graph consists of 4465 nodes corresponding to voxels of the cerebellum, with 292 Blood-Oxygen-Level-Dependent (BOLD) signals derived from FMRI. We sample a connected subgraph of 367 nodes via Neighbourhood Sampling \cite{hamilton2017inductive}. The second is a Weather dataset, where a $10$-nearest neighbours graph of 45 cities in Sweden is constructed, with 95 signals derived from the temperature. See \cite{venkitaraman2020gaussian, behjat2016signal, zhi2023gaussian} for details on graph construction and signal generation in both cases. 
We generate $\vect{x}$ by $k$-bandlimiting the original signals, where For FMRI we set $k=36$ and for Weather dataset we set $k=8$. We otherwise follow the above methodology in Synthetic Signal Generation for generating signals with full-band noise.

\subsubsection{Sample-set selection}

We generate sample sets greedily using exact analytic forms and by exactly computing $\matr{\Pi}_{bl(\set{K})}$.
\begin{itemize}
    \item[LS:] We use (\ref{eq:A-optimality})-(\ref{eq:E-optimality}) to exactly compute the MMSE {\cite{wang2018optimal,wang2019low, mfn}}, Confidence Ellipsoid {\cite{jayawant2021doptimal, tremblay2017determinantal,mfn}} and WMSE criteria { \cite{bai2020fast}}, which are deterministic and guaranteed to be noiseless-optimal.  We also look at Weighted Random sampling \cite{puy2018random}, which is neither deterministic nor guaranteed to be noiseless-optimal.
    \item[GLR:] We exactly compute the MMSE criterion, which is a function of SNR and noise type, and the WMSE criterion {\cite{EOptimalChen}}, which is not. We also consider uniform random sampling.
\end{itemize}
Note that sampling schemes in the literature tend to differ from ours mainly in that they approximate our setup for computational efficiency reasons; e.g. approximating the projection matrix $\projbl$ with a polynomial in $\matr{L}$ \cite{wang2018optimal}, and approximating optimality criteria \cite{bai2020fast}. As these differences are for efficiency reasons, we do not expect them to matter in our experiments.

\subsubsection{Parameters of reconstruction methods} We consider LS with $k = \lfloor \frac{N}{10} \rfloor$ (as above) and GLR with $\mu \in \{10^{-4}, 10^{-2}, 1 \}$.

\subsection{Experimental results on synthetic graphs}

\begin{figure*}
    \centering
    \begin{subfigure}{0.6\columnwidth}
    \resizebox{\width}{0.62\columnwidth}
    {\includegraphics[width=\columnwidth]{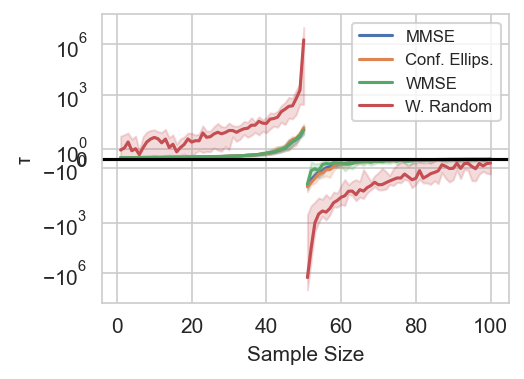}}
    \caption{Erdős–Rényi, 500 vertices} 
    \label{snr_ER}
    \end{subfigure}
    \hfill
    \begin{subfigure}{0.6\columnwidth}
    \resizebox{\width}{0.62\columnwidth}{
    \includegraphics[width=\columnwidth]{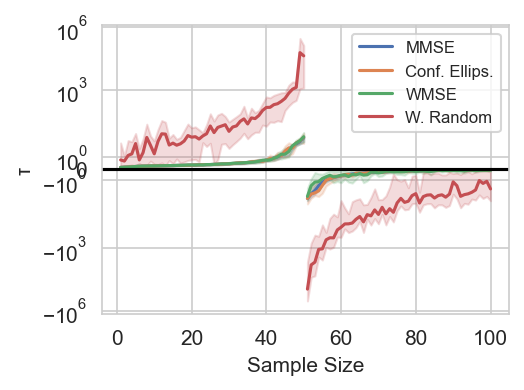}}
    \caption{Barabási-Albert, 500 vertices}%
    \label{snr_BA}%
    \end{subfigure}
    \hfill%
    \begin{subfigure}{0.6\columnwidth}
    \resizebox{\width}{0.62\columnwidth}{
    \includegraphics[width=\columnwidth]{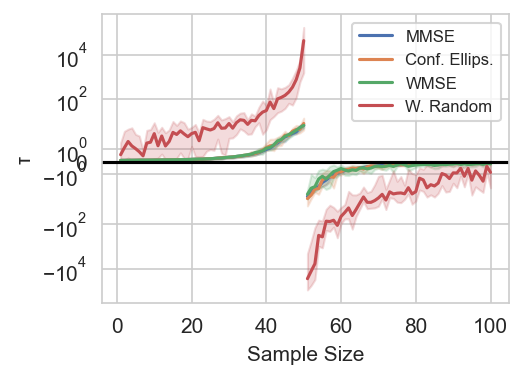}}
    \caption{SBM, 500 vertices}%
    \label{snr_SBM}%
    \end{subfigure}%
    \hfill
    \begin{subfigure}{0.6\columnwidth}
    \resizebox{\width}{0.62\columnwidth}{
    \includegraphics[width=\columnwidth]{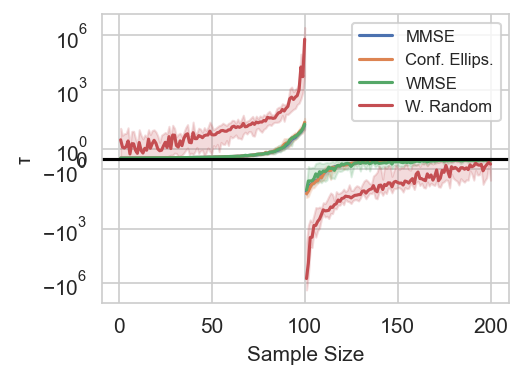}}
    \caption{Erdős–Rényi, 1000 vertices} 
    \label{snr_ER_1000}
    \end{subfigure}
    \hfill
    \begin{subfigure}{0.6\columnwidth}
    \resizebox{\width}{0.62\columnwidth}{
    \includegraphics[width=\columnwidth]{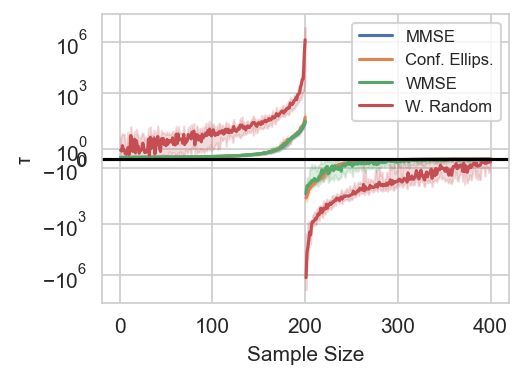}}
    \caption{Erdős–Rényi, 2000 vertices}%
    \label{snr_ER_2000}%
    \end{subfigure}
    \hfill%
    \begin{subfigure}{0.6\columnwidth}
    \resizebox{\width}{0.62\columnwidth}{
    \includegraphics[width=\columnwidth]{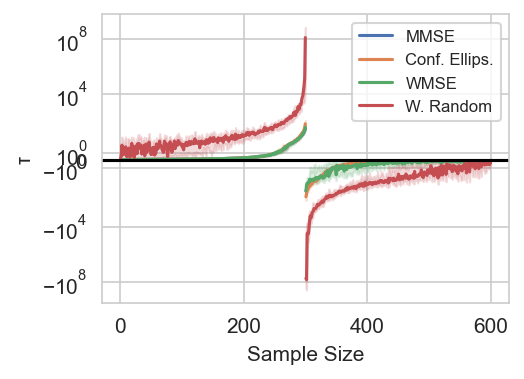}}
    \caption{Erdős–Rényi, 3000 vertices}%
    \label{snr_ER_3000}%
    \end{subfigure}%
    \caption{$\tau(\set{S},v)$ for different random graph models and different $N$ under LS and full-band noise (bandwidth = $\frac{\# \text{ vertices}}{10}$).}
\label{LS_SNR_Threshold_plots_all}
\end{figure*}

\begin{figure*}%
    \centering
    \begin{subfigure}{0.6\columnwidth}
    \resizebox{\width}{0.62\columnwidth}{
    \includegraphics[width=\columnwidth]{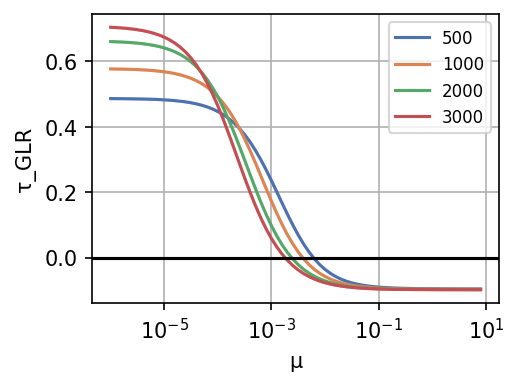}}
    \caption{Erdős–Rényi ($\tau_{GLR}$)}
    \label{tau_GLR_er}
    \end{subfigure}
    \hfill
    \begin{subfigure}{0.6\columnwidth}
    \resizebox{\width}{0.62\columnwidth}{
    \includegraphics[width=\columnwidth]{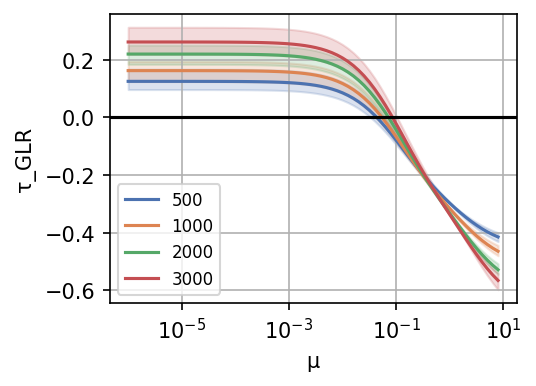}}
    \caption{Barabási-Albert ($\tau_{GLR}$)}%
    \label{tau_GLR_BA}%
    \end{subfigure}
    \hfill%
    \begin{subfigure}{0.6\columnwidth}
    \resizebox{\width}{0.62\columnwidth}{
    \includegraphics[width=\columnwidth]{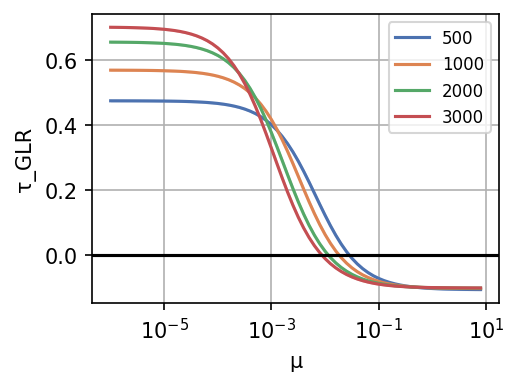}}
    \caption{SBM ($\tau_{GLR}$)}%
    \label{tau_GLR_SBM}%
    \end{subfigure}%
    \hfill
    \caption{$\tau_{GLR}$  for different random graph models (\#vertices = colour, bandwidth = $\frac{\text{\# vertices}}{10}$).}
\label{GLR_Threshold_plots}
\end{figure*}

\begin{figure*}%
    \centering
    \begin{subfigure}{0.6\columnwidth}
    \resizebox{\width}{0.62\columnwidth}{
    \includegraphics[width=\columnwidth]{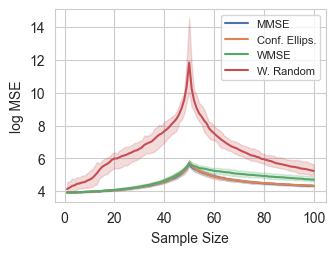}}
    \caption{Full-band noise, SNR = $10^{-1}$}
    \label{MSE_subfiga}
    \end{subfigure}\hfill
    \begin{subfigure}{0.6\columnwidth}
    \resizebox{\width}{0.62\columnwidth}{
    \includegraphics[width=\columnwidth]{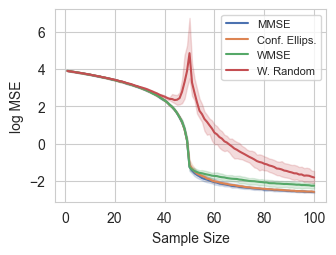}}
    \caption{Full-band noise, SNR = $10^{2}$}%
    \label{MSE_subfigb}%
    \end{subfigure}\hfill%
    \begin{subfigure}{0.6\columnwidth}
    \resizebox{\width}{0.62\columnwidth}{
    \includegraphics[width=\columnwidth]{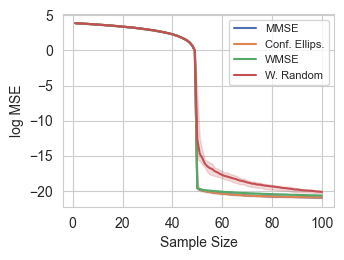}}
    \caption{Full-band noise, SNR = $10^{10}$}%
    \label{MSE_subfigc}%
    \end{subfigure}%
    \caption{Average MSE under LS on ER graphs (\#vertices=500, bandwidth = 50).}
\label{LS_ER_MSE_fig}
\end{figure*}

\begin{figure*}%
    \centering
    \begin{subfigure}{0.6\columnwidth}
    \resizebox{\width}{0.62\columnwidth}{
    \includegraphics[width=\columnwidth]{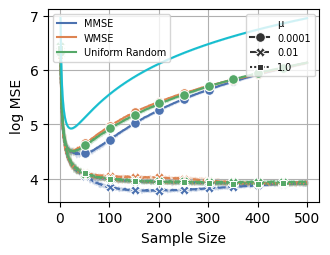}}
    \caption{Full-band noise, SNR = $10^{-1}$}
    \label{GLR_MSE_subfiga}
    \end{subfigure}\hfill
    \begin{subfigure}{0.6\columnwidth}
    \resizebox{\width}{0.62\columnwidth}{
    \includegraphics[width=\columnwidth]{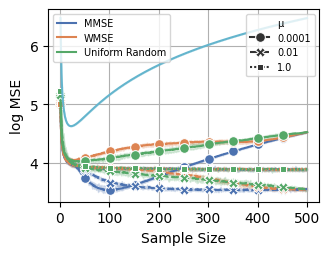}}
    \caption{Full-band noise, SNR = $\frac{1}{2}$}%
    \label{GLR_MSE_subfigb}%
    \end{subfigure}\hfill%
    \begin{subfigure}{0.6\columnwidth}
    \resizebox{\width}{0.62\columnwidth}{
    \includegraphics[width=\columnwidth]{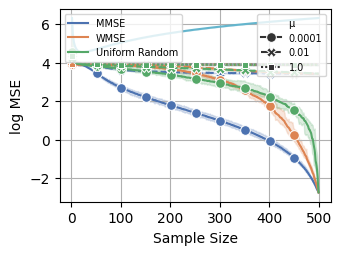}}
    \caption{Full-band noise, SNR = $10^{10}$}%
    \label{GLR_MSE_subfigc}%
    \end{subfigure}%
    \caption{Average MSE under GLR on ER graphs (\#vertices=500, bandwidth = 50). Line without markers is an upper bound.}
\label{GLR_ER_MSE_fig}
\end{figure*}

For full-band noise, we present threshold plots for all graphs and MSE plots for ER graphs in the main body of the paper. MSE plots for BA and SBM graphs are presented in Appendix \ref{plot_appendix}. For bandlimited noise, we present threshold plots, MSE plots and discussion of how the experimental and theoretical results correspond in Appendix \ref{app:Experiments_Bandlimited}.
{ We also present Table \ref{tbl:theory_experiment_correspondence} showing how our plots correspond to our theoretical results, which also corresponds to the summary of theoretical results in Table \ref{tbl:general_theory}.}

\begin{table}[h]
\setlength{\tabcolsep}{3pt}
\caption{Correspondence between theoretical and empirical results.}
\centering
\begin{tabularx}{\linewidth}{|X|p{1.5cm}|p{1.5cm}|p{1.5cm}|p{1.7cm}|p{1.7cm}|}
\hline
\multirow{2}{*}{} 
  & \multicolumn{2}{c|}{\centering\textbf{LS}} 
  & \multicolumn{2}{c|}{\centering\textbf{GLR}} \\
\cline{2-5}
 & Full-band & Bandlimited & Full-band & Bandlimited \\
\hline
\textbf{Character- isation} 
 & \makecell[tl]{Corr~\ref{main_ls} \\
   (Figs.~\ref{snr_ER}-\ref{snr_SBM} \\
   \& \ref{MSE_subfiga}-\ref{MSE_subfigc})}
 & \makecell[tl]{Corr~\ref{corr:LS_bandlimited_noise_big_variance} \\
   (Figs.~\ref{bandlimited_MSE_subfiga} \\-\ref{bandlimited_MSE_subfigc})}
 & \multicolumn{2}{c|}{\makecell[tc]{Corr~\ref{corr:main_GLR_iff} \\
   (Figs.~\ref{GLR_MSE_subfiga}-\ref{GLR_MSE_subfigc})}} \\
\hline
\textbf{Existence}
 & \makecell[tl]{Thm~\ref{thm:noiseless_optimality_means_noise_sensitivity} \\
   (Figs.~\ref{snr_ER}-\ref{snr_SBM} \\
   \& \ref{MSE_subfiga}-\ref{MSE_subfigc})}
 & \makecell[tl]{Corr~\ref{corr:LS_bandlimited_noise_sample_only_k} \\
   (Figs.~\ref{bandlimited_MSE_subfiga} \\-\ref{bandlimited_MSE_subfigc})}
 & \makecell[tl]{Thm~\ref{thm:main_GLR_exist} \\
   (Figs.~\ref{tau_GLR_er}-\ref{tau_GLR_SBM} \\
   \& \ref{GLR_MSE_subfiga}-\ref{GLR_MSE_subfigc} \\
   \& Table~\ref{tbl:empirical_probabilities_conditions})}
 & \makecell[tl]{Thm~\ref{thm:main_GLR_bl} \\
   (Figs.~\ref{tau_GLR_bl_er}-\ref{tau_GLR_bl_SBM} \\
   \& \ref{bandlimited_GLR_MSE_subfiga}-\ref{bandlimited_GLR_MSE_subfigc} \\
   \& Table~\ref{tbl:empirical_probabilities_conditions_bl})} \\
\hline
\textbf{Asymptotics}
 & \makecell[tl]{Rmk~\ref{rmk:LS_big_N} \\
   (Figs.~\ref{snr_ER_1000}-\ref{snr_ER_3000})}
 & \makecell[tl]{Rmk~\ref{rmk:LS_big_N_bl}}
 & \makecell[tl]{Propn~\ref{propn:GLR_big_N} \\
   (Figs.~\ref{tau_GLR_er}-\ref{tau_GLR_SBM})}
 & \makecell[tl]{Propn~\ref{propn:GLR_big_N_bl} \\
   (Figs.~\ref{tau_GLR_bl_er}-\ref{tau_GLR_bl_SBM})} \\
\hline
\end{tabularx}
\label{tbl:theory_experiment_correspondence}
\end{table}

\subsubsection{$\tau$ plots (LS)}

Fig. \ref{LS_SNR_Threshold_plots_all}  shows $\tau(\set{S},v)$ as sample size varies for sequential sampling methods under LS, where $v$ is the latest node added to $\set{S}$. 
     For ER graphs, for sample size smaller than the bandwidth, $\tau(\set{S},v) > 0$ and beyond that $\tau(\set{S},v) \leq 0$. The maximum of $\tau(\set{S},v)$ observed is approximately $10^{6}$ ($60dB$) for weighted random sampling, and approximately 10 ($10dB$) for the deterministic sampling methods. The confidence intervals for weighted random sampling is much wider than for the deterministic sampling methods.
     Next, we observe the same phenomenon as with ER for BA and SBM graphs, with maxima of approximately $10^5$ ($50dB$) for weighted random sampling and maxima of approximately 10 ($10dB$) for the deterministic sampling methods.
Finally, Figs. \ref{snr_ER_1000}-\ref{snr_ER_3000} show the same phenomenon as Fig. \ref{snr_ER} happens for ER graphs at sizes of 1000, 2000 and 3000 vertices.

 We now correlate our experiments and our theoretical results. As Corollary \ref{main_ls} is necessary and sufficient, the sign of $\tau(\set{S},v)$ tells us exactly when removing a vertex improves $\set{S}$. Therefore $\tau$ being negative is an informative statement, telling us that $\set{S} \backslash \{v\}$ is \emph{never} better than $\set{S}$. Concretely, if SNR is below the maximum $\tau(\set{S},v)$ observed, then when sample size equals bandwidth we can reduce sample size to reduce MSE.

 Theorem \ref{thm:noiseless_optimality_means_noise_sensitivity} proves that noiseless-optimal methods (MMSE, Confidence Ellipsoid and WMSE in our experiments) will have $\tau(\set{S},v) > 0$ for sample sizes smaller than the bandwidth, and then $\tau(\set{S},v) \leq 0$ afterwards, and Fig. \ref{LS_SNR_Threshold_plots_all}  validates this. Note that even though this pattern holds for Weighted Random Sampling in our experiments, Theorem \ref{thm:noiseless_optimality_means_noise_sensitivity} does not guarantee it always holds for Weighted Random Sampling.

 While we prove that $\tau(\set{S},v) \not\to 0$ as $N \to \infty$, we conjecture the stronger claim that at a sample size equal to bandwidth, $\tau(\set{S},v)$ might actually increase with graph size, which is supported (but not proven) by Figs. \ref{snr_ER_1000}-\ref{snr_ER_3000}.

\subsubsection{$\tau_{GLR}$ plots (GLR)}
In Figs. \ref{tau_GLR_er}-\ref{tau_GLR_SBM}, we plot $\tau_{GLR}$, where if $\textrm{SNR} < \tau_{GLR}$, then there is a sample size $m_{opt} < N$ where \emph{any} sample set of size $m_{opt}$ is better than $\set{N}$. Unlike with LS, our theorems are only sufficient so $\text{SNR} > \tau_{GLR}$ is uninformative. %
    For ER graphs, we see that $\tau_{GLR}$ is decreasing in $\mu$, and that $\tau_{GLR} > 0$ for sufficiently small $\mu$ for all graph sizes. The maximum value in this case for $\tau_{GLR}$ ranges between 0.4 ($-4dB$) and 0.7 ($-1.5dB$). %
    We see a similar pattern to ER graphs for BA graphs, the main differences being that $\tau_{GLR} > 0$ for larger values of $\mu$ and that the maxmimum of $\tau_{GLR}$ is  approximately {0.2 ($-7dB$)}.
    $\tau_{GLR}$ for SBM graphs behaves very similarly to ER graphs in our experiments.
Note that the confidence intervals for ER and SBM graphs are so tight as to not be clearly seen in Fig. \ref{GLR_Threshold_plots}, while being much wider for BA graphs. As  graph properties, when we sample from a random graph model, $r$ is a random variables. We observe wider confidence intervals {when $\mathcal{G}$ is sampled from the BA graph model as $\text{Var}_{\mathcal{G}}(r)$ is much higher than when we sample from the ER or SBM graph models.}

As $\tau_{GLR} > 0$, Figs. \ref{tau_GLR_er}-\ref{tau_GLR_SBM} show we can reduce sample size to reduce MSE for all examined graph models. We see $\tau_{GLR}$ is only positive for small enough $\mu$, motivating $\mu_{ub}$ in Theorem \ref{thm:main_GLR_exist}. The value of $\mu$ below which $\tau_{GLR} > 0$ is at least $\mu_{ub}$.

Finally, Proposition \ref{propn:GLR_big_N} proves that $\tau_{GLR} \not\to 0$ as $N \to \infty$ for $\mu = \frac{c}{\lambda_{2}}$,  or $\frac{c}{\lambda_{N}}$, or $\frac{c}{\sqrt{\lambda_{2}\lambda_{N}} }$ on ER graphs, i.e. for decreasing $\mu$ as $N$ increases. We note that Figs. \ref{tau_GLR_er}-\ref{tau_GLR_SBM}  provide empirical evidence that this might hold for all graph models tested.

\subsubsection{MSE plots (LS)}
The MSE plots demonstrate the validity of our theoretical results linking MSE and sample size.

Figs. \ref{MSE_subfiga}-\ref{MSE_subfigc} show log MSE against sample size for LS under full-band noise. %
    For high noise (a), we see MSE increases with sample size no larger than bandwidth, and decreases afterwards -- that is, it is $\Lambda$-shaped, as described in Remark \ref{remark:LS_error_lambda_shaped}. %
    In (b), for our three deterministic noiseless-optimal sampling schemes (orange/green/blue), MSE is decreasing in sample size. For weighted random sampling, we see for sample sizes no larger than bandwidth, MSE first decreases and then increases, attaining a maximum with sample size at bandwidth, and then decreases with sample size.
    In (c), the almost noiseless case, we see MSE is decreasing in sample size for all sampling schemes, with a large drop when sample size is at bandwidth.

Comparing Figs. \ref{MSE_subfiga}-\ref{MSE_subfigc} to Fig. \ref{snr_ER}, Fig. \ref{MSE_subfiga} corresponds to when $\text{SNR} < \tau(\set{S},v)$, Fig. \ref{MSE_subfigc} corresponds to $\text{SNR} > \tau(\set{S},v)$ and Fig. \ref{MSE_subfigb} corresponds to when SNR lies between some values of $\tau(\set{S},v)$ for weighted random sampling.  
We see that MSE increases with sample size when $\text{SNR} < \tau(\set{S},v)$ and decreases otherwise, proving the validity of Corollary \ref{main_ls}.
Fig. \ref{MSE_subfiga} (green, orange, blue curves) shows that for low SNRs, optimal sampling schemes  lead MSE to monotonically increase with each additional sample until the sample size reaches the bandwidth, illustrating Theorem \ref{thm:noiseless_optimality_means_noise_sensitivity} and Remark \ref{remark:LS_error_lambda_shaped}. We also validate Remark \ref{remark:remove_multiple_nodes}: if we are slightly above the bandwidth ($50$ for Fig. \ref{MSE_subfiga}), i.e., to the right of the peak, then reducing sample size by one does not reduce MSE; however, if we significantly reduce sample size, i.e., transitioning from just right of the peak to left of the peak, we can reduce MSE. %

Interestingly, Fig. \ref{MSE_subfiga} shows that at a low SNR of $10^{-1}$, the optimal sample size under LS is zero. Although this might appear counter-intuitive at a first glance, it makes concrete the idea that reconstruction does not work if there is too much noise: at this noise level letting $\vect{\hat{x}} = \vect{0}$ rather than fitting with LS with any number of observed vertices will result in a lower MSE on average. We can also formalise this in terms of our Bias-Variance decomposition; a $\vect{0}$ reconstruction has high bias but zero variance, and reconstructing from a non-zero number of samples has lower bias but positive variance. At a high enough noise the variance term in the MSE will dominate, and the MSE from taking $\hat{\vect{x}} = \vect{0}$ will be lower than reconstructing from any non-zero number of samples. The same reasoning applies to why the MSE at a smaller number of samples (e.g. 5 samples) is better than a larger number (e.g. 100 samples).

On the other hand, for high SNRs (Fig. \ref{MSE_subfigc}), MSE decreases monotonically as sample size increases for all sampling schemes, showing Corollary \ref{main_ls} is necessary and sufficient. %
Finally, Fig. \ref{MSE_subfigb} illustrates the situation between the two cases.

\subsubsection{MSE plots (GLR)}
As with LS, the MSE plots demonstrate the validity of our theoretical results linking MSE and sample size.
Figs. \ref{GLR_MSE_subfiga}-\ref{GLR_MSE_subfigc} show how MSE changes with sample size for different values of $\mu$ under full-band noise, along with an upper bound (light blue) which is not dependent on $\mu$. This bound is approximately U-shaped in all cases.
    For $\text{SNR} = 10^{-1}$, we see for $\mu=10^{-4}$, under all sampling schemes, MSE is minimised at a sample size around 16 to 22. For the MMSE sampling scheme with $\mu=0.01$, MSE is minimised at a sample size of approximately 200. In all other cases where $\text{SNR} = 10^{-1}$, MSE is minimised at full observation $(|\set{S}|=500)$.
    For $\text{SNR}=\frac{1}{2}$ and $\mu=10^{-4}$, MSE is minimised at a sample size a bit less than 100. For larger $\mu$, we see MSE is minimised at full observation.
    In the nearly noiseless case, MSE decreases with sample size under all parameter choices. In all cases where the MSE is minimised at a sample size less than $N$, the MSE is approximately U-shaped like our bound { and 
 our diagram Fig. \ref{fig:GLR_diagram}}.

Figs. \ref{GLR_MSE_subfiga}-\ref{GLR_MSE_subfigc} illustrate Corollary \ref{corr:main_GLR_iff}, Theorem \ref{thm:main_GLR_exist} and Corollary \ref{corr:unif_ub_xi_1_MSE} in the following ways. First, the MSE upper bound corresponds to Corollary \ref{corr:unif_ub_xi_1_MSE}, and we can see it is greater than any observed MSE at each sample size. The sample size which minimises our upper bound is $m_{opt}$ (Remark \ref{remark:mopt}) and as $r \in (1,1.01]$ in our Erdős–Rényi experiments, $m_{opt} \in [22,23]$, which empirically well approximates the sample size that minimises MSE in our low $\mu$ and low SNR experiments. %
Second, Figs. \ref{GLR_MSE_subfiga}-\ref{GLR_MSE_subfigc} show that MSE can decrease with increasing sample size, and Fig. \ref{GLR_MSE_subfigc} shows that at high SNR full observation is best, illustrating the necessary and sufficient nature of Corollary \ref{corr:main_GLR_iff}.
Finally, Figs. \ref{GLR_MSE_subfiga}-\ref{GLR_MSE_subfigc} illustrate the dependence on SNR and $\mu$ in Theorem \ref{thm:main_GLR_exist}, i.e. at low $\mu$ and SNR the optimal sample size is less than $N$, but at high enough $\mu$ or SNR this no longer holds.  

Figs. \ref{GLR_MSE_subfiga}-\ref{GLR_MSE_subfigc} also demonstrate some limitations of the characterisation in Corollary \ref{corr:unif_ub_xi_1_MSE} and Theorem \ref{thm:main_GLR_exist}. We see from Fig. \ref{GLR_MSE_subfigb} that even though the MSE at $m_{opt}$ is lower than at $N$, our bound never goes below the maximum observed MSE, so is too loose to show this. This is partly because the Theorem and the Corollary bound $\xi_{2}(\set{S})$ only as a function of $|\set{S}|$, ignoring $\mu$ and the composition of $\set{S}$. This limitation corresponds to a gap between $\tau_{GLR}$ and $\tau(\set{N},\set{S}^{c})$, demonstrating $\tau_{GLR}$ is a lower bound for $\tau(\set{N},\set{S}^{c})$ rather than an exact characterisation.

\subsubsection{Checking conditions}
While our theorems for LS apply to all graphs, Theorem \ref{thm:main_GLR_exist} for GLR relies on conditions around graph invariants. We sample graphs from each random graph model to empirically show the probability the conditions of Theorem \ref{thm:main_GLR_exist} are met at a sample size of $m_{opt}$ for some $\mu > 0$:
\begin{table}[h!btp]
\caption{Probability theorem conditions are met.}
    \vspace{-0.2cm}
    \begin{center}
        \begin{tabular}{|l|c|c|c|}
     \hline
       & \textbf{ER} & \textbf{SBM} & \textbf{BA} \\ 
     \hline
     Theorem \ref{thm:main_GLR_exist} conditions met & 100\% & 100\% & 99.9\% \\ \hline
        \end{tabular}
    \end{center}
    \label{tbl:empirical_probabilities_conditions}
    \vspace{-0.4cm}
\end{table}

Proposition \ref{propn:GLR_big_N} shows the conditions in Theorem \ref{thm:main_GLR_exist} hold w.h.p. for ER graphs as $N \to \infty$. However, the proposition does not say whether the conditions hold for a graph of a given size, or other graph models. The results in Table \ref{tbl:empirical_probabilities_conditions} show empirically that the conditions hold under full-band noise (Theorem \ref{thm:main_GLR_exist}) for ER, BA and SBM graphs with $500$ vertices. This outlines the applicability of our theorems.

If the conditions on our Theorems are not met, they provide no information about the shape of the MSE. However, Figs. \ref{bandlimited_GLR_BA_MSE_fig} and  \ref{bandlimited_GLR_SBM_MSE_fig} in Appendix \ref{plot_appendix} show empirically that for BA and SBM graphs under bandlimited noise with $\text{SNR} \in \{10^{-2}, \frac{1}{2}\}$ under GLR with $\mu \in 
\{10^{-2}, 10^{-4}\}$, even if the conditions of Theorem \ref{thm:main_GLR_bl} are not met, reducing sample size from ${N}$ to below ${N}$ reduces MSE under the presented sampling schemes. We leave further investigation of this as future work. %

\subsection{Experimental results on real-world datasets}

\begin{figure*}[p]
\centering
\begin{minipage}{0.48\textwidth}
    \centering
    \begin{subfigure}[t]{0.45\columnwidth}
    \resizebox{\width}{0.62\columnwidth}
    {\includegraphics[width=\columnwidth]{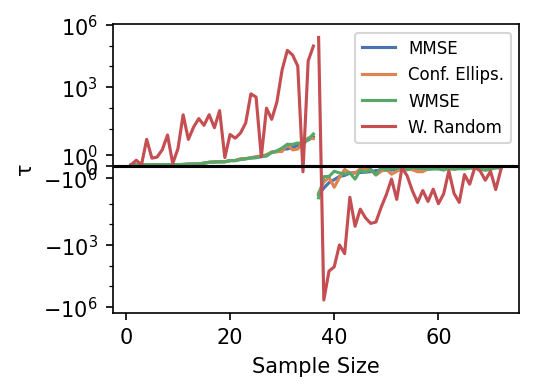}}
    \caption{FMRI} 
    \label{snr_FMRI}
    \end{subfigure}
    \begin{subfigure}[t]{0.45\columnwidth}
    \resizebox{\width}{0.62\columnwidth}{
    \includegraphics[width=\columnwidth]{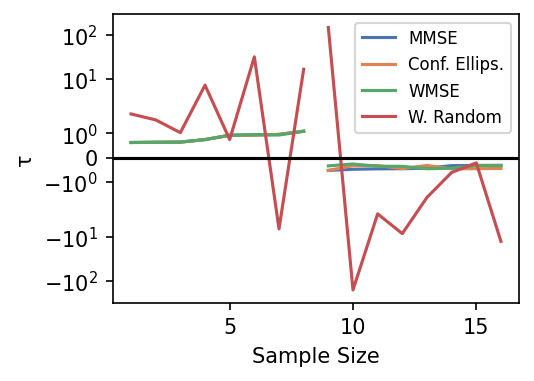}}
    \caption{Weather}%
    \label{snr_Weather}%
    \end{subfigure}
    \caption{$\tau(\set{S},v)$ for real-world data under LS \& full-band noise.}
    \label{LS_SNR_Threshold_plots_all_real}
\end{minipage}
\hfill
\begin{minipage}{0.48\textwidth}
    \centering
    \begin{subfigure}[t]{0.45\columnwidth}
    \resizebox{\width}{0.62\columnwidth}{
    \includegraphics[width=\columnwidth]{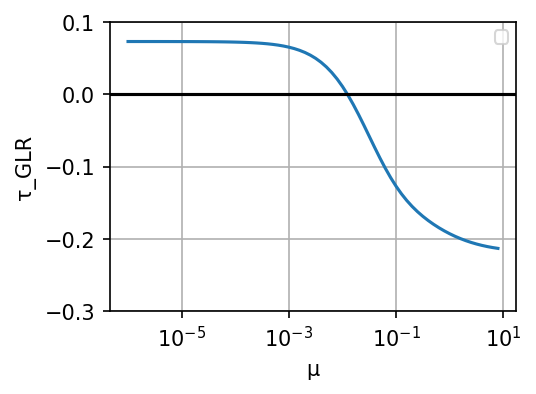}}
    \caption{FMRI}
    \label{tau_GLR_fmri}
    \end{subfigure}
    \begin{subfigure}[t]{0.45\columnwidth}
    \resizebox{\width}{0.62\columnwidth}{
    \includegraphics[width=\columnwidth]{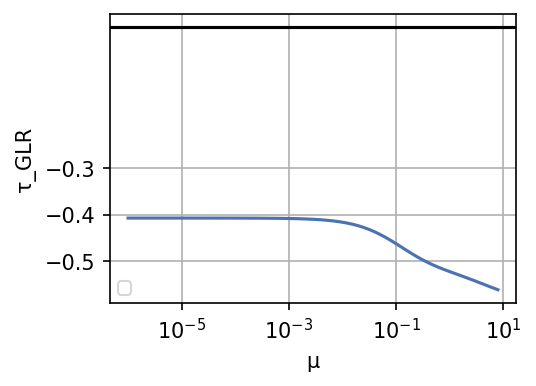}}
    \caption{Weather}%
    \label{tau_GLR_weather}%
    \end{subfigure}
    \caption{$\tau_{GLR}$ for real-world data.}
    \label{GLR_Threshold_plots_real}
\end{minipage}
\end{figure*}

\begin{figure*}[p]
    \centering
    \begin{subfigure}{0.6\columnwidth}
    \resizebox{\width}{0.62\columnwidth}{
    \includegraphics[width=\columnwidth]{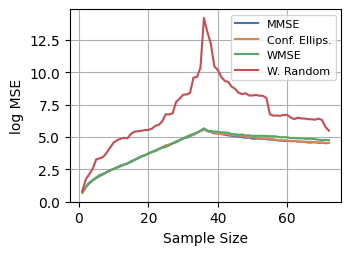}}
    \caption{FMRI, SNR = $10^{-1}$}
    \label{fmri_MSE_subfiga}
    \end{subfigure}\hfill
    \begin{subfigure}{0.6\columnwidth}
    \resizebox{\width}{0.62\columnwidth}{
    \includegraphics[width=\columnwidth]{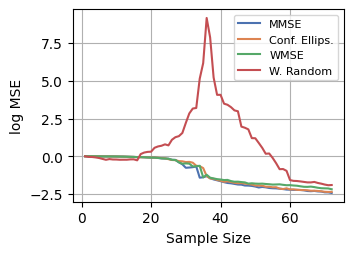}}
    \caption{FMRI, SNR = $10^{2}$}%
    \label{fmri_MSE_subfigb}%
    \end{subfigure}\hfill%
    \begin{subfigure}{0.6\columnwidth}
    \resizebox{\width}{0.62\columnwidth}{
    \includegraphics[width=\columnwidth]{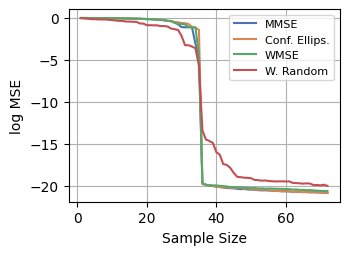}}
    \caption{FMRI, SNR = $10^{10}$}%
    \label{fmri_MSE_subfigc}%
    \end{subfigure}%
    \hfill
    \begin{subfigure}{0.6\columnwidth}
    \resizebox{\width}{0.62\columnwidth}{
    \includegraphics[width=\columnwidth]{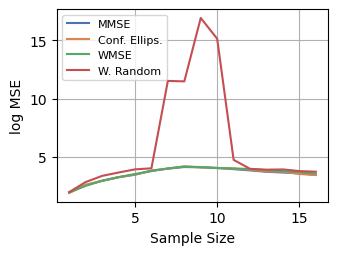}}
    \caption{Weather, SNR = $10^{-1}$}
    \label{weather_MSE_subfiga}
    \end{subfigure}\hfill
    \begin{subfigure}{0.6\columnwidth}
    \resizebox{\width}{0.62\columnwidth}{
    \includegraphics[width=\columnwidth]{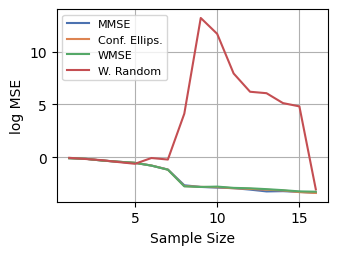}}
    \caption{Weather, SNR = $1$}%
    \label{weather_MSE_subfigb}%
    \end{subfigure}\hfill%
    \begin{subfigure}{0.6\columnwidth}
    \resizebox{\width}{0.62\columnwidth}{
    \includegraphics[width=\columnwidth]{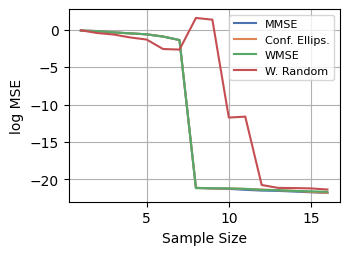}}
    \caption{Weather, SNR = $10^{10}$}%
    \label{weather_MSE_subfigc}%
    \end{subfigure}%
    \caption{Average MSE under LS on real-world data and full-band noise.}
\label{LS_real_MSE_fig}
\end{figure*}

\begin{figure*}[p]%
    \centering
    \begin{subfigure}{0.6\columnwidth}
    \resizebox{\width}{0.62\columnwidth}{
    \includegraphics[width=\columnwidth]{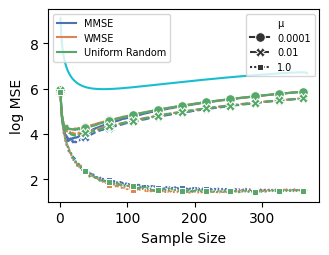}}
    \caption{FMRI, SNR = $10^{-1}$}
    \label{fmri_GLR_MSE_subfiga}
    \end{subfigure}\hfill
    \begin{subfigure}{0.6\columnwidth}
    \resizebox{\width}{0.62\columnwidth}{
    \includegraphics[width=\columnwidth]{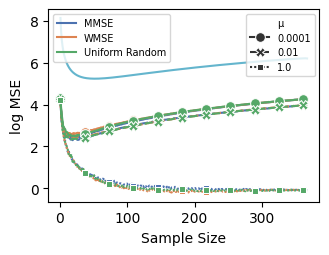}}
    \caption{FMRI, SNR = $\frac{1}{2}$}%
    \label{fmri_GLR_MSE_subfigb}%
    \end{subfigure}\hfill%
    \begin{subfigure}{0.6\columnwidth}
    \resizebox{\width}{0.62\columnwidth}{
    \includegraphics[width=\columnwidth]{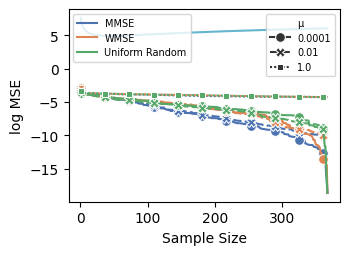}}
    \caption{FMRI, SNR = $10^{10}$}%
    \label{fmri_GLR_MSE_subfigc}%
    \end{subfigure}%
    \hfill
    \begin{subfigure}{0.6\columnwidth}
    \resizebox{\width}{0.62\columnwidth}{
    \includegraphics[width=\columnwidth]{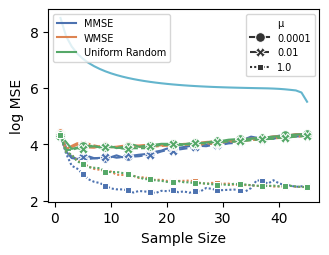}}
    \caption{Weather, SNR = $10^{-2}$}
    \label{weather_GLR_MSE_subfiga}
    \end{subfigure}\hfill
    \begin{subfigure}{0.6\columnwidth}
    \resizebox{\width}{0.62\columnwidth}{
    \includegraphics[width=\columnwidth]{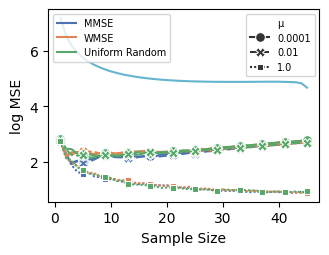}}
    \caption{Weather, SNR = $\frac{1}{2}$}%
    \label{weather_GLR_MSE_subfigb}%
    \end{subfigure}\hfill%
    \begin{subfigure}{0.6\columnwidth}
    \resizebox{\width}{0.62\columnwidth}{
    \includegraphics[width=\columnwidth]{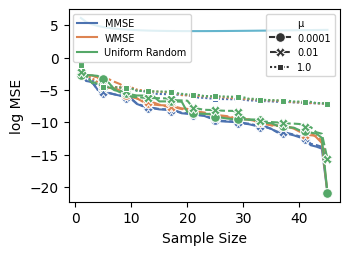}}
    \caption{Weather, SNR = $10^{10}$}%
    \label{weather_GLR_MSE_subfigc}%
    \end{subfigure}%
    \caption{Average MSE under GLR on real-world data and full-band noise. Line without markers is an upper bound.}
\label{GLR_real_MSE_fig}
\end{figure*}

We first discuss the plots of $\tau$ and $\tau_{GLR}$ for real-world datasets. In practice, one cannot know for sure what the signal model of a real-world signal is; however, computation of $\tau$ is dependent on our choice of theoretical signal model. We have therefore computed $\tau$ under the signal model assumptions given in Section \ref{sec:signal_model}.  

{%
\subsubsection{$\tau$ plots (LS)}
We examine Fig. \ref{LS_SNR_Threshold_plots_all_real}. By Proposition \ref{propn:averages_generalise_to_forall}, the sign of $\tau(\set{S},v)$ we have plotted is provably correct for \emph{any} signal model $\tau$. 
We validate the actual value of $\tau$ with MSE experiments (Figs. \ref{LS_real_MSE_fig} \& \ref{GLR_real_MSE_fig}).

\subsubsection{$\tau$ plots (GLR)}
By Theorem \ref{thm:main_GLR_exist}, $\tau_{GLR}$ being positive means $\tau(\set{N},\set{S}^{C}) >0$ under the bandlimited signal model. As Theorem \ref{thm:main_GLR_exist} functions by showing $\Delta_{2} >0$, by Proposition \ref{propn:averages_generalise_to_forall}, $\tau_{GLR}$ being positive under one signal model means $\tau(\set{N},\set{S}^{C})$ is positive under any signal model. We use this to interpret Fig. \ref{GLR_Threshold_plots_real}.
We can see from Fig. \ref{tau_GLR_fmri} that if $\mu<0.01$ then at some noise level, any set of size $m_{opt}$ is better than $\set{N}$. Fig. \ref{tau_GLR_weather} which shows $\tau_{GLR} < 0$ for all $\mu$ is completely uninformative -- $\tau_{GLR}$ is a lower bound for $\tau(\set{N},\set{S}^{C})$, and a negative lower bound cannot tell us whether the bounded quantity is or isn't positive. We validate this in the MSE section.

\subsubsection{MSE plots (LS)}
At high noise levels, Figs. \ref{weather_MSE_subfiga} and \ref{fmri_MSE_subfiga} show a $\Lambda$-shaped MSE, validating Proposition \ref{propn:averages_generalise_to_forall} applied to Remark \ref{remark:LS_error_lambda_shaped}. 

In both plots in Fig. \ref{LS_SNR_Threshold_plots_all_real}, all of the lines are above $10^{-1}$. This corresponds to Fig. \ref{weather_MSE_subfiga} and Fig. \ref{fmri_MSE_subfiga} being $\Lambda$-shaped. In Fig. \ref{LS_SNR_Threshold_plots_all_real}, all of the lines are under $10^{10}$. This correponds to Fig. \ref{weather_MSE_subfigc} and Fig. \ref{weather_MSE_subfigc} being decreasing, which we observe.

In Fig. \ref{snr_FMRI}, we see the red line (Weighted Random sampling) is broadly above $10^2$ and the other lines (MMSE, WMSE and Confidence Ellipsoid Sampling) are below $10^2$. This corresponds to the red MSE line being $\Lambda$-shaped and the other lines decreasing in Fig. \ref{fmri_MSE_subfigb}. 

In Fig \ref{snr_Weather} all lines are below $10^{2}$ and so we expect at $\text{SNR}=10^{2}$ for all lines to be decreasing. For MMSE, WMSE and Confidence Ellipsoid sampling this is consistent with Fig. \ref{weather_MSE_subfigb}; however, it is clear our covariance assumption has underestimated $\tau$ for Weighted Random Sampling as the red line is $\Lambda$-shaped in Fig. \ref{weather_GLR_MSE_subfigb}.

We see that the real-world results are broadly in line with the synthetic experiments, validating our use of the signal model presented in Section \ref{sec:signal_model} to calculate $\tau$ for LS.

\subsubsection{MSE plots (GLR)}
All four figures in Fig. \ref{GLR_real_MSE_fig} corresponding to $\text{SNR} \leq \frac{1}{2}$ and $\mu < 1$ show the minimum MSE at sample size of significantly less than $N$, which validates Theorem \ref{thm:main_GLR_exist} combined with Proposition \ref{propn:averages_generalise_to_forall}. This is expected for the FMRI graph for $\mu<0.01$, as $\tau_{GLR} > 0$ and we had proof that decreasing sample size from $N$ would decrease MSE. However, it is the case for the Weather graph, even though our computation of $\tau_{GLR}$ was negative; this is not unexpected as Theorem \ref{thm:main_GLR_exist} is sufficient but not necessary.
}

\section{Discussion}
In this paper we studied the impact of sample size on linear reconstruction of noisy $k$-bandlimited graph signals. We showed theoretically and experimentally, in the same settings as much of the sample set selection literature, that reconstruction error is not always monotonic in sample size, i.e., at sufficiently low SNRs, reconstruction error can sometimes be improved by \emph{reducing} sample size. %
Our finding reveals that existing results in the literature for the noiseless setting may not necessarily generalise to the noisy case,  even when considering regularised reconstruction methods. It also demonstrates the need to consider both optimal sample size selection and reconstruction methods at the same time, and motivates assessment of noise levels in datasets to do so. %

\xd{One practical implication of our theoretical results is that it can provide useful guidance on how to make use of the sampling budget available. For example, if one is aware of potentially high noise present in the observed signals, to minimise MSE under LS reconstruction, it might be preferable to not use all of the provided sampling budget.}

\xd{Another use case could be a practical sampling algorithm, for both the LS and GLR reconstructions. For LS reconstruction, if one has chosen $k$ or fewer nodes by some noiseless-optimal scheme, one can calculate $\tau(\set{S},v)$ for each element of $\set{S}$, remove $v$ if $\text{SNR} < \tau(\set{S},v)$, and repeat until it is no longer possible to find a node $v$ in our remaining set $\set{S}_{rem}$ where $\text{SNR} < \tau(\set{S}_{rem},v)$.}

\xd{For the GLR reconstruction, with a fixed $\mu$ and belief that the signal model in Section \ref{sec:every_x} applies (the latter could be assessed based on past observations and is required for computation of the SNR threshold $\tau$), then our results suggest the following sampling algorithm can be considered. Given the graph structure, before observing any nodes:
\begin{enumerate}
    \item Check whether condition \ref{eq:GLR_exist_thm_B_constraint} in Theorem \ref{thm:main_GLR_exist} applies (this only depend on graph structure);
    \item If they do, compute $\tau_{GLR}$;
    \item If it is believed that $\text{SNR} < \tau_{GLR}$ holds, sample $m_{opt}$ (approximately $\sqrt{N}$) nodes at random and reconstruct from those nodes.
\end{enumerate}}

We finally remark that future work includes extending the analysis on GLR to the normalised graph Laplacian, providing bounds on $\xi_2$ for LS, analysing other graph models such as Ring graphs or studying a more detailed ``early-stopping'' mechanism in sequential sampling schemes that do not use the full sample budget.

\bibliographystyle{IEEEtran}
\bibliography{sample}
\clearpage

\appendices
\section{Theoretical Results for LS under $k$-Bandlimited Noise}
\label{app:LS_bandlimited}
We have observed that MSE can decrease when sample size decreases under LS and full band noise. This raises the question of whether the decrease is caused by some sort of interference effect between the high-frequency components of the noise and the bandlimited (low-frequency) signal. In this subsection, by showing that MSE can decrease when sample size decreases under LS with $k$-bandlimited noise, we demonstrate that this is not the case.

\subsubsection{Simplification}
For coherence with the full-band case, we use the single vertex simplification.
\subsubsection{Characterisation}
We first show that under $k$-bandlimited noise the choice of sample set $\set{S}$ only influences the MSE through $\textrm{rank}(\matrsubU{S})$.
\begin{lemma}
\label{lemma:LS_bandlimited_noise_MSE}
$        \textrm{MSE}_{\set{S}} = k + (\sigma^2 - 1)\textrm{rank}(\matrsubU{S}) $.
\end{lemma}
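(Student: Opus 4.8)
The plan is to read $\textrm{MSE}_{\set{S}}$ off the bias--variance decomposition $\textrm{MSE}_{\set{S}} = \xi_{1}(\set{S}) + \sigma^{2}\xi_{2}(\set{S})$ of (\ref{eq:xi_decomp}), substituting the LS reconstruction matrix $\matr{R}_{\set{S}} = \matrsubU{N}\matrsubU{S}^{\dagger}$ from (\ref{eq:defn_RS:LS}) and the $k$-bandlimited form $\xi_{2}(\set{S}) = \sqfrob{\matr{R}_{\set{S}}\matrsubU{S}}$ from (\ref{eq:xi_2_def}). Two standard facts drive the whole computation: writing $\matr{Q} := \matrsubU{S}^{\dagger}\matrsubU{S}$, this is the orthogonal projector onto the row space of $\matrsubU{S}$, so $\matr{Q}$ is symmetric, idempotent and $\trace{\matr{Q}} = \rank{\matrsubU{S}}$; and the columns of $\matrsubU{N}$ are orthonormal because $\matr{U}$ is orthogonal, so $\matrsubU{N}^{T}\matrsubU{N} = \matr{I}_{k}$.

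Given these, $\matr{R}_{\set{S}}\matrsubU{S} = \matrsubU{N}\matr{Q}$, and expanding the squared Frobenius norm as a trace and using $\matrsubU{N}^{T}\matrsubU{N} = \matr{I}_{k}$ together with $\matr{Q}^{2} = \matr{Q}$ gives $\xi_{2}(\set{S}) = \trace{\matr{Q}\matrsubU{N}^{T}\matrsubU{N}\matr{Q}} = \trace{\matr{Q}} = \rank{\matrsubU{S}}$. Likewise $\matrsubU{N} - \matr{R}_{\set{S}}\matrsubU{S} = \matrsubU{N}(\matr{I}_{k}-\matr{Q})$ with $\matr{I}_{k}-\matr{Q}$ again an orthogonal projector, so $\xi_{1}(\set{S}) = \trace{(\matr{I}_{k}-\matr{Q})\matrsubU{N}^{T}\matrsubU{N}(\matr{I}_{k}-\matr{Q})} = \trace{\matr{I}_{k}-\matr{Q}} = k - \rank{\matrsubU{S}}$. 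Summing, $\textrm{MSE}_{\set{S}} = (k - \rank{\matrsubU{S}}) + \sigma^{2}\rank{\matrsubU{S}} = k + (\sigma^{2}-1)\rank{\matrsubU{S}}$, which is the claim.

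There is no real obstacle here; the statement is essentially bookkeeping on top of (\ref{eq:xi_decomp}) specialised to LS and $k$-bandlimited noise. The two points needing care are: using the correct one-sided pseudoinverse identity --- it is $\matrsubU{S}^{\dagger}\matrsubU{S}$, the $k \times k$ projector onto the row space, that appears, because $\matr{R}_{\set{S}}$ is post-multiplied by $\matrsubU{S}$, rather than $\matrsubU{S}\matrsubU{S}^{\dagger}$ --- and invoking orthonormality of the columns of $\matrsubU{N}$ to collapse $\matrsubU{N}^{T}\matrsubU{N}$ to $\matr{I}_{k}$, which relies on $\matr{U}$ being orthogonal and hence on the assumption that $\matr{L}$ is real symmetric. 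As a consistency check, the resulting $\xi_{1}(\set{S}) = k - \rank{\matrsubU{S}}$ carries no dependence on $\sigma$, as it must, since $\xi_{1}$ is by definition the $\sigma^{2}=0$ MSE and so cannot depend on the noise model.
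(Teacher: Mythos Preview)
Your proposal is correct and follows essentially the same approach as the paper: both compute $\xi_{1}$ and $\xi_{2}$ by recognising that $\matrsubU{S}^{\dagger}\matrsubU{S}$ is an orthogonal projector of trace $\rank{\matrsubU{S}}$ and using $\matrsubU{N}^{T}\matrsubU{N}=\matr{I}_{k}$ to collapse the Frobenius norms. The only cosmetic difference is that the paper cites its earlier Lemma~\ref{lemma:LS_xi_1_is_rank} for $\xi_{1}(\set{S})=k-\rank{\matrsubU{S}}$ rather than rederiving it inline as you do.
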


\noindent Lemma \ref{lemma:LS_bandlimited_noise_MSE} makes it easy to prove the following variants of Corollary \ref{main_ls} and Theorem \ref{thm:noiseless_optimality_means_noise_sensitivity} for bandlimited noise.

\begin{corollary}
\label{corr:LS_bandlimited_noise_big_variance}
Let $v \in \set{S}$, then $\set{S}\backslash \{v\}$ is as good or better than $\set{S}$ if and only if

\begin{equation}
\textrm{SNR} \leq 1.
\end{equation}

\end{corollary}
We can understand this as being like Corollary \ref{main_ls} with $\tau_{LS\_bl} = 1$.
\noindent The criterion in Corollary \ref{corr:LS_bandlimited_noise_big_variance} does not depend on vertex choice, unlike Corollary \ref{main_ls}. Therefore under LS and bandlimited noise, whether $\set{S} \backslash \{v\}$ is as good or better than $\set{S}$ is not contingent on which vertices are in $\set{S}$.

\subsubsection{Existence}
Next, our variant of Theorem \ref{thm:noiseless_optimality_means_noise_sensitivity} for bandlimited noise concerns when MSE changes at all, rather than when it reduces.
\begin{corollary}
\label{corr:LS_bandlimited_noise_sample_only_k}
    Suppose we use a sequential noiseless-optimal scheme to select a vertex sample set $\set{S}_{m}$ of size $m$. For $m \leq k$: 
\begin{equation}
    \forall v \in \set{S}_{m}: \quad \tau(\set{S}_{m},v) = \tau_{LS\_bl} = 1,
\end{equation}
i.e., for \emph{any} $v \in \set{S}_{m}$, $\set{S}_{m} \backslash \{v\}$ is better than $\set{S}_{m}$ if and only if $\textrm{SNR} < \tau_{LS\_bl}$.
For $m > k$: 
    \begin{equation}
        \forall \text{SNR}, \enskip \forall v \in \set{S}_{m}: \quad \textrm{MSE}_{\set{S}_{m}} = \textrm{MSE}_{\set{S}_{m} \backslash \{ v \}}
    \end{equation}
\end{corollary}

\noindent We prove Lemma \ref{lemma:LS_bandlimited_noise_MSE} and Corollaries \ref{corr:LS_bandlimited_noise_big_variance} \& \ref{corr:LS_bandlimited_noise_sample_only_k} in Appendix \ref{app:LS_Bandlimited_Noise_proofs}.

{%
\subsubsection{Asymptotics}
Finally, we discuss the large graph case.
\begin{remark}
\label{rmk:LS_big_N_bl}
    As $\tau_{LS\_bl}$ is not a function of the graph, we find that at a fixed SNR $< \tau_{LS\_bl} = 1$, for graphs of any size (even arbitrarily large), we can reduce sample size to improve MSE.
\end{remark}
}

\section{Theoretical Results for GLR under $k$-Bandlimited Noise}
\label{app:GLR_bandlimited}
Once more, one might ask whether the MSE increasing with sample size under GLR is caused by some sort of interference effect between the high-frequency components of the noise and the bandlimited (low-frequency) signal. { We present a variant of Theorem \ref{thm:main_GLR_exist} to disprove this.

\subsubsection{Simplification} As with the full-band GLR case, we consider when a set $\set{S} \subset \set{N}$ is better than $\set{N}$.
\subsubsection{Characterisation}
We first note that Corollary \ref{corr:main_GLR_iff} also applies to the bandlimited case with the appropriate definition of $\Delta_{2}$, providing a full characterisation of when a set $\set{S}$ is better than $\set{N}$ under GLR reconstruction with bandlimited noise.

\subsubsection{Existence}
 We now present a bandlimited variant of Lemma \ref{lemma:GLR_xi_2_bound_main}. It can be understood as Lemma \ref{lemma:GLR_xi_2_bound_main} with $\lambda_{N}$ replaced by $\lambda_{k}$.

\begin{lemma}
\label{lemma:GLR_xi_2_bound_main_bl}
    Let $\lambda_{i}$ be the eigenvalues of $\matr{L}$ and let $\omega$ be defined as in Lemma \ref{lemma:GLR_xi_2_bound_main}. Let
    \begin{align}
        r_{bl} &= \omega\left( \frac{\lambda_{k}}{\lambda_{2}}\right) \\
        B_{k}(m) &= r_{bl} \frac{N}{m} + \sum^{m}_{i=2}\omega\left(\max\left[1,\frac{\lambda_{k+2-i}}{\lambda_{i}}\right]\right)
    \end{align}
    where the summation is 0 if $m=0$ or 1. Then
    \begin{align}
        \xi_{2,bl}(\set{S}) &\leq 1 + B_{k}(m) \leq 1 + r_{bl} \left( \frac{N}{m} + m - 1 \right).
    \end{align}
\end{lemma}
\begin{proof}
    We prove a bandlimited variant of the Kantorovich inequality and use the same structure as the proof of Lemma \ref{lemma:GLR_xi_2_bound_main}. See Appendix \ref{app:GLR_bandlimited_lemma_proof}.
\end{proof}

A version of Theorem \ref{thm:main_GLR_exist} follows in the same manner as with full-band noise:

\begin{theorem}
\label{thm:main_GLR_bl}
{ Let $B_{k}(m)$ be defined as in Lemma \ref{lemma:GLR_xi_2_bound_main_bl}.  Let $m_{opt\_bl}$ be the sample size minimising $B_{k}(m)$. 
    \begin{flalign}
        &\text{If} \hspace{0.35\columnwidth} B_{k}(m_{opt\_bl}) < k-1& \label{eq:GLR_exist_thm_B_constraint_bl}
    \end{flalign}
then $m_{opt\_bl} \leq \lceil \frac{N+1}{2} \rceil$. Furthermore $\exists \mu_{ub\_bl} > 0$  s.t. under GLR with parameter $\mu \in (0, \mu_{ub\_bl})$, $\exists \tau_{GLR\_bl}(\mu) > 0$ where
\begin{flalign}
        &\text{if} \hspace{0.3\columnwidth} \textrm{SNR} < \tau_{GLR\_bl}(\mu)&
    \end{flalign}

    \noindent then \emph{any} sample set $\set{S}$ of size $m_{opt\_bl}$ is better than $\set{N}$.}
\end{theorem}
\begin{proof}
    Apply Lemma \ref{lemma:GLR_xi_2_bound_main_bl} and follow similar proof steps to Theorem \ref{thm:main_GLR_exist}. See Appendix \ref{app:GLR_bandlimited_thm} for details.
\end{proof}

Given the similarity of the bounds and proof approaches between Theorems \ref{thm:main_GLR_exist} and \ref{thm:main_GLR_bl}, the same analysis of parameters applies to Theorem \ref{thm:main_GLR_bl} as Theorem \ref{thm:main_GLR_exist}. Thus we expect the parameters in Theorem \ref{thm:main_GLR_bl} -- $m_{opt\_bl}$, $B_{k}(m_{opt\_bl})$, $\mu_{ub\_bl}$ and $\tau_{GLR\_bl}$ to behave similarly to their non-bandlimited counterparts. This is reinforced by the asymptotics, which we will see in Proposition \ref{propn:GLR_big_N_bl}.

We again present an MSE upper bound to link our experiments in Section \ref{sec:experiments} to Fig \ref{fig:GLR_diagram}:
\begin{corollary}
\label{corr:unif_ub_xi_1_MSE_bl}
    Let $B_{k}(m)$ be defined as in Lemma \ref{lemma:GLR_xi_2_bound_main_bl}. For a sample set $\set{S}$ of size $m$,
    \begin{align}
    \textrm{MSE}_{\set{S}} &\leq (k - 1) + (1 + \sigma^{2}) \cdot (1 + B_{k}(m)). \label{eq:unif_ub_MSE}
    \end{align}
\end{corollary}
\begin{proof}
By Lemma \ref{lemma:GLR_xi_2_bound_main_bl}, $\xi_{2}(\set{S}) \leq B(m)$. By Lemma \ref{lemma:unif_ub_xi_1_GLR} in Appendix \ref{app:proof_unif_ub_xi_1_MSE}, $\xi_{1}(\set{S}) \leq (k-1) + \xi_{2,bl}(\set{S}) \leq k + B_{k}(m)$. Combining these using (\ref{eq:xi_decomp}) gives the desired bound.
\end{proof}
}

\subsubsection{Asymptotics}
Finally, we consider the case of large graphs. We prove a variant of Proposition \ref{propn:GLR_big_N}.
\begin{propn}
\label{propn:GLR_big_N_bl}
As $N \to \infty$, condition (\ref{eq:GLR_exist_thm_B_constraint_bl}) in Theorem \ref{thm:main_GLR_bl} holds w.h.p. . Furthermore,
\begin{align}
        r &\overset{p}{\to} 1, &
        {m_{opt\_bl}}\cdot{{N}}^{-\frac{1}{2}} &\overset{p}{\to} 1, &
    \mu_{ub\_bl}\lambda_{2} &\overset{p}{\to} +\infty.
\end{align}
    Assume $\frac{k}{N}$ is fixed and choose $\mu = \frac{c}{\lambda_{2}}$,  or $\frac{c}{\lambda_{N}}$, or $\frac{c}{\sqrt{\lambda_{2}\lambda_{N}} }$ for optimal bias-variance trade-off 
 at $\set{S} = \set{N}$ \cite{chen2017GLRbias}, then
    \begin{align}
        \tau_{GLR\_bl} \to (1+2c)^{-1}.
    \end{align}
\end{propn}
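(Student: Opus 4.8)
The plan is to follow the proof of Proposition~\ref{propn:GLR_big_N} almost verbatim, substituting the $k$-truncated quantities $B(m)<k$, $\mu_{ub\_bl}$ and $\tau_{lb\_bl}$ for their full-band analogues. The single spectral input I would use is the same as there: for connected Erdős–Rényi graphs with fixed $p\in(0,1]$, \cite[Theorem 1]{jiang2012low} gives $\lambda_2 = Np(1-o(1))$ and $\lambda_N = Np(1+o(1))$ w.h.p.\ (when $p=1$ the graph is complete and $\lambda_2=\cdots=\lambda_N=N$). Hence $\lambda_N/\lambda_2 \overset{p}{\to} 1$, and since $\lambda_2 \le \lambda_i \le \lambda_N$ for every $2\le i\le N$, I get $\sup_{2\le i\le N}|\lambda_i/\lambda_2-1| \le \lambda_N/\lambda_2 - 1 \overset{p}{\to} 0$, i.e.\ uniform concentration of the whole bulk at $Np$. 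From this, $r = r_1 = (\lambda_N+\lambda_2)^2/(4\lambda_N\lambda_2) \overset{p}{\to} 1$; since $1\le r_i\le r_1$ for all $i$ (AM--GM, with the extreme ratio attained at $i=1$), all $r_i \overset{p}{\to} 1$ uniformly, so $m \le \rho(m) \le rm$ and $N/m+m-1 \le B(m) \le r(N/m+m-1)$.

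Next I would extract the consequences for $m_{opt}$, $B(m_{opt})$, and the condition $B(m)<k$. By Remark~\ref{remark:mopt}, $m_{opt}\in[\lfloor\sqrt{N}\rfloor,\lceil\sqrt{rN}\rceil]$, so $m_{opt}\cdot N^{-1/2}\overset{p}{\to}1$, and $B(m_{opt}) \le B(\lceil\sqrt{N}\rceil) \le 2r\sqrt{N} = 2\sqrt{N}(1+o(1))$, hence $B(m_{opt}) = \mathcal{O}(\sqrt{N})$ w.h.p. Since $k/N$ is fixed, $k$ is linear in $N$, so $B(m_{opt})<k$ for $N$ large, which is condition~(\ref{eq:GLR_exist_thm_B_constraint_bl}) of Theorem~\ref{thm:main_GLR_bl} at $m_{opt}$. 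For $\mu_{ub\_bl}(m_{opt}) = \lambda_k^{-1}\bigl(\sqrt{k/B(m_{opt})}-1\bigr)$ I would multiply by $\lambda_2$: $\mu_{ub\_bl}(m_{opt})\lambda_2 = (\lambda_2/\lambda_k)\bigl(\sqrt{k/B(m_{opt})}-1\bigr)$ with $\lambda_2/\lambda_k\overset{p}{\to}1$ and $k/B(m_{opt}) = \Theta(\sqrt{N})\to\infty$, so $\mu_{ub\_bl}(m_{opt})\lambda_2\overset{p}{\to}+\infty$. This also shows each of $\mu=c/\lambda_2$, $c/\lambda_N$, $c/\sqrt{\lambda_2\lambda_N}$ (with $c$ a fixed positive constant) eventually satisfies $\mu<\mu_{ub\_bl}(m_{opt})$, since $\mu\lambda_2\in\{c,\;c\lambda_2/\lambda_N,\;c\sqrt{\lambda_2/\lambda_N}\}$ stays bounded (all $\to c$) while $\mu_{ub\_bl}(m_{opt})\lambda_2\to\infty$; so Theorem~\ref{thm:main_GLR_bl} applies and $\tau_{GLR\_bl}=\tau_{lb\_bl}(\mu,m_{opt})$.

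Finally I would take the limit inside $\tau_{lb\_bl}(\mu,m_{opt})$. For each of the three $\mu$, the uniform bulk concentration gives $\mu\lambda_i\overset{p}{\to}c$ uniformly over $2\le i\le k$, while the $i=1$ term has $\mu\lambda_1=0$ and weight $1/k\to0$; hence $\frac1k\sum_{i=1}^k(1+\mu\lambda_i)^{-2}\overset{p}{\to}(1+c)^{-2}$, $\frac1k\sum_{i=1}^k\bigl(1-(1+\mu\lambda_i)^{-1}\bigr)^2\overset{p}{\to}(c/(1+c))^2$, and $B(m_{opt})/k=\mathcal{O}(N^{-1/2})\to0$. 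Substituting and simplifying $\frac{(1+c)^{-2}}{1-(c/(1+c))^2}=\frac{1}{(1+c)^2-c^2}=\frac{1}{1+2c}$ gives $\tau_{GLR\_bl}\overset{p}{\to}(1+2c)^{-1}$.

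The part I expect to be the main obstacle is that, unlike Proposition~\ref{propn:GLR_big_N}, $\lambda_k$ and the averages in $\tau_{lb\_bl}$ reach into the spectral bulk ($k=\Theta(N)$) rather than only the edge eigenvalues $\lambda_2,\lambda_N$; the whole argument hinges on the Erdős–Rényi Laplacian spectrum concentrating tightly enough that the entire bulk is $Np(1+o(1))$, giving $\mu\lambda_i\to c$ uniformly over $i\le k$. Once that uniformity is in place, what remains is routine: intersecting the finitely many high-probability events (for $\lambda_2$, $\lambda_N$, and thus $r$, $m_{opt}$, and $B(m_{opt})<k$) and checking that uniform convergence of $\mu\lambda_i$ plus the vanishing $i=1$ term suffices to pass to the limit in the normalized partial sums.
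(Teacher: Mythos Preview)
Your proposal is correct and follows essentially the same approach as the paper: the paper's proof simply states that the argument of Proposition~\ref{propn:GLR_big_N} ``adapts immediately to the bandlimited case,'' and the only extra ingredient it singles out is checking $B(m_{opt})/k \to 0$ (so condition~(\ref{eq:GLR_exist_thm_B_constraint_bl}) holds), which you handle identically via $B(m_{opt})=\mathcal{O}(\sqrt{N})$ and $k=\Theta(N)$. The ``main obstacle'' you flag---uniform control of $\mu\lambda_i$ over $2\le i\le k$---is exactly the sandwich $\lambda_2\le\lambda_i\le\lambda_N$ already used in the full-band proof, so no new idea is needed.
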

\begin{proof}
    See Appendix \ref{app:Proof_GLR_big_N_bl}.
\end{proof}

{ We see that the parameters asymptotically behave the same as the non-bandlimited parameters for large Erdős–Rényi graphs.}

\section{Experimental Results under $k$-Bandlimited Noise}
\label{app:Experiments_Bandlimited}
In this section we provide results on synthetic datasets under $k$-bandlimtied noise.

\begin{figure*}%
    \centering
    \begin{subfigure}{0.6\columnwidth}
    \resizebox{\width}{0.62\columnwidth}{
    \includegraphics[width=\columnwidth]{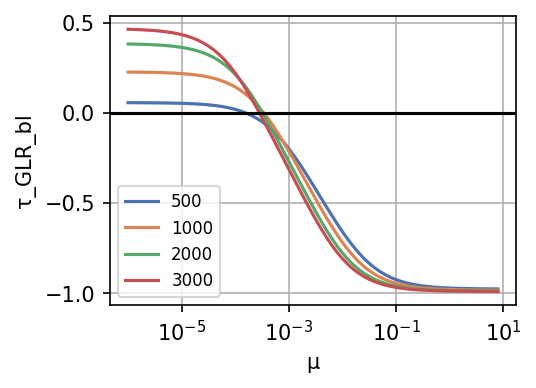}}
    \caption{Erdős–Rényi ($\tau_{GLR\_bl}$)}
    \label{tau_GLR_bl_er}
    \end{subfigure}
    \hfill
    \begin{subfigure}{0.6\columnwidth}
    \resizebox{\width}{0.62\columnwidth}{
    \includegraphics[width=\columnwidth]{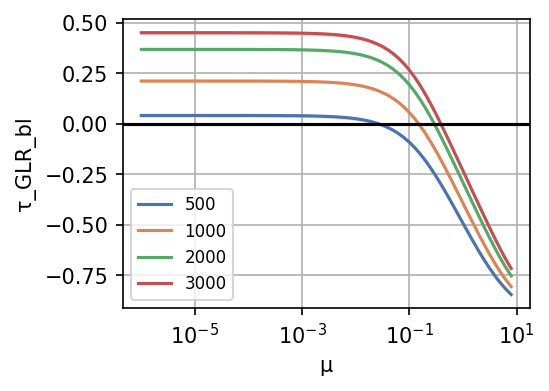}}
    \caption{Barabási-Albert ($\tau_{GLR\_bl}$)}%
    \label{tau_GLR_bl_BA}%
    \end{subfigure}
    \hfill%
    \begin{subfigure}{0.6\columnwidth}
    
    \resizebox{\width}{0.62\columnwidth}{
    \includegraphics[width=\columnwidth]{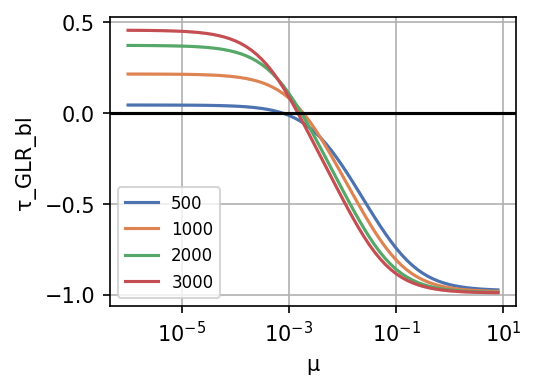}}
    \caption{SBM ($\tau_{GLR\_bl}$)}%
    \label{tau_GLR_bl_SBM}%
    \end{subfigure}%
    \caption{$\tau_{GLR\_bl}$ for different random graph models (\#vertices = colour, bandwidth = $\frac{\text{\# vertices}}{10}$).}
\label{GLR_Threshold_plots_bl}
\end{figure*}

\begin{figure*}%
    \centering
    \begin{subfigure}{0.6\columnwidth}
    \resizebox{\width}{0.62\columnwidth}{
    \includegraphics[width=\columnwidth]{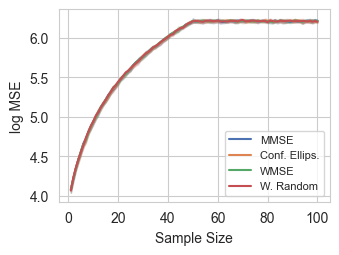}}
    \caption{Bandlimited noise, SNR = $10^{-1}$}
    \label{bandlimited_MSE_subfiga}
    \end{subfigure}\hfill
    \begin{subfigure}{0.6\columnwidth}
    \resizebox{\width}{0.62\columnwidth}{
    \includegraphics[width=\columnwidth]{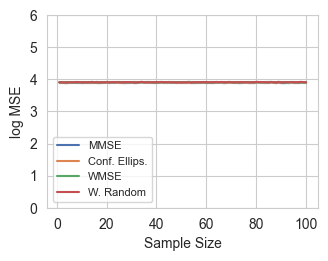}}
    \caption{Bandlimited noise, SNR = $1$}%
    \label{bandlimited_MSE_subfigb}%
    \end{subfigure}\hfill%
    \begin{subfigure}{0.6\columnwidth}
    \resizebox{\width}{0.62\columnwidth}{
    \includegraphics[width=\columnwidth]{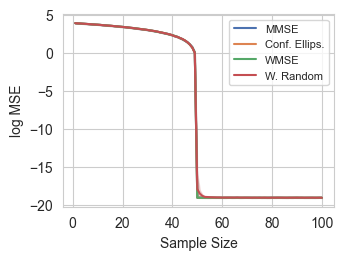}}
    \caption{Bandlimited noise, SNR = $10^{10}$}%
    \label{bandlimited_MSE_subfigc}%
    \end{subfigure}%
    \caption{Average MSE under LS on ER graphs (\#vertices=500, bandwidth = 50).}
\label{LS_ER_MSE_fig_bl}
\end{figure*}

\begin{figure*}%
    \centering
    \begin{subfigure}{0.6\columnwidth}
    \resizebox{\width}{0.62\columnwidth}{
    \includegraphics[width=\columnwidth]{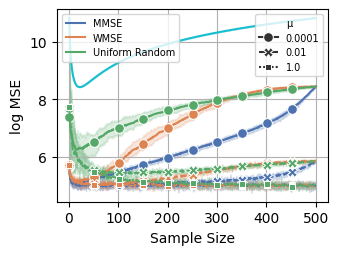}}
    \caption{Bandlimited noise, SNR = $10^{-2}$}
    \label{bandlimited_GLR_MSE_subfiga}
    \end{subfigure}\hfill
    \begin{subfigure}{0.6\columnwidth}
    \resizebox{\width}{0.62\columnwidth}{
    \includegraphics[width=\columnwidth]{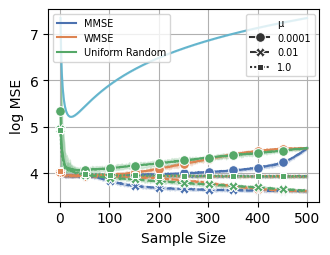}}
    \caption{Bandlimited noise, SNR = $\frac{1}{2}$}%
    \label{bandlimited_GLR_MSE_subfigb}%
    \end{subfigure}\hfill%
    \begin{subfigure}{0.6\columnwidth}
    \resizebox{\width}{0.62\columnwidth}{
    \includegraphics[width=\columnwidth]{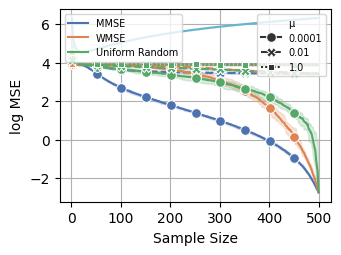}}
    \caption{Bandlimited noise, SNR = $10^{10}$}%
    \label{bandlimited_GLR_MSE_subfigc}%
    \end{subfigure}%
    \caption{Average MSE under GLR on ER graphs (\#vertices=500, bandwidth = 50). Line without markers is an upper bound.}
\label{GLR_ER_MSE_fig_bl}
\end{figure*}

\subsection{\texorpdfstring{$\tau$ and $\tau_{GLR}$ plots}{\texttau and \texttau\_GLR plots}}

\subsubsection{LS}
We do not provide a plot as in this case, as our theoretical results provide the magnitude of $\tau$: by Corollary \ref {corr:LS_bandlimited_noise_sample_only_k}, $\tau(\set{S},v) = 1$ for $|\set{S}| \leq k$ under sequential noiseless-optimal sampling. Corollary \ref{corr:LS_bandlimited_noise_sample_only_k} overall says something stronger, i.e. we can always reduce MSE by reducing sample size if $\text{SNR} < 1$ at any sample size.

\subsubsection{GLR}
Our purpose in showing bandlimited variants is to show that reducing sample size can reduce MSE even under bandlimited noise. Figs. \ref{tau_GLR_bl_er}- \ref{tau_GLR_bl_SBM} show the same overall trend as Figs. \ref{tau_GLR_er}-\ref{tau_GLR_SBM}, i.e. $\tau_{GLR\_bl}$ can be positive for ER, { BA} and SBM graphs, giving evidence to our claim. As with the full-band case, Fig. \ref{tau_GLR_bl_er} validates our asymptotic result for ER graphs (Proposition \ref{propn:GLR_big_N_bl}).

\subsection{MSE plots}
The MSE plots demonstrate the validity of our theoretical results linking MSE and sample size.

\subsubsection{LS}
Figs. \ref{bandlimited_MSE_subfiga}-\ref{bandlimited_MSE_subfigc} show how MSE behaves as sample size increases. Fig. \ref{bandlimited_MSE_subfiga} shows that under high noise, MSE increases with sample size when sample size is no larger than bandwidth, and is constant beyond that. Fig. \ref{bandlimited_MSE_subfigb} shows at an SNR of $\tau_{LS\_bl} = 1$, MSE is constant with sample size. Finally, Fig. \ref{bandlimited_MSE_subfigc} shows that in the almost noiseless case MSE decreases with sample size. Fig. \ref{bandlimited_MSE_subfiga} demonstrates Corollary \ref{corr:LS_bandlimited_noise_big_variance} by showing that for $\textrm{SNR} < \tau_{LS\_bl}=1$ the MSE is increasing with sample size. In all cases, MSE remains unchanged for sample sizes exceeding the bandlimit $k$, demonstrating Corollary \ref{corr:LS_bandlimited_noise_sample_only_k}.

\subsubsection{GLR}
Our observations about Figs. \ref{GLR_MSE_subfiga}-\ref{GLR_MSE_subfigc}  and Theorem \ref{thm:main_GLR_exist} also apply to Figs. \ref{bandlimited_GLR_MSE_subfiga}-\ref{bandlimited_GLR_MSE_subfigc} and Theorem \ref{thm:main_GLR_bl}; specifically, for sufficiently small $\mu$ and SNR, MSE is minimised at sample sizes well below $N$ under bandlimited noise and $\tau_{GLR\_bl}$ is a lower bound for $\tau(\set{N},\set{S}^{c})$ rather than an exact characterisation.

\subsection{Checking conditions}
As in the full-band case, our theorems on GLR rely on conditions around graph invariants.  We sample graphs from each random graph model to empirically show the probability the conditions of Theorem \ref{thm:main_GLR_bl} are met at a sample size of $m_{opt}$ for some $\mu > 0$:
\begin{table}[h]
\caption{Probability theorem conditions are met.}
    \begin{center}
        \begin{tabular}{|l|c|c|c|}
     \hline
       & \textbf{ER} & \textbf{SBM} & \textbf{BA} \\ 
     \hline
     Theorem \ref{thm:main_GLR_bl} conditions met & 100\% & 100\% & 99.4\% \\ \hline
        \end{tabular}
    \end{center}
    \label{tbl:empirical_probabilities_conditions_bl}
\end{table}

As with full-band noise, Proposition \ref{propn:GLR_big_N_bl} shows the conditions in Theorem \ref{thm:main_GLR_bl} hold w.h.p. for ER graphs as $N \to \infty$, and Table \ref{tbl:empirical_probabilities_conditions_bl} shows empirically that the conditions hold under $k$-bandlimited noise noise (Theorem \ref{thm:main_GLR_exist}) for all ER and SBM graphs tested with $500$ vertices, and most BA graphs tested.

\section{Bias-Variance Decomposition}
\label{app:bias-variance}
We present a proof of the Bias-Variance decomposition we use. We assume that $\vect{x}$ and $\vect{\epsilon}$ are independent. Let $\hat{\vect{x}}$ be any reconstruction of the signal $\vect{x}$, then
\begin{align}
    \textrm{MSE}_{\set{S}} &= \expect[\vect{x}]{\expect[\vect{\epsilon}]{\sqnormvec{\vect{x} - \hat{\vect{x}}}}}\\
    &= \expect[\vect{x}]{\expect[\vect{\epsilon}]{\sqnormvec{\vect{x} - \expect[\vect{\epsilon}]{\hat{\vect{x}}} + \expect[\vect{\epsilon}]{\hat{\vect{x}}} - \hat{\vect{x}}}}}\\
    &= \expect[\vect{x}]{2 {\expect[\vect{\epsilon}]{ {\left(\vect{x} - \expect[\vect{\epsilon}]{\hat{\vect{x}}} \right)^{T}} {\left(\expect[\vect{\epsilon}]{\hat{\vect{x}}} - \hat{\vect{x}}\right) }}} } \label{eq:bias_variance_decomp:cross_terms} \\
    &+ \expect[\vect{x}]{{\sqnormvec{\vect{x} - \expect[\vect{\epsilon}]{\hat{\vect{x}}}}} + {\expect[\vect{\epsilon}]{\sqnormvec{ \expect[\vect{\epsilon}]{\hat{\vect{x}}} - \hat{\vect{x}}}}}} \label{eq:bias_var_decomp1}.
\end{align}

\noindent Note that by the properties of expectation,
\begin{flalign}
 &&   \expect[\vect{\epsilon}]{\expect[\vect{\epsilon}]{\hat{\vect{x}}}^{T}{\hat{\vect{x}}}} = {\expect[\vect{\epsilon}]{\hat{\vect{x}}}^{T}\expect[\vect{\epsilon}]{\hat{\vect{x}}}} &= \expect[\vect{\epsilon}]{\expect[\vect{\epsilon}]{\hat{\vect{x}}}^{T}\expect[\vect{\epsilon}]{\hat{\vect{x}}}} \\
 \text{so} &&  \expect[\vect{\epsilon}]{ { \expect[\vect{\epsilon}]{\hat{\vect{x}}}^{T}} {\left(\expect[\vect{\epsilon}]{\hat{\vect{x}}} - \hat{\vect{x}}\right) }} &= 0
\end{flalign}
and as the value of the signal $\vect{x}$ is independent to the value of $\vect{\epsilon}$, 
\begin{align}
 \expect[\vect{\epsilon}]{ {\vect{x}^{T}} {\left(\expect[\vect{\epsilon}]{\hat{\vect{x}}} - \hat{\vect{x}}\right) }} = \vect{x}^{T}\expect[\vect{\epsilon}]{  {\expect[\vect{\epsilon}]{\hat{\vect{x}}} - \hat{\vect{x}} }} = 0 \\
 {\expect[\vect{\epsilon}]{ {\left(\vect{x} - \expect[\vect{\epsilon}]{\hat{\vect{x}}} \right)^{T}} {\left(\expect[\vect{\epsilon}]{\hat{\vect{x}}} - \hat{\vect{x}}\right) }}} = 0
\end{align}
for any value of $\vect{x}$. Therefore
\begin{equation}
    \textrm{MSE}_{\set{S}} = \expect[\vect{x}]{\underbrace{\sqnormvec{\vect{x} - \expect[\vect{\epsilon}]{\hat{\vect{x}}}}}_{\text{Bias}(\hat{\vect{x}},\vect{x})^{2}} + \underbrace{\expect[\vect{\epsilon}]{\sqnormvec{ \expect[\vect{\epsilon}]{\hat{\vect{x}}} - \hat{\vect{x}}}}}_{\text{Var}(\hat{\vect{x}})}}.
\end{equation}

Note that if $\vect{x}$ was fixed (which can be understood as being drawn from a distribution with one possible value), $\vect{\epsilon}$ and $\vect{x}$ must be independent, and therefore
\begin{align}    
\textrm{MSE}_{\set{S}}\big\vert_{\vect{x}} &= \expect[\vect{\epsilon}]{\sqnormvec{\hat{\vect{x}} - \vect{x}} \smallskip \mid \enspace \set{S} \textrm{ observed} } \\
&= {\underbrace{\sqnormvec{\vect{x} - \expect[\vect{\epsilon}]{\hat{\vect{x}}}}}_{\text{Bias}(\hat{\vect{x}},\vect{x})^{2}} + \underbrace{\expect[\vect{\epsilon}]{\sqnormvec{ \expect[\vect{\epsilon}]{\hat{\vect{x}}} - \hat{\vect{x}}}}}_{\text{Var}(\hat{\vect{x}})}}.
\end{align}

\section{Extending Results to Other Settings - Proof of Proposition \ref{propn:averages_generalise_to_forall}}
\label{app:every_x}
In this proof we refer to the distribution of $\vect{x}$ as the signal model and the distribution of $\vect{\epsilon}$ as the noise model.

We apply Theorem \ref{main_general}; we note that all results in Section \ref{sec:general} don't require $\expect{\vect{x}} = 0$, so apply when $\vect{x}$ is a fixed nonzero signal.  By (\ref{eq:def_var_gen}),
\begin{align}
    \Delta_{2}(\set{S},\set{T}) &= \xi_{2}(\set{S}) - \xi_{2}(\set{S} \backslash \set{T}) \\
    &= \sqfrob{\matr{R}_{\set{S}}\vectsub[S]{\epsilon}} - \sqfrob{\matr{R}_{\set{S}}\msubgen[\set{S} \backslash \set{T}]{\vect{\epsilon}}}
\end{align}
therefore $\Delta_{2}(\set{S},\set{T})$ is not a function of the signal model. %

Pick a signal model $\vect{x}$ with  $\expect[\vect{x}]{\sqnormvec{\vect{x}}} < \infty$. Then let
\begin{align}
    \tau'(\set{S},\set{T}) = \begin{dcases}
  \frac{\expect{\sqnormvec{\vect{x}}}} {\expect{\sqnormvec{\vect{\epsilon}}}} \cdot \frac{\Delta_2(\set{S}, \set{T})}{- \Delta_1(\set{S}, \set{T})}  &\text{if }\Delta_{1} < 0 \\
    \infty &\text{otherwise} 
    \end{dcases}
\end{align}

As ${\Delta_2(\set{S}, \set{T})} > 0$, $\tau'(\set{S},\set{T}) > 0$.

We now apply Theorem \ref{main_general}. We look at the three cases
\subsubsection{$\Delta_{1} < 0$}
In this case, $\tau'$ is $\tau$ in Theorem \ref{main_general}, and the Theorem directly gives the proposition.
\subsubsection{$\Delta_{1} > 0$}
This case in Theorem \ref{main_general} results in negative $\tau$, and thus $\text{SNR} \geq 0 > \tau $ is always fulfilled, and is equivalent to $\text{SNR} < \infty$. Thus we set $\tau' = \infty$.

\subsubsection{$\Delta_{1} = 0$}
As $\Delta_{2} > 0$ by assumption, this case always holds and reduces to the case where $\Delta_{1} > 0$.

Therefore under this new signal model if $\text{SNR} < \tau'(\set{S},\set{T})$, $\text{MSE}_{\set{S}} > \text{MSE}_{\set{S} \backslash \set{T}}$.

\section{Proof of Table \ref{tbl:Delta_Behaviour}}
\label{app:table_delta_proof}
{%
For GLR reconstruction, Table \ref{tbl:GLR_Reconstruction_Delta} does not rule out any option. In experiments with Erdős–Rényi graphs, all options marked as $\checkmark$ in Figure \ref{tbl:GLR_Reconstruction_Delta} can be observed at some sample size without too much difficulty under random or WMSE sampling.

The options marked $\sim$ do not turn up in such experiments, but may happen for very specific values of $\mu$ dependent on the graph and sampling scheme.

For LS reconstruction, we decompose the pattern in Table \ref{tbl:LS_Reconstruction_Delta} into the following statements: 
\begin{itemize}
    \item $\Delta_{1} \leq 0$
    \item $\Delta_{1} < 0 $ if and only if $ \Delta_{2} > 0$
\end{itemize}
}
\subsection{Under LS reconstruction, \texorpdfstring{$\Delta_{1} \leq 0$}{\textDelta\textoneinferior =< 0}}
\label{app:delta_1_non_positive}
For LS we have:  \[\matr{R}_{\set{S}} = \matrsubU{N}[\matr{U}]_{\set{S},\set{K}}^{\dagger}.\]

\begin{lemma}
    \label{lemma:LS_xi_1_is_rank}
    For LS, 
    \begin{align}
        \xi_{1}(\set{S}) = k - \text{rank}([\matr{U}]_{\set{S},\set{K}}),\label{eq:Lemma_xi_1_is_rank:rk}\\
        \Delta_{1}(\set{S},v) \in \{0, -1\} \label{eq:Lemma_xi_1_is_rank:discrete}.
    \end{align}
\end{lemma}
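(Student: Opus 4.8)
The plan is to work directly from the definition $\xi_1(\set{S}) = \sqfrob{\matrsubU{N} - \matr{R}_{\set{S}}\matrsubU{S}}$ together with the LS formula $\matr{R}_{\set{S}} = \matrsubU{N}\matrsubU{S}^{\dagger}$. Substituting, we get $\xi_1(\set{S}) = \sqfrob{\matrsubU{N}(\matr{I}_k - \matrsubU{S}^{\dagger}\matrsubU{S})}$. Since the columns of $\matrsubU{N}$ are orthonormal (they are $k$ columns of the orthogonal matrix $\matr{U}$), left-multiplication by $\matrsubU{N}$ preserves the Frobenius norm, so $\xi_1(\set{S}) = \sqfrob{\matr{I}_k - \matrsubU{S}^{\dagger}\matrsubU{S}}$. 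Now $\matrsubU{S}^{\dagger}\matrsubU{S}$ is the orthogonal projection onto the row space of $\matrsubU{S}$, hence is itself an orthogonal projection matrix of rank $\rank{\matrsubU{S}}$; therefore $\matr{I}_k - \matrsubU{S}^{\dagger}\matrsubU{S}$ is an orthogonal projection of rank $k - \rank{\matrsubU{S}}$, whose squared Frobenius norm equals its rank. This gives \eqref{eq:Lemma_xi_1_is_rank:rk}.

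For \eqref{eq:Lemma_xi_1_is_rank:discrete}, I would combine \eqref{eq:Lemma_xi_1_is_rank:rk} with the elementary fact that deleting a single row from a matrix changes its rank by either $0$ or $1$. Concretely, $\Delta_1(\set{S},v) = \xi_1(\set{S}) - \xi_1(\set{S}\backslash\{v\}) = \rank{\msubgen[\set{S}\backslash\{v\},\set{K}]{\matr{U}}} - \rank{\matrsubU{S}}$, and since $\msubgen[\set{S}\backslash\{v\},\set{K}]{\matr{U}}$ is obtained from $\matrsubU{S}$ by deleting the row indexed by $v$, we have $\rank{\matrsubU{S}} - 1 \leq \rank{\msubgen[\set{S}\backslash\{v\},\set{K}]{\matr{U}}} \leq \rank{\matrsubU{S}}$, so $\Delta_1(\set{S},v) \in \{0,-1\}$.

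The only real subtlety — and the step I'd be most careful about — is justifying that $\matrsubU{S}^{\dagger}\matrsubU{S}$ is an orthogonal projector with $\sqfrob{\matr{I} - \matrsubU{S}^{\dagger}\matrsubU{S}}$ equal to its rank; this is a standard property of the pseudoinverse ($\matr{A}^{\dagger}\matr{A}$ is the orthogonal projection onto $\textrm{row}(\matr{A})$, idempotent and symmetric, so its complement $\matr{I}-\matr{A}^{\dagger}\matr{A}$ is too, and for any symmetric idempotent $\matr{P}$ we have $\sqfrob{\matr{P}} = \trace{\matr{P}^T\matr{P}} = \trace{\matr{P}} = \rank{\matr{P}}$). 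I should also note this works uniformly regardless of whether $|\set{S}| < k$ or $|\set{S}| \geq k$, since the pseudoinverse expression \eqref{eq:defn_RS:LS} was defined to hold in all cases. With these ingredients the proof is short; I would present it essentially as written above.
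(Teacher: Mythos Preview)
Your proposal is correct and follows essentially the same approach as the paper's proof: both substitute the LS reconstruction formula, use orthonormality of the columns of $\matrsubU{N}$ to drop it from the Frobenius norm, identify $\matrsubU{S}^{\dagger}\matrsubU{S}$ as the orthogonal projection onto the row space of $\matrsubU{S}$, and then use that the squared Frobenius norm of an orthogonal projection equals its rank; the second part is handled identically via the row-deletion rank argument.
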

\begin{proof}
As $\sqfrob{\matrsubU{N}\matr{A}} = \sqfrob{\matr{A}}$ for any matrix $\matr{A}\in \mathbb{R}^{k \times k}$, 
    \begin{align}
        \xi_{1}(\set{S}) &= || \matrsubU{N} - \matr{R}_{\set{S}}[\matr{U}]_{\set{S},\set{K}} ||^{2}_{F} \\
        &= || \matrsubU{N} - \matrsubU{N}([\matr{U}]_{\set{S},\set{K}})^{\dagger}[\matr{U}]_{\set{S},\set{K}} ||^{2}_{F} \\
        &= || \matr{I}_{k} -([\matr{U}]_{\set{S},\set{K}})^{\dagger}[\matr{U}]_{\set{S},\set{K}}||^{2}_{F}
    \end{align}
    Let $\matr{\Pi} = ([\matr{U}]_{\set{S},\set{K}})^{\dagger}[\matr{U}]_{\set{S},\set{K}}$. $\matr{\Pi}$ is of the form $\matr{A}^{\dagger}\matr{A}$, so is a symmetric orthogonal projection onto the range of $([\matr{U}]_{\set{S},\set{K}})^{T}$ \cite[p.~290]{golub13}. Orthogonal projections are idempotent ($\matr{\Pi} = \matr{\Pi}^{2}$) hence have eigenvalues which are $0$ or $1$, and therefore $\text{tr}(\matr{\Pi}) = \text{rank}(([\matr{U}]_{\set{S},\set{K}})^{T}) = \text{rank}([\matr{U}]_{\set{S},\set{K}})$. We then have:
    \begin{align}
        \xi_{1}(\set{S}) &= ||\matr{I}_{k} - \matr{\Pi}||^{2}_{F} \\
        &= \text{tr}((\matr{I}_{k} - \matr{\Pi})(\matr{I}_{k} - \matr{\Pi})^{T})\\
        & =\text{tr}((\matr{I}_{k} - \matr{\Pi})(\matr{I}_{k} - \matr{\Pi})) \\
        &= \text{tr}(\matr{I}_{k} - 2\matr{\Pi} + \matr{\Pi}^{2}) \\
        &= \text{tr}(\matr{I}_{k} - \matr{\Pi}) \\
        &= \text{tr}(\matr{I}_{k}) - \text{tr}(\matr{\Pi}) \\
        &= k - \text{rank}([\matr{U}]_{\set{S},\set{K}})
    \end{align}
    proving (\ref{eq:Lemma_xi_1_is_rank:rk}). We now prove (\ref{eq:Lemma_xi_1_is_rank:discrete}). Removing a vertex from $\set{S}$ removes a row from $[\matr{U}]_{\set{S},\set{K}}$, reducing the rank by $0$ or $1$, so 
\begin{align}
\Delta_{1}(\set{S},v) &= \xi_{1}(\set{S}) - \xi_{1}(\set{S} \backslash \{v\})  \\
&= - \text{rank}([\matr{U}]_{\set{S},\set{K}}) + \text{rank}([\matr{U}]_{\set{S} \backslash \{v\}, \set{K}})  \\
&\in \{0, -1\}. 
\end{align} 
\end{proof}

Non-positivity of $\Delta_{1}$ immediately follows from Lemma \ref{lemma:LS_xi_1_is_rank}

\subsection{Under LS reconstruction, \texorpdfstring{$\Delta_{1} < 0$ if and only if $ \Delta_{2} > 0$
}{\textDelta\textoneinferior < 0 if and only if \textDelta\texttwoinferior > 0}}
\label{app:Delta_LS_opposite_signs}
We first need the following lemmas.

\begin{lemma}
\label{lemma:simplify_xi_2}
\begin{equation}
    \xi_{2}(\set{S}) = \sum_{\lambda_{i}^{\set{S}} \neq 0} { \frac{1}{\lambda_{i}^{\set{S}}}  }
\end{equation}
    where $\lambda_{i}^{\set{S}}$ is the $i^{th}$ eigenvalue of $\projbl[S]$.
\end{lemma}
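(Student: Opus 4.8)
The plan is to unwind the definition of $\xi_2$ for LS under full-band noise and reduce everything to a statement about singular values. Recall from (\ref{eq:xi_2_def}) that $\xi_2(\set{S}) = \sqfrob{\matr{R}_{\set{S}}}$ and from (\ref{eq:defn_RS:LS}) that $\matr{R}_{\set{S}} = \matrsubU{N}\matrsubU{S}^{\dagger}$. The first step is to use that $\matrsubU{N} \in \mathbb{R}^{N\times k}$ has orthonormal columns, so $\matrsubU{N}^{T}\matrsubU{N} = \matr{I}_k$ and hence $\sqfrob{\matrsubU{N}\matr{A}} = \trace{\matr{A}^{T}\matrsubU{N}^{T}\matrsubU{N}\matr{A}} = \sqfrob{\matr{A}}$ for any $\matr{A}\in\mathbb{R}^{k\times k}$ (this is the same orthonormality fact already used in the proof of Lemma \ref{lemma:LS_xi_1_is_rank}). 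Applying this with $\matr{A} = \matrsubU{S}^{\dagger}$ gives $\xi_2(\set{S}) = \sqfrob{\matrsubU{S}^{\dagger}}$.

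Next, write $\matr{M} = \matrsubU{S} \in \mathbb{R}^{|\set{S}|\times k}$ and expand $\sqfrob{\matr{M}^{\dagger}}$ in terms of the singular values of $\matr{M}$. Since the nonzero singular values of $\matr{M}^{\dagger}$ are exactly the reciprocals of the nonzero singular values of $\matr{M}$ (standard SVD property of the Moore–Penrose pseudoinverse), we get
\begin{equation}
    \xi_2(\set{S}) = \sqfrob{\matr{M}^{\dagger}} = \sum_{\sigma_i(\matr{M})\neq 0} \frac{1}{\sigma_i(\matr{M})^{2}}. \nonumber
\end{equation}

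The final step is to identify $\sigma_i(\matr{M})^2$ with the eigenvalues $\lambda_i^{\set{S}}$. Observe that $\projbl = \matrsubU{N}\matrsubU{N}^{T}$, so taking the principal submatrix on $\set{S}$ gives $\projbl[S] = \matrsubU{S}\matrsubU{S}^{T} = \matr{M}\matr{M}^{T}$. The eigenvalues of $\matr{M}\matr{M}^{T}$ are precisely the squared singular values of $\matr{M}$, together with zeros; in particular the nonzero eigenvalues $\lambda_i^{\set{S}}$ of $\projbl[S]$ are exactly the $\sigma_i(\matr{M})^2$ that are nonzero. Substituting this into the displayed sum yields $\xi_2(\set{S}) = \sum_{\lambda_i^{\set{S}}\neq 0} 1/\lambda_i^{\set{S}}$, as claimed.

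I expect no serious obstacle here; the only points requiring a little care are (i) being explicit that the orthonormality of the columns of $\matrsubU{N}$ is what collapses $\sqfrob{\matrsubU{N}\matr{A}}$ to $\sqfrob{\matr{A}}$ (so this genuinely uses that we are in the full-band-noise branch of the definition of $\xi_2$), and (ii) matching index conventions between ``nonzero singular values of $\matr{M}$'' and ``nonzero eigenvalues of the principal submatrix $\projbl[S]$'', which is immediate from $\projbl[S] = \matr{M}\matr{M}^{T}$.
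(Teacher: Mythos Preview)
Your proposal is correct and follows essentially the same route as the paper: drop $\matrsubU{N}$ via orthonormality of its columns, express $\sqfrob{\matrsubU{S}^{\dagger}}$ as the sum of squared singular values of the pseudoinverse, and then identify those with the reciprocals of the nonzero eigenvalues of $\projbl[S]=\matrsubU{S}\matrsubU{S}^{T}$. The only nit is that $\matrsubU{S}^{\dagger}\in\mathbb{R}^{k\times |\set{S}|}$, so the orthonormality identity should be stated for any $\matr{A}$ with $k$ rows rather than for $\matr{A}\in\mathbb{R}^{k\times k}$; this does not affect the argument.
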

\begin{proof} As
\begin{align}
\xi_{2}(\set{S}) = ||\matr{R}_{\set{S}}||^{2}_{F}
= ||\matrsubU{N}[\matr{U}]_{\set{S},\set{K}}^{\dagger}||^{2}_{F} = ||[\matr{U}]_{\set{S},\set{K}}^{\dagger}||^{2}_{F} \label{eq:xi_2_is_U_pseudo_frob}
\end{align}
    $\xi_{2}(\set{S})$ is the sum of the squares of the singular values of $([\matr{U}]_{\set{S},\set{K}})^{\dagger}$ \cite[Corollary 2.4.3]{golub13}. The pseudoinverse maps the singular values of $[\matr{U}]_{\set{S},\set{K}}$ onto the singular values of $([\matr{U}]_{\set{S},\set{K}})^{\dagger}$ in the following way \cite[Section 5.5.2]{golub13}:
    \begin{equation}
    \sigma_{i}(([\matr{U}]_{\set{S},\set{K}})^{\dagger}) =
        \begin{cases}
            0 &\textrm{if }\sigma_{i}([\matr{U}]_{\set{S},\set{K}}) = 0 \\
            \sigma_{i}([\matr{U}]_{\set{S},\set{K}})^{-1} &\textrm{otherwise}
        \end{cases}
    \end{equation}
    and the squares of the singular values of $[\matr{U}]_{\set{S},\set{K}}$ are $\lambda_{i}$ \cite[Eq. (8.6.1)]{golub13}. As $[\matr{U}]_{\set{S},\set{K}}[\matr{U}]_{\set{S},\set{K}}^{T} = \projbl[S]$, summing the singular values gives the result.
\end{proof}

\begin{lemma}
\label{lemma:square_to_rect_rank}
\begin{equation}
\rank{\projbl[S]} =  \rank{\matrsubU{S}} \leq k. 
\end{equation}
\end{lemma}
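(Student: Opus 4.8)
The plan is to identify the principal submatrix $\projbl[S]$ with a Gram matrix and then invoke the standard rank identity $\rank{\matr{A}\matr{A}^{T}} = \rank{\matr{A}}$ over $\mathbb{R}$.

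First I would unfold the definition $\projbl = \matrsubU{N}\matrsubU{N}^{T}$ and compute the principal submatrix. Restricting both the row and column index sets to $\set{S}$ in a product $\matr{A}\matr{B}$ amounts to keeping the rows of $\matr{A}$ indexed by $\set{S}$ and the columns of $\matr{B}$ indexed by $\set{S}$; with $\matr{A} = \matrsubU{N}$ and $\matr{B} = \matrsubU{N}^{T}$ this gives $[\matr{U}]_{\set{S},\set{K}}$ for the left block and its transpose for the right block, so that
\begin{equation*}
    \projbl[S] = \matrsubU{S}\matrsubU{S}^{T}.
\end{equation*}

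Next I would record the elementary fact that for any real matrix $\matr{A}$ one has $\ker(\matr{A}^{T}) = \ker(\matr{A}\matr{A}^{T})$: if $\matr{A}\matr{A}^{T}\vect{z} = \vect{0}$ then $\sqnormvec{\matr{A}^{T}\vect{z}} = \vect{z}^{T}\matr{A}\matr{A}^{T}\vect{z} = 0$, hence $\matr{A}^{T}\vect{z} = \vect{0}$, and the reverse inclusion is immediate. By rank–nullity this gives $\rank{\matr{A}\matr{A}^{T}} = \rank{\matr{A}^{T}} = \rank{\matr{A}}$. Applying this with $\matr{A} = \matrsubU{S}$ yields $\rank{\projbl[S]} = \rank{\matrsubU{S}}$. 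Finally, $\matrsubU{S} = [\matr{U}]_{\set{S},\set{K}}$ has exactly $k$ columns, so its rank is at most $k$, which completes the chain of (in)equalities.

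There is no real obstacle here; the statement is a routine linear-algebra verification. The only point requiring a little care is the block-matrix bookkeeping that turns the principal submatrix of $\matrsubU{N}\matrsubU{N}^{T}$ into the Gram matrix $\matrsubU{S}\matrsubU{S}^{T}$, and making sure the rank identity is stated over $\mathbb{R}$ (it can fail over fields of positive characteristic, but all matrices here are real).
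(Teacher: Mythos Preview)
Your proof is correct and follows essentially the same approach as the paper: both first identify $\projbl[S] = \matrsubU{S}\matrsubU{S}^{T}$ and then invoke the standard rank identity $\rank{\matr{A}\matr{A}^{T}} = \rank{\matr{A}}$, with the column-count bound giving $\leq k$. The only cosmetic difference is that you justify the rank identity via the kernel argument $\ker(\matr{A}^{T}) = \ker(\matr{A}\matr{A}^{T})$, whereas the paper cites the singular-value characterisation of rank; both are equally valid one-line proofs of the same elementary fact.
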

\begin{proof}
Remember that $[\matr{\Pi}_{bl(\set{K}}]_{\set{S}}= [\matr{U}]_{\set{S},\set{K}}[\matr{U}]_{\set{S},\set{K}}^T$.

The equality: The number of strictly positive singular values of a matrix is its rank \cite[Corollary 2.4.6]{golub13} and both $[\matr{\Pi}_{bl(\set{K}}]_{\set{S}}= [\matr{U}]_{\set{S},\set{K}}[\matr{U}]_{\set{S},\set{K}}^T$ and $[\matr{U}]_{\set{S},\set{K}}$ have the same number of strictly positive singular values \cite[Eq. (8.6.2)]{golub13}. 

The inequality: $[\matr{U}]_{\set{S},\set{K}}$ has $k$ columns and so $\textrm{column rank}([\matr{U}]_{\set{S},\set{K}})\leq k$ and rank equals column rank.
\end{proof}

We can now prove the overall result:

\begin{lemma}    \label{lemma:LS_delta_1_improvement_means_delta_2_worse}
    For LS, $\Delta_{1} < 0 $ if and only if $ \Delta_{2} > 0$.
\end{lemma}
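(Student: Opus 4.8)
The plan is to translate the statement entirely into a comparison of eigenvalues of $\projbl[S]$ and then invoke Cauchy's interlacing theorem. First I would recall from Lemma \ref{lemma:LS_xi_1_is_rank} that $\xi_{1}(\set{S}) = k - \rank{\matrsubU{S}}$, so writing $r = \rank{\matrsubU{S}}$ and $r' = \rank{[\matr{U}]_{\set{S}\backslash\{v\},\set{K}}}$ we have $\Delta_{1}(\set{S},v) = r' - r$; since $\Delta_{1}(\set{S},v) \in \{0,-1\}$ this means $\Delta_{1}(\set{S},v) < 0 \iff r' = r-1$ and $\Delta_{1}(\set{S},v) = 0 \iff r' = r$. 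By Lemma \ref{lemma:square_to_rect_rank}, $r$ and $r'$ are also the ranks of $\projbl[S]$ and $\projblgen[\set{S}\backslash\{v\}]$. The key structural observation is that $\projblgen[\set{S}\backslash\{v\}]$ is precisely the principal submatrix of the positive semidefinite matrix $\projbl[S]$ obtained by deleting the row and column indexed by $v$.

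Next I would fix notation for the spectra: let $m = |\set{S}|$, let $\mu_{1} \geq \cdots \geq \mu_{m} \geq 0$ be the eigenvalues of $\projbl[S]$ and $\nu_{1} \geq \cdots \geq \nu_{m-1} \geq 0$ those of $\projblgen[\set{S}\backslash\{v\}]$. Because the ranks are $r$ and $r'$, exactly $\mu_{1},\dots,\mu_{r}$ and $\nu_{1},\dots,\nu_{r'}$ are strictly positive, and Lemma \ref{lemma:simplify_xi_2} gives $\xi_{2}(\set{S}) = \sum_{i=1}^{r}\mu_{i}^{-1}$ and $\xi_{2}(\set{S}\backslash\{v\}) = \sum_{i=1}^{r'}\nu_{i}^{-1}$. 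Cauchy's interlacing theorem (see, e.g., \cite{golub13}) then gives $\mu_{i+1} \leq \nu_{i} \leq \mu_{i}$ for $i = 1,\dots,m-1$, after which the result follows from a case split on the sign of $\Delta_{1}(\set{S},v)$.

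In the case $\Delta_{1}(\set{S},v) = 0$ (so $r' = r$, hence $r \leq m-1$), interlacing gives $0 < \nu_{i} \leq \mu_{i}$ for every $i \leq r$, so $\nu_{i}^{-1} \geq \mu_{i}^{-1}$ and summing yields $\xi_{2}(\set{S}\backslash\{v\}) \geq \xi_{2}(\set{S})$, i.e. $\Delta_{2}(\set{S},v) \leq 0$. In the case $\Delta_{1}(\set{S},v) = -1$ (so $r' = r-1$, which forces $r \geq 1$), interlacing gives $\nu_{i} \geq \mu_{i+1} > 0$ for $i \leq r-1$, so $\xi_{2}(\set{S}\backslash\{v\}) = \sum_{i=1}^{r-1}\nu_{i}^{-1} \leq \sum_{i=1}^{r-1}\mu_{i+1}^{-1} = \sum_{j=2}^{r}\mu_{j}^{-1} = \xi_{2}(\set{S}) - \mu_{1}^{-1}$, whence $\Delta_{2}(\set{S},v) \geq \mu_{1}^{-1} > 0$. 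Combining the two cases with $\Delta_{1}(\set{S},v) \in \{0,-1\}$ yields $\Delta_{1}(\set{S},v) < 0 \iff \Delta_{2}(\set{S},v) > 0$.

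The argument is short and there is no deep obstacle; the main care needed is the bookkeeping that links the rectangular matrix $\matrsubU{S}$ to the square matrix $\projbl[S]$ (via Lemma \ref{lemma:square_to_rect_rank} and Lemma \ref{lemma:simplify_xi_2}), applying the interlacing inequalities only on index ranges where the relevant $\mu_{i}$ and $\nu_{i}$ are genuinely nonzero so that passing to reciprocals is legitimate, and checking that $r \geq 1$ whenever $\Delta_{1}(\set{S},v) = -1$ so that the bound $\Delta_{2}(\set{S},v) \geq \mu_{1}^{-1}$ is meaningful.
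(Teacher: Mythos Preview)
Your proposal is correct and follows essentially the same approach as the paper's proof: both reduce to Cauchy's interlacing theorem applied to the principal submatrices $\projbl[S]$ and $\projblgen[\set{S}\backslash\{v\}]$, case-split on whether the rank drops, and in each case pair up eigenvalues so that the reciprocal sums compare in the required direction. Your index bookkeeping is in fact cleaner than the paper's, and your bound $\Delta_{2}(\set{S},v)\geq\mu_{1}^{-1}$ in the $\Delta_{1}=-1$ case is a mild strengthening that the paper does not make explicit here.
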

\begin{proof}
As $\Delta_{1} \in \{0,-1\}$ (Lemma \ref{lemma:LS_xi_1_is_rank}), we instead prove that $\Delta_{1} = 0$ if and only if $ \Delta_{2} \leq 0$.

     Write the eigenvalues of $[\matr{\Pi}_{bl(\set{K})}]_{\set{S} \backslash \{v\}}$ as $\lambda_{1}, \dots, \lambda_{n}$ and the eigenvalues of $[\matr{\Pi}_{bl(\set{K})}]_{\set{S}}$ as $\mu_{1}, \dots \mu_{n+1}$. As $[\matr{\Pi}_{bl(\set{K})}]_{\set{S} \backslash \{v\}}$ is a principal submatrix of $[\matr{\Pi}_{bl(\set{K})}]_{\set{S}}$, by Cauchy's Interlacing Theorem \cite[p.~59]{bhatia2013matrix},
     \begin{equation}
         0 \leq \mu_{1} \leq \lambda_{1} \leq \dots \leq \lambda_{n} \leq \mu_{n+1} \leq 1 \label{eq:MUUM_nonnegative}
     \end{equation} 
    where the outer bounds come from the fact that both matrices are principal submatrices of $\matr{\Pi}_{bl(\set{K})}$, an orthogonal projection matrix.
    
    \subsubsection{\texorpdfstring{$\Delta_{1}= 0 \implies \Delta_{2} \leq 0$}{\textDelta\textoneinferior = 0 implies \textDelta\texttwoinferior =< 0}}
    $\Delta_{1} = 0$ implies  $\textrm{rank}([\matr{U}]_{\set{S},\set{K}}) = \textrm{rank}(\matrsubUGen{\set{S} \backslash\{v\}})$,  so  $\rank{\projbl[S]} = \rank{\projblgen[\set{S} \backslash \{v\}]}$. As the rank is unchanged, $[\matr{\Pi}_{bl(\set{K})}]_{\set{S}}$ has one more zero-eigenvalue than 
 $[\matr{\Pi}_{bl(\set{K})}]_{\set{S} \backslash \{v\}}$. This means:
    \begin{align}
        \mu_{1} = 0 \label{eq:mu_1_is_zero}\\
        \lambda_{i} = 0 \iff \mu_{i+1} = 0 \label{eq:mu_lambda_biject_zeros}
    \end{align} 
    By Cauchy's Interlacing Theorem, $\lambda_{i} \leq \mu_{i+1}$ and so 
    \begin{equation}
        \frac{1}{\lambda_{i}} \geq \frac{1}{\mu_{i+1}} \text{ if } \lambda_{i} \neq 0 \text{ and } \mu_{i+1} \neq 0.  \label{eq:lambda_mu_ineq}
    \end{equation} Therefore
 \begin{equation}
     \sum_{\lambda_{i}^{\set{S}} \neq 0} { \frac{1}{\lambda_{i}^{\set{S}}}} \geq \sum_{\mu_{i}^{\set{S}} \neq 0} { \frac{1}{\mu_{i}^{\set{S}}}} \label{eq:eig_sum_ineq}
 \end{equation}
 as we have the same number of non-zero terms in each of these terms by (\ref{eq:mu_1_is_zero}) and (\ref{eq:mu_lambda_biject_zeros}), and the inequality is proved by summing over the non-zero terms using (\ref{eq:lambda_mu_ineq}).
Equation (\ref{eq:eig_sum_ineq}) is exactly 
\begin{equation}
   \xi_{2}(\set{S} \backslash \{v\}) \geq \xi_{2}(\set{S}).
\end{equation}
Rearranging gives $\Delta_{2} \leq 0$.

\subsubsection{\texorpdfstring{$\Delta_{1} = 0 \impliedby \Delta_{2} \leq 0$}{\textDelta\textoneinferior = 0 <== \textDelta\texttwoinferior =< 0}} We prove the equivalent statement
\begin{equation}
    \Delta_{1} \neq 0 \implies \Delta_{2} > 0 .
\end{equation}
By Lemma \ref{lemma:LS_xi_1_is_rank} , if $\Delta_{1} \neq 0$ then $ \Delta_{1} = -1$. This means that  $\rank{\matrsubU{S}} -1 = \rank{\matrsubUGen{\set{S} \backslash \{v\}}}$, therefore $\projbl[S]$ has one more non-zero eigenvalue than $[\matr{\Pi}_{bl(\set{K})}]_{\set{S} \backslash \{v\}}$. This means:
\begin{align}
    \mu_{n+1} > 0 \label{eq:mu_nonzero} \\
    \lambda_{i} \neq 0 \iff \mu_{i} \neq 0 \label{eq:lambda_mu_match2}
\end{align}
By Cauchy's interlacing theorem, $\lambda_i \geq \mu_i$ and so
\begin{equation}
        \frac{1}{\lambda_{i}} \leq \frac{1}{\mu_{i}} \text{ if } \lambda_{i} \neq 0 \text{ and } \mu_{i} \neq 0.  \label{eq:lambda_mu_inv_ineq2}
    \end{equation}
    Let $I$ be the number of zero eigenvalues of $\projbl[S]$. Then 
\begin{equation}
    \sum_{I\leq i \leq n} { \frac{1}{\lambda_{i}^{\set{S}}}} \leq \sum_{I\leq i \leq n} { \frac{1}{\mu_{i}^{\set{S}}}} <  \sum_{I \leq i \leq n+1} { \frac{1}{\mu_{i}^{\set{S}}}}. \label{eq:eig_sum_ineq2_intermediate}
\end{equation}
With the left inequality by matching terms via (\ref{eq:lambda_mu_match2}) and then summing over (\ref{eq:lambda_mu_inv_ineq2}), and the right inequality because (\ref{eq:mu_nonzero}) means $\frac{1}{\mu_{n+1}} > 0$. We then note the left and the right terms in this inequality say:
\begin{equation}
     \sum_{\lambda_{i}^{\set{S}} \neq 0} { \frac{1}{\lambda_{i}^{\set{S}}}} < \sum_{\mu_{i}^{\set{S}} \neq 0} { \frac{1}{\mu_{i}^{\set{S}}}} \label{eq:eig_sum_ineq2}
 \end{equation}
or equivalently,
\begin{equation}
 \xi_{2}(\set{S} \backslash \{v\}) < \xi_{2}(\set{S}).   
\end{equation}
Rearranging gives $\Delta_{2} > 0$.
\end{proof}

{%
\section{Proof of Theorem \ref{main_general}}
\label{app:main_thm_proof}
\begin{proof}
    
$\set{S} \backslash \set{T}$ is better than $\set{S}$ if and only if $\textrm{MSE}_{\set{S}\backslash\set{T}} < \textrm{MSE}_{
\set{S}}$. By (\ref{eq:EMSE_decomp_into_delta}) this happens if and only if
\begin{equation}
    \Delta_{1}(\set{S},\set{T}) + \sigma^{2} \cdot \Delta_{2}(\set{S},\set{T}) > 0.
\end{equation}
By substituting in $\sigma^{2} = { \frac{\expect{\sqnormvec{\vect{x}}}}{\expect{\sqnormvec{\vect{\epsilon}}}} }  \cdot \frac{1}{\textrm{SNR}}$ and multiplying both sides by SNR (which does not change the direction of the inequality, as $\textrm{SNR} > 0$), $\set{S} \backslash \set{T}$ is better than $\set{S}$ if and only if
\begin{equation}
\label{eq:proof_main_general_intermediate}
   { \frac{\expect{\sqnormvec{\vect{x}}}}{\expect{\sqnormvec{\vect{\epsilon}}}} } \cdot \Delta_{2}(\set{S},\set{T}) > -\Delta_{1}(\set{S},\set{T}) \cdot \textrm{SNR}. 
\end{equation}
We consider the conditions of (\ref{eq:main_thm_cond}):
\subsection*{\texorpdfstring{(\ref{eq:main_thm_cond:d1neg}): $\Delta_{1}(\set{S},\set{T}) < 0$}{(\ref{eq:main_thm_cond:d1neg}): \textDelta\textoneinferior(S,T) < 0}}
We can divide both sides of (\ref{eq:proof_main_general_intermediate}) by $-\Delta_{1}(\set{S},\set{T})$ without changing the inequality, so (\ref{eq:proof_main_general_intermediate}) holds if and only if
\begin{equation}
  { \frac{\expect{\sqnormvec{\vect{x}}}}{\expect{\sqnormvec{\vect{\epsilon}}}} } \cdot \frac{\Delta_{2}(\set{S},\set{T})}{-\Delta_{1}(\set{S},\set{T})} >  \textrm{SNR}.
\end{equation}

\subsection*{\texorpdfstring{(\ref{eq:main_thm_cond:d1pos}): $\Delta_{1}(\set{S},\set{T}) > 0$}{(\ref{eq:main_thm_cond:d1pos}): \textDelta\textoneinferior(S,T) > 0}}
Dividing both sides of (\ref{eq:proof_main_general_intermediate}) by $-\Delta_{1}(\set{S},\set{T})$ flips the inequality, so (\ref{eq:proof_main_general_intermediate}) holds if and only if
\begin{equation}
  { \frac{\expect{\sqnormvec{\vect{x}}}}{\expect{\sqnormvec{\vect{\epsilon}}}} } \cdot\frac{\Delta_{2}(\set{S},\set{T})}{-\Delta_{1}(\set{S},\set{T})} < \textrm{SNR}.
\end{equation}

\subsection*{\texorpdfstring{(\ref{eq:main_thm_cond:d1zero}): $\Delta_{1}(\set{S},\set{T}) = 0$}{(\ref{eq:main_thm_cond:d1zero}): \textDelta\textoneinferior(S,T) = 0}}
$-\Delta_{1}(\set{S},\set{T}) \cdot \textrm{SNR} = 0$ so (\ref{eq:proof_main_general_intermediate}) holds if and only if $ { \frac{\expect{\sqnormvec{\vect{x}}}}{\expect{\sqnormvec{\vect{\epsilon}}}} }\cdot \Delta_{2}(\set{S},\set{T}) > 0$, if and only if
\begin{equation}
\Delta_{2}(\set{S},\set{T}) > 0.
\end{equation}
\end{proof}
}

\section{Proof of Corollary \ref{main_ls}}
\label{proof_appendix_LS}
\begin{proof}
For brevity, we fix $\set{S}$ and $v$ and write $\Delta_{1} = \Delta_{1}(\set{S},v)$ and $\Delta_{2} = \Delta_{2}(\set{S},v) $.

Rearranging (\ref{eq:EMSE_decomp_into_delta}) gives us that $\set{S} \backslash \{v\}$ is better than $\set{S}$ if and only if 
\begin{equation}
    \Delta_{1} + \sigma^{2} \cdot \Delta_{2} > 0
\end{equation}
or equivalently if and only if 
\begin{equation}
    \Delta_{1} > - \sigma^{2} \cdot \Delta_{2} .
\end{equation}
By definition, $\sigma^{2} = \frac{k}{N \cdot \text{SNR}}$, so this condition is equivalent to
\begin{equation}
    \Delta_{1} > - \frac{k}{N \cdot \text{SNR}} \Delta_{2}
\end{equation}
and as SNR is strictly positive, this is equivalent to
\begin{equation}
    \text{SNR}\cdot \Delta_{1} > - \frac{k}{N} \Delta_{2}. \label{eq:thm_proof_linear_cond}
\end{equation}

We can now use the major lemmas from the previous appendices. By Lemma \ref{lemma:LS_xi_1_is_rank}, we have two possible values of $\Delta_{1}(\set{S},v)$:
\subsection*{$\Delta_{1} = 0$:}
Lemma \ref{lemma:LS_delta_1_improvement_means_delta_2_worse} means $\Delta_{2} < 0$, so
\begin{equation}
    \Delta_{1} + \sigma^{2} \cdot \Delta_{2} = \sigma^{2} \cdot \Delta_{2} < 0
\end{equation}
and so $\set{S} \backslash \{v\}$ is not better than $\set{S}$.

\subsection*{$\Delta_{1} = -1$:}
Eq. (\ref{eq:thm_proof_linear_cond}) simplifies to:
\begin{equation}
    - \text{SNR} > - \frac{k}{N} \Delta_{2}
\end{equation}
which is equivalent to
\begin{equation}
    \text{SNR} < \frac{k}{N} \Delta_{2}. \label{eq:delta_is_minus_1_cond_SNR}
\end{equation}

On the one hand, $v$ improves $\set{S}$ implies $\Delta_{1} = -1$, which implies (\ref{eq:delta_is_minus_1_cond_SNR}). On the other hand, (\ref{eq:delta_is_minus_1_cond_SNR}) implies $\Delta_{2} > 0$ which in turn implies $\Delta_{1}=-1$, which means (\ref{eq:delta_is_minus_1_cond_SNR}) implies (\ref{eq:thm_proof_linear_cond}), which implies $\set{S} \backslash \{v\}$ is better than $\set{S}$. 

Note that the right-hand side of (\ref{eq:delta_is_minus_1_cond_SNR}) is $\tau(\set{S},v)$; this completes the proof.
\end{proof}

\section{Proof of Proposition \ref{propn:main_existence_LS}}
\label{app:LS_satisfied}
We reframe the Proposition to the following equivalent statement:

    Consider any sequence of vertices $v_{1},\dots, v_{N}$ with no repeated vertices, and let $\set{S}_{i} = \{v_{1} , \dots , v_{i} \}$. Then there are exactly $k$ indices $I_{1}, \dots, I_{k}$ such that under LS reconstruction of a noisy $k$-bandlimited signal,
    \begin{equation}
        \forall 1\leq j \leq k: \tau(\set{S}_{I_{j}}, v_{I_{j}}) > 0
    \end{equation}
    and so for some $\text{SNR}>0$, $\set{S}_{I_{j}} \backslash \{v_{I_{j}}\}$ is better than $\set{S}_{I_{j}}$.

\begin{proof}
\label{app:proof_main_existence_LS}
By Lemma \ref{lemma:LS_xi_1_is_rank} in Appendix \ref{proof_appendix_LS}: 
\begin{align}
    &\xi_{1}(\set{S}_{i}) = k -\text{rank}([\matr{U}]_{\set{S}_{i}, \set{K}}), \\
    &\Delta_{1} \in \{0,-1\}
\end{align}
 and as $\text{rank}(\matrsubU{N}) = k$, $\xi_{1}(\set{S}_{N}) = 0$. As $\xi_{1}(\set{S}_{0}) = k$, we must have exactly $k$ indices for which $\Delta_{1}(\set{S}_{i}, v_{i}) = -1$, and by Lemma \ref{lemma:LS_delta_1_improvement_means_delta_2_worse} in Appendix \ref{proof_appendix_LS}  
  we have exactly $k$ indices for which $\Delta_{2}(\set{S}_{i}, v_{i}) > 0$. As $\tau (\set{S}_{i}, v_{i})= \frac{k}{N} \Delta_{2}(\set{S}_{i}, v_{i})$, we're done.
\end{proof}

\section{Proofs for LS Reconstruction with Bandlimited Noise}
\label{app:LS_Bandlimited_Noise_proofs}
\subsection{Proof of Lemma \ref{lemma:LS_bandlimited_noise_MSE}}
\begin{proof}
By Appendix \ref{app:table_delta_proof}, Lemma \ref{lemma:LS_xi_1_is_rank}, under LS reconstruction,
    \begin{equation}
    \xi_{1}(\set{S}) = k - \textrm{rank}([\matr{U}]_{\set{S},\set{K}}).
    \end{equation}
        Assuming LS reconstruction,
    \begin{align}
         \xi_{2}(\set{S}) &=  || \matrsubU{N}[\matr{U}]_{\set{S},\set{K}}^{+}[\matr{U}]_{\set{S},\set{K}}||^{2}_{F} \\
         &= || [\matr{U}]_{\set{S},\set{K}}^{+}[\matr{U}]_{\set{S},\set{K}}||^{2}_{F}.
    \end{align}
    As $\matr{A}^{+}\matr{A}$ is an orthogonal projection matrix, its eigenvalues are 0 or 1. Therefore
    \begin{equation}
         || [\matr{U}]_{\set{S},\set{K}}^{+}[\matr{U}]_{\set{S},\set{K}}||^{2}_{F} =  \textrm{rank}([\matr{U}]_{\set{S},\set{K}}).
    \end{equation}
    Add this times $\sigma^{2}$ to $\xi_{1}(\set{S})$ to get the result.
\end{proof}
\subsection{Proof of Corollary \ref{corr:LS_bandlimited_noise_big_variance}}
\begin{proof}
    As sample size increases, $\textrm{rank}([\matr{U}]_{\set{S},\set{K}})$ is increasing. If $\textrm{SNR} < 1$, then $1<\sigma^{2}$ and the MSE increases with sample size by Lemma \ref{lemma:LS_bandlimited_noise_MSE}.
\end{proof}
\subsection{Proof of Corollary \ref{corr:LS_bandlimited_noise_sample_only_k}}
\begin{proof}
    Under a noiseless-optimal sampling scheme, after sampling $k$ vertices we have perfect reconstruction of any clean $k$-bandlimited signal, and so $\xi_{1}(\set{S}_{k}) = k - \textrm{rank}([\matr{U}]_{\set{S}_{k},\set{K}}) = 0$. 
    
    Let $m \leq k$. As $\msubgen[\set{S}_{k},\set{K}]{\matr{U}}$ is of full rank, for any $\set{S}_{m} \subseteq \set{S}_{k}$, $\msubgen[\set{S},\set{K}]{\matr{U}}$ must also be full rank.
    Therefore
    \begin{align}
        \textrm{MSE}_{\set{S}_{m}} - \textrm{MSE}_{\set{S}_{m} \backslash \{v\}} &= (\sigma^{2} - 1)(m - (m-1)) \\
        &= \sigma^{2} - 1
    \end{align}
    so $\set{S}_{m} \backslash \{v\}$ is better than $\set{S} \iff \sigma^{2} > 1 \iff \textrm{SNR} < 1$.
    
    In the case where $m > k$:
    \begin{align}
        |\set{S}| \geq k &\implies \textrm{rank}([\matr{U}]_{\set{S},\set{K}}) = k \\
        &\implies \mathrm{MSE}_{\set{S}} = \sigma^{2}k.
    \end{align}
    which is constant as sample size increases.
\end{proof}

\section{Proof of Theorem \ref{thm:noiseless_optimality_means_noise_sensitivity}}
\label{app:optimal_noiseless_schemes_immediately_satisfy}
\subsection{Proof of (\ref{eq:greedy_sampling_first_k})}
We first show that if $m \leq k$ then 
\begin{equation}
    \forall m \leq k: \enskip \forall v \in \set{S}_{m}: \enskip \Delta_{1}(\set{S}_{m},v) = -1. \label{eq:proof_LS_exist:delta_1_neg}
\end{equation}
\begin{proof}
By Appendix \ref{proof_appendix_LS}, Lemma \ref{lemma:LS_xi_1_is_rank}, the noiseless error
\begin{equation}
    \xi_{1}(\set{S}) = k -\text{rank}([\matr{U}]_{\set{S},\set{K}})
\end{equation}
must be 0, as we can perfectly reconstruct any $k$-bandlimited signal. Therefore, $\text{rank}([\matr{U}]_{\set{S},\set{K}}) = k$.

$[\matr{U}]_{\set{S},\set{K}}$ is a $k \times k$ matrix of full rank, so its rows must be linearly independent. 
Any subset of linearly independent rows is linearly independent, so for any non-empty $\set{R} \subset \set{S}$, $[\matr{U}]_{\set{R},\set{K}}$ has linearly independent rows.

Greedy schemes pick increasing sample sets: that is, if asked to pick a vertex sample set $\set{S}_{m}$ of size $m$ for $m < k$ and a sample set $\set{S}$ of size $k$, $\set{S}_{m} \subset \set{S}$. Therefore for any sample set $\set{S}_{m}$ of size $m \leq k$ picked by the scheme, $[\matr{U}]_{\set{S}_{m},\set{K}}$ has independent rows.

If $[\matr{U}]_{\set{S}_{m},\set{K}}$ has independent rows, then removal of any row (corresponding to removing any vertex) reduces its rank by 1; which is (\ref{eq:proof_LS_exist:delta_1_neg}).
\end{proof}

We now show that for $m \leq k$, 
\begin{equation}
    \forall m \leq k: \enskip \forall v \in \set{S}_{m}: \enskip \Delta_{2}(\set{S}_{m},v) \geq 1.
\end{equation}
\begin{proof}
By the previous section, we know that $\matrsubU{R}$ is full rank for $\set{R}\subseteq \set{S}_{m}$, so $\matrsubU{R}\matrsubU{R}^{T} = \projbl[R]$ is invertible. By Appendix \ref{app:table_delta_proof}, (\ref{eq:xi_2_is_U_pseudo_frob}), $\xi_{2}(\set{S}_{m}) = \trace{\projblgen[\set{S}_{m}]^{-1}}$. We have
\begin{align}
    &\xi_{2}(\set{S}_{m}) \\
    &= \trace{\projblgen[\set{S}_{m}]^{-1}} \\
    &= \msubgen[\{v\}]{\projblgen[\set{S}_{m}]^{-1}} + \trace{\msubgen[\set{S}_{m} \backslash \{v\}]{\projblgen[\set{S}_{m}]^{-1}}} \\
    &\geq \projblgen[\{v\}]^{-1} + \trace{\projblgen[\set{S}_{m} \backslash \{v\}]^{-1}} \label{eq:proof_LS_exist:bring_submatrix_inverse_in} \\
    &= \frac{1}{\projblgen[\{v\}]} + \xi_{2}(\set{S}_{m} \backslash \{v\}) \\
    &\geq 1 + \xi_{2}(\set{S}_{m} \backslash \{v\})
\end{align}
where the inequality in (\ref{eq:proof_LS_exist:bring_submatrix_inverse_in}) is by \cite[Eq. 5]{zhang2000schur}, and the final inequality is because the diagonal elements of $\projbl$ are bounded above by its maximum eigenvalue, which is 1 as $\projbl$ is a projection. Therefore, for all $v \in \set{S}_{m}$,
\begin{align}
    \Delta_{2}(\set{S}_{m},v) = \xi_{2}(\set{S}_{m}) - \xi_{2}(\set{S}_{m} \backslash \{v\}) \geq 1.
\end{align}

\end{proof}
Finally as $\tau(\set{S}_{m},v) = \frac{k}{N}\Delta_{2}(\set{S}_{m},v)$,
\begin{equation}
     \forall m \leq k: \enskip \forall v \in \set{S}_{m}: \enskip \tau(\set{S}_{m},v) \geq \frac{k}{N}. 
\end{equation}
\subsection{Proof of (\ref{eq:greedy_sampling_over_k})}
\begin{proof}
    
As $[\matr{U}]_{\set{S}_{k},\set{K}}$ has $k$ independent rows, it is of rank $k$. Adding further rows cannot decrease its rank, so for $m' > k$, $\textrm{rank}( [\matr{U}]_{\set{S}_{m'},\set{K}}) \geq k$. As $\matrsubU{N}$ is of rank $k$, $\textrm{rank}([\matr{U}]_{\set{S}_{m'},\set{K}}) \leq k$. This means for all samples sizes $m' > k$, $\textrm{rank}([\matr{U}]_{\set{S}_{m'},\set{K}}) = k$. This says that further additions of rows do not change rank; that is:
\begin{equation} 
   \forall m' > k: \enskip \forall v \in \set{S}_{m'} \backslash \set{S}_{k}: \enskip \Delta_{1}(\set{S}_{m'},v) = 0
\end{equation}
Then, by Appendix \ref{proof_appendix_LS}, Lemma \ref{lemma:LS_delta_1_improvement_means_delta_2_worse},
\begin{equation}
    \forall m' > k: \enskip \forall v \in \set{S}_{m'} \backslash \set{S}_{k}: \enskip \Delta_{2}(\set{S}_{m'},v) \leq 0 
\end{equation}
and, like for (\ref{eq:greedy_sampling_first_k}), as $\tau(\set{S}_{m},v) = \frac{k}{N}\Delta_{2}(\set{S}_{m},v)$ and $\frac{k}{N} > 0$,
\begin{equation}
     \forall m' > k: \enskip \forall v \in \set{S}_{m'}\backslash \set{S}_{k} : \enskip \tau(\set{S}_{m'},v) \leq 0. 
\end{equation}
\end{proof}

\section{Proof of Remark \ref{remark:ADE_are_noiseless_optimal}}
\label{app:proof_of_remark_ADE_are_noiseless_optimal}

\subsection*{A-optimality}
A-optimality depends on the existence of the inverse of $\projbl[S]$  existing, which requires it to be of full rank (at $|\set{S}| = k$ our definition of $\matr{P}$ requires $\matrsubU{S}^{T}\matrsubU{S}$ invertible; this implies $\projbl[S]$ invertible as $0 < \text{det}(\matrsubU{S}^{T}\matrsubU{S}) = \text{det}(\matrsubU{S}^{T})\text{det}(\matrsubU{S}) = \text{det}(\matrsubU{S}\matrsubU{S}^{T})$ ). By Appendix \ref{proof_appendix_LS}, Lemma \ref{lemma:square_to_rect_rank}, if an A-optimal scheme picks a set $\set{S}$ of size $k$, then $\text{rank}([\matr{U}]_{\set{S},\set{K}}) = k$. Therefore, $\set{S}$ is a uniqueness set \cite{anis2016efficient} and can perfectly reconstruct any $k$-bandlimited signal.

\subsection*{D- and E-optimality}
We show that for sample sizes less than $k$ we can always pick a row which keeps $\projbl[S]$ full rank (of rank $|\set{S}|$), and that D- and E-optimal schemes do so.

By Appendix \ref{proof_appendix_LS}, Lemma \ref{lemma:square_to_rect_rank}, $\text{rank}(\projbl[S]) = \text{rank}(\matrsubU{S})$, so we only need to ensure $\text{rank}(\matrsubU{S}) = |\set{S}|$.

We proceed by induction: given $\set{S}_{1}$ with $|\set{S}_{1}| = 1$, $\text{rank}(\matrsubUGen{\set{S}_{1}}) = 1$. Assume that for $\set{S}_{i}$ with $|\set{S}_{i}| = i < k$, $\text{rank}(\matrsubUGen{\set{S}_{i}}) = i$. As $\text{rank}(\matrsubU{N}) = k$ and $i < k$, we can find a row to add to $\matrsubUGen{\set{S}_{i}}$ which will increase its rank (else all other rows would lie in the $i$-dimensional space spanned by the rows of $\matrsubUGen{\set{S}_{i}}$, which would imply $\text{rank}(\matrsubU{N}) = i$, which is a contradiction as $i < k$). Adding the vertex which corresponds to the row to $\set{S}_{i}$ gives $\set{S}_{i+1}$ with $\text{rank}(\matrsubUGen{\set{S}_{i+1}}) = i+1$.

We have shown that we can greedily choose to keep $\text{rank}(\matrsubU{S}) = |\set{S}|$. We now show that D- and E-optimal schemes do so. The eigenvalues of $\projbl[S]$ are non-negative (see Appendix \ref{proof_appendix_LS}, Eq. (\ref{eq:MUUM_nonnegative})), so any invertible $\projbl[S]$ will have a strictly positive determinant and minimum eigenvalue, which are preferable under the D- and E- optimality criterion respectively to a non-invertible $\projbl[S]$, which has a determinant and minimum eigenvalue of 0. Therefore, greedy D- and E- optimal sampling schemes will make sure $\projbl[S]$ is invertible, and thus keep $\text{rank}(\matrsubU{S}) = |\set{S}|$ for $|\set{S}| \leq k$. Therefore when D- and E- optimal schemes pick a set $\set{S}$ of size $k$, $\text{rank}(\matrsubU{S}) = k$. Therefore, $\set{S}$ is a uniqueness set \cite{{anis2016efficient}} and can perfectly reconstruct any $k$-bandlimited signal.

\section{Proof of Corollary \ref{corr:main_GLR_iff}}
\label{app:proof_main_GLR_iff}
We first simplify $\xi_{1}(\set{S})$:
\begin{align}
    &\matrsubU{N} - \matr{R}_{\set{S}}\matrsubU{S} \\
    &= \left(\matr{I} - \left(\proj{S} + \mu\matr{L} \right)^{-1}\proj{S}\right)\matrsubU{N}  \\
    &=  \left(\proj{S} + \mu\matr{L} \right)^{-1}\mu\matr{L}\matrsubU{N} \\
    &=  \left(\proj{S} + \mu\matr{L} \right)^{-1}\matrsubU{N} \left(\mu\matr{\Lambda}_{k}\right)
\end{align}
where $\matr{\Lambda}_{k}$ is a $k \times k$ diagonal matrix with the corresponding graph frequencies to $\matrsubU{N}$ as its diagonal. Write $u_{i}$ for the $i^{\textrm{th}}$ column of $\matrsubU{N}$, so $\vect{u}_{i}$ is an eigenvector of $\matr{L}$. 
\begin{align}
    \xi_{1}(\set{S}) &= \sqfrob{\matrsubU{N} - \matr{R}_{\set{S}}\matrsubU{S}}  \\
    &= \sum_{i=2}^{k} \left(\mu\lambda_{i}\right)^{2}\vect{u}_{i}^{T}\left(\proj{S} + \mu\matr{L} \right)^{-2}\vect{u}_{i} \label{eq:GLR_corr_proof:xi_1_def}
\end{align}

Note that the condition is equivalent to the following:
\begin{equation}
    \msubgen[\set{S}^{c},\{2,\ldots,k\}]{\matr{U}} = \matr{0} \iff \forall i \in [2,k]: \proj{S}\vect{u}_{i} = \vect{u}_{i}
\end{equation}
that is, the projection is idempotent on each of the $k-1$ non-constant eigenvectors. We consider the cases where this is and is not true and correlate them to cases in Theorem \ref{main_general}.

\subsection{The projection is idempotent}
\label{subapp:GLR_iff_proof:idempotent}
As $\proj{S}\vect{u}_{i} = \vect{u}_{i}$,
\begin{equation}
   (\proj{S} + \mu \matr{L})\vect{u}_{i} = (1 + \mu\lambda_{i})\vect{u}_{i} 
\end{equation}
therefore $\vect{u}_{i}$ is an eigenvector of $(\proj{S} + \mu \matr{L})$ with eigenvalue $1 + \mu\lambda_{i}$ and
\begin{align}
    \vect{u}_{i}^{T}(\proj{S} + \mu \matr{L})^{-2}\vect{u}_{i} = (1 + \mu\lambda_{i})^{-2}.
\end{align}
By Lemma \ref{lemma:GLR_full_observation_MSE}, in this case $\xi_{1}(\set{S}) = \xi_{1}(\set{N})$, i.e. that $\Delta_{1}(\set{N},\set{S}^{c}) = 0$. This corresponds to condition (\ref{eq:main_thm_cond:d1zero}), and gives us condition (\ref{eq:GLR_corol_iff_weird}) in our Corollary.
\subsection{The projection is not idempotent}
Applying Cauchy-Schwartz to $\vect{x} = (\proj{S} + \mu \matr{L})^{-1}\vect{u}_{i}$ and  $\vect{y} = (\proj{S} + \mu \matr{L})\vect{u}_{i}$ gives, as $\vect{u}_{i}^{T}\vect{u}_{i} = 1$,
\begin{align}
    1 \leq \vect{u}_{i}^{T}(\proj{S} + \mu \matr{L})^{-2}\vect{u}_{i} \vect{u}_{i}^{T}(\proj{S} + \mu \matr{L})^{2}\vect{u}_{i}. \label{eq:cs_for_non_idempotent_GLR_proj}
\end{align}

We note that
\begin{align}
    \sqnormvec{\proj{S}\vect{u}_{i}} = \sum_{j \in \set{S}} \left(\vect{u}_{i}\right)_{j}^{2} \leq \sum_{j =1}^{N} \left(\vect{u}_{i}\right)_{j}^{2} = \sqnormvec{\vect{u}_{i}} = 1. \label{eq:proj_idempotent_ineq_for_special_i}
\end{align}
and see that \eqref{eq:proj_idempotent_ineq_for_special_i} must be strict for at least one $i\in [2,k]$, because some component of $\vect{u}_{i}$ in $\set{S}^{c}$ is nonzero for at least one $i \in [2,k]$. Therefore
\begin{align}
    \vect{u}_{i}^{T}(\proj{S} + \mu \matr{L})^{2}\vect{u}_{i} &= (\mu\lambda_{i})^{2} + (1 + 2\mu\lambda_{i}) \sqnormvec{\proj{S}\vect{u}_{i}} \\
    &\leq (1 + \mu\lambda_{i})^{2}.
\end{align}
and substituting into \eqref{eq:cs_for_non_idempotent_GLR_proj}:
\begin{align}
    \left(\frac{1}{1 + \mu\lambda_{i}}\right)^{2} \leq \vect{u}_{i}^{T}(\proj{S} + \mu \matr{L})^{-2}\vect{u}_{i}
\end{align}
with these inequality being strict for at least one $i \in [2,k]$. Therefore, by Lemma \ref{lemma:GLR_full_observation_MSE} and (\ref{eq:GLR_corr_proof:xi_1_def}),
\begin{equation}
    \xi_{1}(\set{S}) > \xi_{1}(\set{N})
\end{equation}
so $\Delta_{1}(\set{N},\set{S}^{c}) < 0$.

\subsection{Simplifying Theorem \ref{main_general}}
We see that the projection is idempotent on $(\vect{u}_{i})_{i=2}^{k}$ implies $\Delta_{1}(\set{N},\set{S}^{c}) = 0$, and the projection is not idempotent implies $\Delta_{1}(\set{N},\set{S}^{c}) < 0$. As the projection must either be idempotent or not idempotent, these implications must be `if and only if' statements. We therefore rule out (\ref{eq:main_thm_cond:d1pos}) in Theorem \ref{main_general}. 

\section{Proof of Theorem \ref{thm:main_GLR_exist}}
\label{app:Proof_thm_main_GLR_exist}
\noindent See Appendix \ref{app:proof_of_remark_GLR_mopt} for a proof that $ m_{opt} < \frac{N+1}{2}$ if $B(m)<N$.

We restate the following definitions from Lemmas \ref{lemma:GLR_xi_2_bound_main} and \ref{lemma:GLR_xi_2_bound_main_bl}:
\begin{align}
        r &= \omega\left( \frac{\lambda_{N}}{\lambda_{2}}\right) \\
        B(m) &= r \frac{N}{m} + \sum^{m}_{i=1}\omega\left(\max\left[1,\frac{\lambda_{N+2-i}}{\lambda_{i}}\right]\right) \\
        r_{bl} &= \omega\left( \frac{\lambda_{k}}{\lambda_{2}}\right) \\
        B_{k}(m) &= r_{bl} \frac{N}{m} + \sum^{m}_{i=1}\omega\left(\max\left[1,\frac{\lambda_{k+2-i}}{\lambda_{i}}\right]\right)
\end{align}

We define the following:
\begin{align}
    \bar{\lambda} &= \frac{1}{N}\sum_{i=1}^{N}\lambda_{i}\\
    \mu_{ub} &= \bar{\lambda}^{-1}\left(\sqrt{\frac{N}{B(m_{opt})}} -1\right)  \\
    \tau_{GLR}(\mu) &= \frac{k}{N} \cdot \frac{\left(\sum^{N}_{i=1}\left(1 + \mu\lambda_{i} \right)^{-2}\right) - B(m_{opt})}{k+B_{k}(m_{opt}) - \sum_{i=1}^{k} \left(1 - (1+\mu\lambda_{i})^{-1}\right)^{2}} 
\end{align}

We start by calculating $\xi_{i}(\set{N})$.

\begin{lemma}
\label{lemma:GLR_full_observation_MSE}
  Under GLR reconstruction with parameter $\mu$,
    \begin{align}
        \xi_{1}(\set{N}) &= \sum_{i=1}^{k} \left(1 - \frac{1}{1+\mu\lambda_{i}}\right)^{2} \\
         \xi_{2}(\set{N}) &= \sum_{i=1}^{N} \left( \frac{1}{1+\mu\lambda_{i}}\right)^{2}
    \end{align}
\end{lemma}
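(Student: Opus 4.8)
The plan is to specialise the formula (\ref{eq:defn_RS:GLR}) for $\matr{R}_{\set{S}}$ to the case $\set{S}=\set{N}$ and then diagonalise everything in the eigenbasis of $\matr{L}$. First I would observe that $\proj{N}=\matrsub[N,N]{I}\,\matrsub[N,N]{I}=\matr{I}_{N}$, so (\ref{eq:defn_RS:GLR}) collapses to $\matr{R}_{\set{N}}=(\matr{I}_{N}+\mu\matr{L})^{-1}$. Writing the eigendecomposition $\matr{L}=\matr{U}\matr{\Lambda}\matr{U}^{T}$ with $\matr{\Lambda}$ the diagonal matrix whose entries are the graph frequencies $\lambda_{1},\ldots,\lambda_{N}$ and $\matr{U}$ orthogonal, this gives $\matr{R}_{\set{N}}=\matr{U}(\matr{I}_{N}+\mu\matr{\Lambda})^{-1}\matr{U}^{T}$, i.e.\ $\matr{R}_{\set{N}}$ acts as multiplication by $(1+\mu\lambda_{i})^{-1}$ along the $i$-th eigenvector.

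For $\xi_{2}(\set{N})$: by (\ref{eq:xi_2_def}) in the full-band case $\xi_{2}(\set{N})=\sqfrob{\matr{R}_{\set{N}}}$, and since the Frobenius norm is unchanged under multiplication on either side by an orthogonal matrix, $\sqfrob{\matr{R}_{\set{N}}}=\sqfrob{(\matr{I}_{N}+\mu\matr{\Lambda})^{-1}}=\sum_{i=1}^{N}(1+\mu\lambda_{i})^{-2}$, which is the claimed expression.

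For $\xi_{1}(\set{N})$: by (\ref{eq:xi_1_def}), $\xi_{1}(\set{N})=\sqfrob{(\matr{I}_{N}-\matr{R}_{\set{N}})\matrsubU{N}}$, and $\matr{I}_{N}-\matr{R}_{\set{N}}=\matr{U}\bigl(\matr{I}_{N}-(\matr{I}_{N}+\mu\matr{\Lambda})^{-1}\bigr)\matr{U}^{T}$. The key observation is that $\matr{U}^{T}\matrsubU{N}=\begin{pmatrix}\matr{I}_{k}\\ \matr{0}\end{pmatrix}$ (an $N\times k$ block), because $\matrsubU{N}$ consists of the first $k$ columns of $\matr{U}$ and $\matr{U}$ is orthonormal. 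Hence $(\matr{I}_{N}-\matr{R}_{\set{N}})\matrsubU{N}=\matr{U}\bigl(\matr{I}_{N}-(\matr{I}_{N}+\mu\matr{\Lambda})^{-1}\bigr)\begin{pmatrix}\matr{I}_{k}\\ \matr{0}\end{pmatrix}$, so applying orthogonal invariance of the Frobenius norm once more, $\xi_{1}(\set{N})$ equals the squared Frobenius norm of the $N\times k$ matrix that keeps only the first $k$ diagonal entries $1-(1+\mu\lambda_{i})^{-1}$, i.e.\ $\sum_{i=1}^{k}\bigl(1-(1+\mu\lambda_{i})^{-1}\bigr)^{2}$, as required.

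I do not expect a genuine obstacle here: the argument is a routine diagonalisation once one notices that $\proj{N}=\matr{I}_{N}$ and the block identity $\matr{U}^{T}\matrsubU{N}=\begin{pmatrix}\matr{I}_{k}\\ \matr{0}\end{pmatrix}$; the only care needed is to invoke orthogonal invariance of $\|\cdot\|_{F}$ on both the left (by $\matr{U}$) and, in the $\xi_{2}$ case, the right (by $\matr{U}^{T}$). This lemma then feeds directly into the upper bounds on $\xi_{1}(\set{S})$ and $\xi_{2}(\set{S})$ and the evaluation of $\Delta_{1}(\set{N},\set{S}^{c})$, $\Delta_{2}(\set{N},\set{S}^{c})$ used in Corollary \ref{corr:main_GLR_iff} and Theorem \ref{thm:main_GLR_exist}.
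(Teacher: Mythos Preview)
Your proposal is correct and follows essentially the same approach as the paper: the paper's proof simply says to set $\matr{R}_{\set{N}}=(\matr{I}+\mu\matr{L})^{-1}$ in (\ref{eq:xi_1_def}) and (\ref{eq:xi_2_def}) and note that the columns of $\matrsubU{N}$ are eigenvectors of $\matr{R}_{\set{N}}$, which is exactly the diagonalisation you spell out in detail via $\matr{L}=\matr{U}\matr{\Lambda}\matr{U}^{T}$ and orthogonal invariance of the Frobenius norm.
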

\begin{proof}
    Set $\matr{R}_{\set{N}}= (\matr{I} + \mu \matr{L})^{-1}$ in (\ref{eq:xi_1_def}) and (\ref{eq:xi_2_def}), noting $\matrsubU{N}$ are eigenvectors for $\matr{R}_{\set{N}}$.
\end{proof}

Let $\set{S}$ be any sample set of size $m$. We show that under our conditions $\Delta_{2}(\set{N},\set{S}^{c}) > 0$ and then apply Corollary \ref{corr:main_GLR_iff}. 
To do so, we use the following bounds:
\begin{align}
    \xi_{2}(\set{S}) &\leq B(m) \\
    \xi_{1}(\set{S}) &\leq k + B_{k}(m) \\
    \xi_{1}(\set{N}) &= \sum_{i=1}^{k} \left(1 - \frac{1}{1+\mu\lambda_{i}}\right)^{2} \\
    \xi_{2}(\set{N}) & = \sum_{i=1}^{N} \left( \frac{1}{1+\mu\lambda_{i}}\right)^{2}
\end{align}
These are proven in Lemma \ref{lemma:GLR_xi_2_bound_main}, Appendix \ref{app:proof_unif_ub_xi_1_MSE} Lemma \ref{lemma:unif_ub_xi_1_GLR}, and Lemma \ref{lemma:GLR_full_observation_MSE}.
We therefore see that, as $\Delta_{i}(\set{N},\set{S}^{c}) = \xi_{i}(\set{N}) - \xi_{i}(\set{S})$,
\begin{align}
    \Delta_{2}(\set{N},\set{S}^{c}) &\geq \sum_{i=1}^{N} \left( \frac{1}{1+\mu\lambda_{i}}\right)^{2} - B(m) \label{eq:proof_main_GLR_exist:delta_2_lb} \\
    \Delta_{1}(\set{N},\set{S}^{c}) &\geq \sum_{i=1}^{k} \left(1 - \frac{1}{1+\mu\lambda_{i}}\right)^{2} - (k + B_{k}(m)) \label{eq:proof_main_GLR_exist:delta_1_lb}
\end{align}

We now consider a sets $\set{S}$ of size $m_{opt}$ and show that $\Delta_{2}(\set{N},\set{S}^{c}) > 0$. We have that $r > 0$ so $B(m) > 0$ and by assumption $B(m_{opt}) < N$. Therefore $\mu_{ub} > 0$ and is real and it therefore possible to pick $0 < \mu < \mu_{ub}$.
By assumption $0<\mu<\mu_{ub}$, so by Jensen's Inequality,
\begin{align}
    \sum_{i=1}^{N}\left(\frac{1}{1 + \mu\lambda_{i}}\right)^{2} &\geq \frac{N}{\left(1+\mu\frac{\trace{ \matr{L}
 }}{N}  \right)^{2}}\\
    &> \frac{N}{\left(1+\mu_{ub}\frac{\trace{ \matr{L}
 }}{N} 
  \right)^{2}}\\ 
    &= B(m_{opt})
\end{align}
And therefore by (\ref{eq:proof_main_GLR_exist:delta_2_lb}), $\Delta_{2}(\set{N},\set{S}^{c}) > 0$. 

We now apply Corollary \ref{corr:main_GLR_iff}. We case-split on whether $ \projgen{\set{S}^{c}}\msubgen[\set{N},\{2,\ldots,k\}]{\matr{U}}$ is or is not $\vect{0}$, and show $\set{S}$ is better than $\set{N}$ in both cases.
\subsubsection{(\ref{eq:GLR_corol_iff}) - is not 0}
We assume $ \projgen{\set{S}^{c}}\msubgen[\set{N},\{2,\ldots,k\}]{\matr{U}} \neq \matr{0}$. By (\ref{eq:proof_main_GLR_exist:delta_2_lb}) and (\ref{eq:proof_main_GLR_exist:delta_1_lb}),
\begin{align}
    \tau(\set{N},\set{S}^{c}) = \frac{k}{N} \cdot \frac{\Delta_{2}(\set{N},\set{S}^{c})}{-\Delta_{1}(\set{N},\set{S}^{c})} \geq \tau_{GLR}(\mu) > \text{SNR}
\end{align}
where the last inequality is by our assumption. Therefore Corollary \ref{corr:main_GLR_iff}  (\ref{eq:GLR_corol_iff}) holds and  $\set{S}$ is better than $\set{N}$.
\subsubsection{(\ref{eq:GLR_corol_iff_weird}) - is 0}
We assume $ \projgen{\set{S}^{c}}\msubgen[\set{N},\{2,\ldots,k\}]{\matr{U}} = \matr{0}$. We have that $\Delta_{2}(\set{N},\set{S}^{c}) > 0$, so Corollary \ref{corr:main_GLR_iff} (\ref{eq:GLR_corol_iff_weird}) holds and $\set{S}$ is better than $\set{N}$.

Therefore $\set{S}$ is better than $\set{N}$ regardless of whether $\projgen{\set{S}^{c}}\msubgen[\set{N},\{2,\ldots,k\}]{\matr{U}}$ is or is not $\matr{0}$ and we are done.

\section{Bounding $\xi_{2}(\set{S})$ under GLR -- Proof of Lemma \ref{lemma:GLR_xi_2_bound_main}}
\label{app:proof_unif_ub_xi_2}
In this Appendix, we aim to prove Lemma \ref{lemma:GLR_xi_2_bound_main} which states that under GLR
\begin{align}
    \xi_{2}(\set{S}) &\leq r\frac{N}{m} + \sum_{i=2}^{m} \omega\left(\max\left[1,\frac{\lambda_{N+2-i}}{\lambda_{i}}\right]\right)  \label{eq:xi_2_GLR_bound_strong_appendix}
        \\ 
        &\leq r\frac{N}{m} + r(m -1) . \label{eq:xi_2_GLR_bound_weak_appendix}
\end{align}
\subsection{Preliminaries and notation}
We assume that $\matr{L}$ is the combinatorial Laplacian. We write the eigenvalues of $\matr{L}$ as $0 = \lambda_{1} \leq \ldots \leq \lambda_{N}$.
We note, for any $\matr{X}\in \mathbb{R}^{x \times N}, \matr{Y}\in \mathbb{R}^{N \times y }$,
\begin{align}
   [\matr{X}]_{a,N}[\matr{Y}]_{N,c} &= [\matr{X}\matr{Y}]_{ac}.
\end{align}
We pick a basis suited to our proof. Let the standard basis for $\mathbb{R}^{N}$ be $(\vect{e}_{i})_{i=1}^{N}$. Let
\begin{equation}
    \vect{v}_{1} = \frac{1}{\sqrt{m}}\sum_{i \in \set{S}} \vect{e}_{i} = \frac{1}{\sqrt{m}}\proj{S}\vect{1}_{N}.
\end{equation}
so $||v_{1}||_{2} = 1$. Pick $\{\vect{v}_{2} , \ldots, \vect{v}_{m}\}$ so that $(\vect{v}_{i})_{i=1}^{m}$ is an orthonormal basis for $(\vect{e}_{i})_{i \in \set{S}}$. Finally let $(\vect{v}_{i})_{i=m+1}^{N} = (\vect{e}_{i})_{i \in \set{S}^{c}}$. Now $(\vect{v}_{i})_{i=1}^{N}$ is a basis for $\mathbb{R}^{N}$.

For the rest of this Appendix, we will write out matrices in this new basis. In our new basis, the top left entry when writing out $\matr{L}^{\dagger}$ is 
\begin{equation}
    \matr{L}^{\dagger}_{1,1} = \frac{1}{m}\vect{1}_{m}^{T}[\matr{L}^{+}]_{\set{S}}\vect{1}_{m} \in \mathbb{R}
\end{equation}
and our frequently used projection $\proj{S}$ looks like:
\begin{align}
    \proj{S} = \begin{pmatrix}
        \matr{I} & \matr{0} \\
        \matr{0} & \matr{0}
    \end{pmatrix}.
\end{align}

We define the set $\set{\Theta} = \{2, \ldots, m\}$, and note that
\begin{equation}
     \proj{\Theta} = \matr{I}_{m} - \frac{1}{m}\matr{1}_{m \times m} .
\end{equation}
We have $\{1\} \cup \Theta = \set{S}$ and $ \{1\} \cup \Theta \cup \set{S}^{c} = \set{N}$. 

Finally, we define a useful matrix:
\begin{align}
    \matr{P} &= \matr{I} - \frac{1}{m}\matrsub[N,S]{I}\matr{1}_{m \times N}.
\end{align}
\subsection{Proof overview}
We decompose $\xi_{2}(\set{S}) = \sqfrob{\matr{R}_{\set{S}}}$ row-wise in our new basis (Subsection \ref{subapp:GLR_Proof:row_decomp}).
\begin{align}
 \sqfrob{\matr{R}_{\set{S}}} &= ||\msubgen[\set{N}, \{1\}]{\matr{R}_{\set{S}}}||^{2}_{2} &&+ \sqfrob{\msub[N, \Theta]{\matr{R}_{\set{S}}}} \\
    &= \frac{N}{m} &&+ \sqfrob{\msub[N, \Theta]{\matr{R}_{\set{S}}}}
\end{align}
We explicitly write out $\msub[N,\Theta]{\matr{R}_{\set{S}}}$ (Subsection \ref{subapp:GLR_Proof:Explicit_submatrix}),
\begin{align}
    \msub[N,\Theta]{\matr{R}_{\set{S}}} &= \matr{P}^{T}\msub[N,\Theta]{\frac{1}{\mu}\matr{L}^{\dagger}}\msub[\Theta]{\matr{I} + \frac{1}{\mu}\matr{L}^{\dagger}}^{-1} \\
    &= \matr{P}^{T}\msub[N,\Theta]{\matr{L}^{\dagger}}\msub[\Theta]{\mu \matr{I} + \matr{L}^{\dagger}}^{-1}
\end{align}
and use this to remove the dependence on $\mu$ in our bound (Subsection \ref{subapp:GLR_Proof:no_more_mu}):
\begin{align}
     \sqfrob{\msub[N,\Theta]{\matr{R}_{\set{S}}}} \leq \sqfrob{\matr{P}^{T}\msub[N,\Theta]{\matr{L}^{\dagger}}\msub[\Theta]{\matr{L}^{\dagger}}^{-1}} \label{eq:GLR_Proof_intro:muless_ineq}
\end{align}

We exactly calculate the effects of $\matr{P}^{T}$ on the Frobenius norm (which yields the $\frac{N}{m}$ term in (\ref{eq:GLR_Proof_intro:small1}))(Subsection \ref{subapp:GLR_Proof:column_decomp}):
\begin{align}
    &\sqfrob{\matr{P}^{T}\msub[N,\Theta]{\matr{L}^{\dagger}}\msub[\Theta]{\matr{L}^{\dagger}}^{-1}} \\
    &= \left(\frac{N}{m} \right) \sqfrob{\msubgen[\{1\},\Theta]{\matr{L}^{\dagger}}\msub[\Theta]{\matr{L}^{\dagger}}^{-1}} \label{eq:GLR_Proof_intro:small1}\\
    &+ \sqfrob{\msubgen[\set{N},\Theta]{\matr{L}^{\dagger}}\msub[\Theta]{\matr{L}^{\dagger}}^{-1}} \label{eq:GLR_Proof_intro:small2}
\end{align}

As 
\begin{equation}
\sqfrob{\matrsubpow[\Theta]{L}{\dagger}\msub[\Theta]{\matr{L}^{\dagger}}^{-1}} = \sqfrob{\matr{I}_{m-1}} = m-1,
\end{equation} we get that

\begin{equation}
    \xi_{2}(\set{S}) = \left(\frac{N}{m} + m - 1\right) + \textrm{error}
\end{equation}
where the error term is quantified in (\ref{eq:GLR_Proof_intro:small1}) and (\ref{eq:GLR_Proof_intro:small2}).

Finally, we bound (\ref{eq:GLR_Proof_intro:small1}) and (\ref{eq:GLR_Proof_intro:small2}) using variants of the Kantorovich Inequality. We have that \cite[Eq. 20-23]{householder1965kantorovich} gives (Subsection \ref{subapp:GLR_Proof:adjustment_term})
\begin{equation}
    \sqfrob{\msubgen[\{1\},\Theta]{\matr{L}^{\dagger}}\msub[\Theta]{\matr{L}^{\dagger}}^{-1}} \leq (r-1)
\end{equation}
We use another variant to show that (Subsection  \ref{subapp:GLR_Proof:off_diag_err})
\begin{equation}
    \sqfrob{\msubgen[\set{N},\Theta]{\matr{L}^{\dagger}}\msub[\Theta]{\matr{L}^{\dagger}}^{-1}} \leq \sum^{m}_{i=2}\omega \left( \max\left[1,\frac{\lambda_{N+2-i}}{\lambda_{i}}\right] \right)
\end{equation}

Combine these bounds with (\ref{eq:GLR_Proof_intro:muless_ineq} - \ref{eq:GLR_Proof_intro:small2}) to prove the first inequality in the Lemma.

The second, weaker bound follows as $\omega$ is increasing so 
\begin{equation}
    \omega \left( \max\left[1,\frac{\lambda_{N+2-i}}{\lambda_{i}}\right] \right) \leq \omega\left(\frac{\lambda_{N}}{\lambda_{2}}\right) =r.
\end{equation}

\subsection{Row decomposition}
\label{subapp:GLR_Proof:row_decomp}
As $(\vect{v}_{i})^{m}_{i=1}$ are orthogonal and span $(\vect{e}_{i})_{i \in \set{S}}$, and $\matr{R}_{\set{S}}\vect{1}_{m} = \vect{1}_{N}$,
\begin{align}
 \sqfrob{\matr{R}_{\set{S}}} &= \sum_{i=1}^{m} \left|\left| \matr{R}_{\set{S}}\vect{v}_{i}\right|\right|^{2}_{2} \\
 &=  \sum_{i=1}^{m} \left|\left| \left[\matr{R}_{\set{S}}\right]_{\set{N},\{i\}}\right|\right|^{2}_{2} \\
 &=\left|\left|\matr{R}_{\set{S}}\frac{1}{\sqrt{m}}\vect{1}_{m}\right|\right|^{2}_{2} &&+ \sqfrob{\msub[N, \Theta]{\matr{R}_{\set{S}}}} \\ 
    &= \frac{N}{m} &&+ \sqfrob{\msub[N, \Theta]{\matr{R}_{\set{S}}}}.
\end{align}

\subsection{Explicit submatrix computation}
\label{subapp:GLR_Proof:Explicit_submatrix}
We explicitly compute that 

\begin{align}
    \msub[N,\Theta]{\matr{R}_{\set{S}}} &= \left(\proj{S} + \mu\matr{L}\right)^{-1}\matrsub[N,\Theta]{I}  \\
    &= \matr{P}^{T}\msub[N,\Theta]{\frac{1}{\mu}\matr{L}^{\dagger}}\msub[\Theta]{\matr{I} + \frac{1}{\mu}\matr{L}^{\dagger}}^{-1} \label{eq:GLR_Proof:explicit_submatrix}
\end{align}
\begin{proof}
    We show the equivalent statement,
    \begin{align}
        \matrsub[N,\Theta]{I} \msub[\Theta]{\matr{I} + \frac{1}{\mu}\matr{L}^{\dagger}} &= \left( \proj{S} + \mu\matr{L}\right)\matr{P}^{T}\msub[N,\Theta]{\frac{1}{\mu}\matr{L}^{\dagger}}.
    \end{align}
We have that
\begin{align}
    \mu\matr{L}\matr{P}^{T} &= \mu\matr{L} \\
    \proj{S}\matr{P}^{T} &= \matrsub[N,S]{I} \left( \matr{I} - \frac{1}{m}\matr{1}_{m \times m}\right)\matrsub[S,N]{I} \\
    &= \matrsub[N,S]{I} \msub[S]{\proj{\Theta}}\matrsub[S,N]{I} \\
    &= \matrsub[N,\Theta]{I} \matrsub[\Theta,N]{I}
\end{align}
so 
\begin{align}
\mu\matr{L}\matr{P}^{T} \msub[N,\Theta]{\frac{1}{\mu}\matr{L}^{\dagger}} &= \mu\matr{L} \msub[N,\Theta]{\frac{1}{\mu}\matr{L}^{\dagger}} \\
&= \matr{L}\matr{L}^{\dagger}\matrsub[N,\Theta]{I} \\
&= \left( \matr{I} - \frac{1}{N}\matr{1}_{N \times N}\right)\matrsub[N,\Theta]{I} \\
&= \matrsub[N,\Theta]{I} 
\end{align}
and
\begin{align}
    \proj{S} \matr{P}^{T}\msub[N,\Theta]{\frac{1}{\mu}\matr{L}^{\dagger}} &= \matrsub[N,\Theta]{I} \matrsub[\Theta,N]{I}\msub[N,\Theta]{\frac{1}{\mu}\matr{L}^{\dagger}} \\
    &= \matrsub[N,\Theta]{I} \msub[\Theta]{\frac{1}{\mu}\matr{L}^{\dagger}}
\end{align}
Sum these two terms for the result.
\end{proof}
\subsection{Removing the dependency on $\mu$}
\label{subapp:GLR_Proof:no_more_mu}
By multiplying out the constant in \ref{eq:GLR_Proof:explicit_submatrix}, we see that
\begin{align}
    \msub[N,\Theta]{\matr{R}_{\set{S}}} = \matr{P}^{T}\msub[N,\Theta]{\matr{L}^{\dagger}}\msub[\Theta]{\mu\matr{I} + \matr{L}^{\dagger}}^{-1} \label{eq:GLR_thm_proof:RSNTheta}
\end{align}
We prove that
\begin{align}
    \forall \mu > 0: \sqfrob{\msub[N,\Theta]{\matr{R}_{\set{S}}}} \leq \sqfrob{\matr{P}^{T}\msub[N,\Theta]{\matr{L}^{\dagger}}\msub[\Theta]{\matr{L}^{\dagger}}^{-1}} \label{eq:GLR_Proof:no_more_mu}
\end{align}
\begin{proof}
In this proof, all matrix inequalities are in the standard Loewner ordering \cite[pg. 112]{bhatia2013matrix}. We first show $\msub[\Theta]{\matr{L}^{\dagger}}^{-1}$ is well defined. Principal submatrices of positive definite matrices are positive definite \cite[Corollary III.1.5]{bhatia2013matrix}, so
    \begin{equation}
         \matr{L}^{\dagger} + \vect{1}_{N \times N} >0 \implies \msub[\Theta]{\matr{L}^{\dagger} + \matr{1}_{N \times N}} =\msub[\Theta]{\matr{L}^{\dagger}} >0.
    \end{equation}
    Therefore $\msub[\Theta]{\matr{L}^{\dagger}}$ is invertible and   $\msub[\Theta]{\matr{L}^{\dagger}}^{-1} >0$. Next, as $\mu >0$, $\mu^{2}\matrsub[\Theta]{I} > 0$. Therefore, by expanding the quadratic,
    \begin{equation}    
         \msub[\Theta]{\mu\matr{I} + \matr{L}^{\dagger}}^{2} > \msub[\Theta]{\matr{L}^{\dagger}}^{2}.
    \end{equation}
    By \cite[Proposition V.1.6, page 114]{bhatia2013matrix},
    \begin{equation}
        \msub[\Theta]{\mu\matr{I} + \matr{L}^{\dagger}}^{-2} \leq \msub[\Theta]{\matr{L}^{\dagger}}^{-2}.
    \end{equation}
    As trace is the sum of inner products, by the definition of the Loewner ordering, $\forall \matr{H}\in \mathbb{R}^{N \times m-1}$, $\forall \mu>0$,
    \begin{equation}
        \trace{\matr{H}\msub[\Theta]{\mu\matr{I} + \matr{L}^{\dagger}}^{-2} \matr{H}^{T}} \leq \trace{\matr{H}\msub[\Theta]{\matr{L}^{\dagger}}^{-2}\matr{H}^{T}}
    \end{equation}
    which is the same as
    \begin{equation}
     \sqnormvec{\matr{H}\msub[\Theta]{\mu\matr{I} + \matr{L}^{\dagger}}^{-1} } \leq \sqnormvec{\matr{H}\msub[\Theta]{\matr{L}^{\dagger}}^{-1}}.
    \end{equation}
    Substitute $\matr{H} = \matr{P}^{T}\msub[\set{N},\Theta]{\matr{L}^{\dagger}}$ to finish.
    
\end{proof}
\subsection{Column-wise decomposition}
\label{subapp:GLR_Proof:column_decomp}
We first show that 
\begin{align}
    \sqfrob{\matr{P}^{T}\msub[N,\Theta]{\matr{L}^{\dagger}}\msub[\Theta]{\matr{L}^{\dagger}}^{-1}} &= \frac{N}{m}\sqfrob{\msubgen[\{1\},\Theta]{\matr{L}^{\dagger}}\msub[\Theta]{\matr{L}^{\dagger}}^{-1}}  \\
    &+ \sqfrob{\msub[N,\Theta]{\matr{L}^{\dagger}}\msub[\Theta]{\matr{L}^{\dagger}}^{-1}} \label{eq:GLR_Proof:extract_P_frob}
\end{align}
\begin{proof} 

Let $\matr{K} = \matrsub[N,\Theta]{I}\msub[\Theta]{\matr{L}^{\dagger}}^{-1}$, then 
\begin{align} \sqfrob{\matr{P}^{T}\msub[N,\Theta]{\matr{L}^{\dagger}}\msub[\Theta]{\matr{L}^{\dagger}}^{-1}} &= \textrm{tr}\left(\matr{K}^{T}\matr{L}^{\dagger} \matr{P}\matr{P}^{T} \matr{L}^{\dagger}\matr{K}
 \right)
\end{align}

    As $\matr{1}_{m \times N}\matr{L}^{\dagger} = 0$ cross terms in $\matr{P}\matr{P}^{T}$ disappear in $\matr{L}^{\dagger}\matr{P}\matr{P}^{T}\matr{L}^{\dagger}$, 
    \begin{align}
    \matr{L}^{\dagger}\matr{P}\matr{P}^{T}\matr{L}^{\dagger}  &= \matr{L}^{\dagger}\left(\matr{I} + \frac{N}{m} \begin{pmatrix}
        \frac{1}{m}\matr{1}_{m \times m} & \matr{0}\\
        \matr{0} & \matr{0}
    \end{pmatrix} \right)\matr{L}^{\dagger} \\
    &= \left(\matr{L}^{\dagger}\right)^{2} + \frac{N}{m} \matr{L}^{\dagger}\vect{v}_{1}\vect{v}_{1}^{T}\matr{L}^{\dagger}.
    \label{eq:P_rearrangement_with_extra}
    \end{align}
Therefore
\begin{align}
    &\sqfrob{\matr{P}^{T}\msub[N,\Theta]{\matr{L}^{\dagger}}\msub[\Theta]{\matr{L}^{\dagger}}^{-1}} \\
    &= \trace{\matr{K}^{T}\matr{L}^{\dagger}\matr{L}^{\dagger}\matr{K}} + \frac{N}{m}\trace{\matr{K}^{T}\matr{L}^{\dagger}\vect{v}_{1}\vect{v}_{1}^{T}\matr{L}^{\dagger}\matr{K}} \\
    &=  \sqfrob{\msub[N,\Theta]{\matr{L}^{\dagger}}\msub[\Theta]{\matr{L}^{\dagger}}^{-1}} + \frac{N}{m}\sqfrob{\msubgen[\{1\},\Theta]{\matr{L}^{\dagger}}\msub[\Theta]{\matr{L}^{\dagger}}^{-1}}.
\end{align}
\end{proof}

\subsection{Bounding (\ref{eq:GLR_Proof_intro:small1})}
\label{subapp:GLR_Proof:adjustment_term}
We follow the idea in \cite[Eq. 20-23]{householder1965kantorovich}, but use newer work to account for $\matr{L}^{\dagger}$ not being positive definite.
\begin{align}
    &1 + \sqfrob{\msubgen[\{1\},\Theta]{\matr{L}^{\dagger}}\msub[\Theta]{\matr{L}^{\dagger}}^{-1}} \label{eq:proof_GLR_fb_kantorovich_rminus1_first}\\
    &= 1 + \lambda_{max}\left( \msub[\Theta]{\matr{L}^{\dagger}}^{-1}\msub[\Theta]{\matr{L}^{\dagger}\vect{v}_{1}\vect{v}_{1}^{T}\matr{L}^{\dagger}}\msub[\Theta]{\matr{L}^{\dagger}}^{-1}\right) \\
    &= \lambda_{max}\left(\matrsub[\Theta]{I} + \msub[\Theta]{\matr{L}^{\dagger}}^{-1}\msub[\Theta]{\matr{L}^{\dagger}\vect{v}_{1}\vect{v}_{1}^{T}\matr{L}^{\dagger}}\msub[\Theta]{\matr{L}^{\dagger}}^{-1}\right)\\
    &= \lambda_{max}(\msub[\Theta]{\matr{L}^{\dagger}}^{-1}\msub[\Theta]{\matr{L}^{\dagger}}\msub[\Theta]{\matr{L}^{\dagger}}\msub[\Theta]{\matr{L}^{\dagger}}^{-1}\\
    &\qquad\qquad+ \msub[\Theta]{\matr{L}^{\dagger}}^{-1}\msub[\Theta]{\matr{L}^{\dagger}\vect{v}_{1}\vect{v}_{1}^{T}\matr{L}^{\dagger}}\msub[\Theta]{\matr{L}^{\dagger}}^{-1}) \\
    &\leq \lambda_{max}(\msub[\Theta]{\matr{L}^{\dagger}}^{-1}\msub[\Theta]{\matr{L}^{\dagger}\proj{\Theta}\matr{L}^{\dagger}}\msub[\Theta]{\matr{L}^{\dagger}}^{-1} \label{eq:proof_GLR_fb_kantorovich_proj_expand_rminus1}\\
    &\qquad\qquad+ \msub[\Theta]{\matr{L}^{\dagger}}^{-1}\msub[\Theta]{\matr{L}^{\dagger}\vect{v}_{1}\vect{v}_{1}^{T}\matr{L}^{\dagger}}\msub[\Theta]{\matr{L}^{\dagger}}^{-1}) \\
    &\qquad\qquad+ \msub[\Theta]{\matr{L}^{\dagger}}^{-1}\msub[\Theta]{\matr{L}^{\dagger}\projgen{\set{S}^{c}}\matr{L}^{\dagger}}\msub[\Theta]{\matr{L}^{\dagger}}^{-1}) \label{eq:proof_GLR_fb_kantorovich_add_posdef_rminus1} \\
    &= \lambda_{max}\left(\msub[\Theta]{\matr{L}^{\dagger}}^{-1}\msub[\Theta]{\left(\matr{L}^{\dagger}\right)^{2}}\msub[\Theta]{\matr{L}^{\dagger}}^{-1})\right) \label{eq:proof_GLR_fb_kantorovich_rminus1_last}
\end{align}
where in (\ref{eq:proof_GLR_fb_kantorovich_proj_expand_rminus1}) we use $ \msub[\Theta]{\matr{L}^{\dagger}}\msub[\Theta]{\matr{L}^{\dagger}}= \msub[\Theta]{\matr{L}^{\dagger}\proj{\Theta}\matr{L}^{\dagger}}$. We weaken the inequality in step (\ref{eq:proof_GLR_fb_kantorovich_add_posdef_rminus1}) 
 by adding a positive semi-definite matrix  ($\projgen{\set{S}^{c}}$ is positive semidefinite so for any $\matr{X}$, $\matr{X}\projgen{\set{S}^{c}}\matr{X}^{T}$ is positive semidefinite) -- this is valid by Weyl's inequality \cite[Corollary III.2.2, page 63]{bhatia2013matrix}. The final step comes from $\{1\} \cup \Theta \cup \set{S}^{c} = \set{N}$ so $\matr{X}\vect{v}_{1}\vect{v}_{1}^{T}\matr{X}^{T} + \matr{X}\proj{\Theta}\matr{X}^{T} + \matr{X}\projgen{\set{S}^{c}}\matr{X}^{T} = \matr{X}\matr{X}^{T}$ for any matrix $\matr{X}$.

 We now apply a matrix variant of Kantorovich which holds for positive semi-definite matrices -- \cite[Theorem 3]{liu1997kantorovich} with $\matr{X} = \matrsub[N,\Theta]{I}$,  $\matr{H} = \matr{L}\matr{L}^{\dagger} =\matr{I}_{N} - \frac{1}{N}\vect{1}_{N \times N}$ and $\matr{G} = \matr{L}^{\dagger}$. 
 
 This version has extra constraints on the ranges of $\matr{X}, \matr{G}$ and $\matr{H}$ to handle the semi-definiteness, which we now check. First, $\matr{G}\matr{H} = \matr{H}\matr{G}  = \matr{L}^{\dagger} \geq 0$.   Now $\text{range}(\matr{G}) = \text{range}(\matr{L}^{\dagger}) = \text{range}(\matr{L}\matr{L}^{\dagger}) = \text{range}(\matr{I}-\frac{1}{N}\vect{1}_{N \times N}) =\text{range}(\matr{H})$. As $\vectsub[\Theta]{1_{N}} = \matr{X}^{T}\vect{1}_{N} = \vect{0}$,  so $\text{range}(\vect{1}_{N}\vect{1}_{N}^{T}) \subseteq \text{ker}(\matr{X}^{T})$, and by the Fundamental Theorem of Linear Algebra $\text{ker}(
 X^{T})^{\perp} = \text{range}(\matr{X})$, so we see that $\text{range}(\matr{X})\subseteq \text{range}(\matr{G})$ and $\text{range}(\matr{X})\subseteq \text{range}(\matr{H})$. 

 Applying \cite[Theorem 3]{liu1997kantorovich} gives
 \begin{equation}
     \msub[\Theta]{\matr{L}^{\dagger}}^{-1}\msub[\Theta]{\left(\matr{L}^{\dagger}\right)^{2}}\msub[\Theta]{\matr{L}^{\dagger}}^{-1} \leq r \proj{\Theta}
 \end{equation}
 and therefore
 \begin{equation}
     \lambda_{max}\left( \msub[\Theta]{\matr{L}^{\dagger}}^{-1}\msub[\Theta]{\left(\matr{L}^{\dagger}\right)^{2}}\msub[\Theta]{\matr{L}^{\dagger}}^{-1}\right) \leq r 
 \end{equation}
By (\ref{eq:proof_GLR_fb_kantorovich_rminus1_first}-\ref{eq:proof_GLR_fb_kantorovich_rminus1_last}) we have

\begin{equation}
    \sqfrob{\msubgen[\{1\},\Theta]{\matr{L}^{\dagger}}\msub[\Theta]{\matr{L}^{\dagger}}^{-1}} \leq r-1.
\end{equation}
therefore, 
\begin{equation}
\left(\frac{N}{m}\right) \sqnormvec{\msubgen[\{1\}, \Theta]{\matr{L}^{\dagger}}\msub[\Theta]{\matr{L}^{\dagger}}^{-1}}
\leq \left(\frac{N}{m} \right)(r-1).
\end{equation}

\subsection{Bounding  (\ref{eq:GLR_Proof_intro:small2})}
\label{subapp:GLR_Proof:off_diag_err}
We use the Bloomfield-Watson-Knott extension of Kantorovich's inequality \cite[Thm 5]{khatri1982some}. Let $\matr{X} = \matr{Y} = \msub[\set{N},\Theta]{I}$, $\matr{G}=\matr{L}^{\dagger}$ and $\matr{H} = \matr{L}\matr{L}^{\dagger} =\matr{I}_{N} - \frac{1}{N}\vect{1}_{N \times N}$ in \cite[Thm 5]{khatri1982some}. As $\msub[\Theta]{\matr{H}^{2}} = \msub[\Theta]{\matr{I}}$,

\begin{align} 
    &\sqfrob{\msub[N,\Theta]{\matr{L}^{\dagger}}\msub[\Theta]{\matr{L}^{\dagger}}^{-1}} \leq \sum_{i=2}^{m}\omega\left(\max\left[1,\frac{\lambda_{N+2-i}}{\lambda_{i}}\right]\right)
\end{align}
The same checks we ran in Section \ref{subapp:GLR_Proof:adjustment_term} apply to \cite[Thm 5]{khatri1982some}, and so the constraints hold.

In \cite[Thm 5]{khatri1982some}, in the first part of (2.14), $\omega_{i} = \omega\left(\frac{\lambda_{N+2-i}}{\lambda_{i}}\right)$ -- these correspond to the case where $\frac{\lambda_{N+2-i}}{\lambda_{i}} \geq 1$. We have weakened the second part of (2.14) to be increasing by $1$ each time (as $\omega(1)=1$) for ease of presentation; this does not affect the downstream effectiveness of the bound, and lets us have the same form as the bandlimited bound.

\section{Bounding $\xi_{1}(\set{S})$}
\label{app:proof_unif_ub_xi_1_MSE}
\begin{lemma}
\label{lemma:unif_ub_xi_1_GLR}
Under GLR reconstruction,
\begin{equation}
    \xi_{1}(\set{S}) \leq k + B_{k}(m)
\end{equation}
    where $m=|\set{S}|$ and $B_{k}(m)$ is as defined in Lemma \ref{lemma:GLR_xi_2_bound_main_bl}.
\end{lemma}
\begin{proof}
    
Note that 
\begin{align}
    \matr{R}_{\set{S}}\matrsub[S,N]{I} &= (\matr{\Pi}_{\set{S}} + \mu\matr{L})^{-1}\matr{\Pi}_{\set{S}}  \\
    &= \matr{I} - (\matr{\Pi}_{\set{S}} + \mu\matr{L})^{-1}\mu \matr{L}
\end{align}
we have
\begin{align}
    \xi_{1}(\set{S}) &= \sqfrob{\matrsubU{N} - \matr{R}_{\set{S}}\matrsubU{S}} \\
    &= \textrm{tr}(\matr{I}_{k} - 2\matrsubU{N}^{T}\matr{R}_{\set{S}}\matrsubU{S}) + \sqfrob{\matr{R}_{\set{S}}\matrsubU{S}} \\
    &= \textrm{tr}\left(2\left(\matrsubU{N}^{T}(\matr{\Pi}_{\set{S}} + \mu\matr{L})^{-1}\mu \matr{L}\matrsubU{N}\right) - \matr{I}_{k}\right)  \\
    &\quad + 
 \sqfrob{\matr{R}_{\set{S}}\matrsubU{S}} \\
    &= 2\left(\sum_{i=1}^{k}\vect{u}_{i}^{T}(\matr{\Pi}_{\set{S}} + \mu\matr{L})^{-1}\vect{u}_{i}\mu\lambda_{i}))\right) - k  \\
    &\quad + \sqfrob{\matr{R}_{\set{S}}\matrsubU{S}}
\end{align}
By \cite[Eq. (1.7)]{nordstrom2011convexity} for $i>1, \vect{u}_{i}^{T}(\matr{\Pi}_{\set{S}} + \mu\matr{L})^{-1}\vect{u}_{i} \leq \vect{u}_{i}^{T}(\mu\matr{L})^{+}\vect{u}_{i} = (\mu\lambda_{i})^{-1}$, and for $i=1, \lambda_{1} = 0$, so
\begin{equation}
    \left(\sum_{i=1}^{k}\vect{u}_{i}^{T}(\matr{\Pi}_{\set{S}} + \mu\matr{L})^{-1}\vect{u}_{i}\mu\lambda_{i}))\right) \leq k-1 
\end{equation}
and therefore
\begin{equation}
    \xi_{1}(\set{S}) \leq (k-1) + \sqfrob{\matr{R}_{\set{S}}\matrsubU{S}}.
\end{equation}
As $\sqfrob{\matr{R}_{\set{S}}\matrsubU{S}} = \xi_{2,bl}$, by Lemma \ref{lemma:GLR_xi_2_bound_main_bl},
\begin{equation}
    \xi_{1}(\set{S}) \leq k + B_{k}(m).
\end{equation}
\end{proof}

\section{Proof of Proposition \ref{propn:GLR_simple}}
\label{app:proof_propn_GLR_simple}
As in the proof of Theorem \ref{thm:main_GLR_exist}, we show that $\Delta_{2} > 0$ and then apply Corollary \ref{corr:main_GLR_iff}. We use the following bounds and equalities:

\begin{align}
B_{weak}(m) &= r\left(\frac{N}{m} + m-1 \right) \\
    \xi_{2}(\set{S}) &\leq B_{weak}(m)\\
    \xi_{1}(\set{S}) &\leq k +  B_{weak}(m)\\
    \xi_{1}(\set{N}) &= \sum_{i=1}^{k} \left(1 - \frac{1}{1+\mu\lambda_{i}}\right)^{2} \\
    \xi_{2}(\set{N}) & = \sum_{i=1}^{N} \left( \frac{1}{1+\mu\lambda_{i}}\right)^{2}
\end{align}
These are proven in Lemma \ref{lemma:GLR_xi_2_bound_main}, Appendix \ref{app:proof_unif_ub_xi_1_MSE} Lemma \ref{lemma:unif_ub_xi_1_GLR}, and Lemma \ref{lemma:GLR_full_observation_MSE}. We also use that $\xi_{2,bl}(\set{S}) = \sqfrob{\matr{R}_{\set{S}}\matrsubU{S}} \leq \sqfrob{\matr{R}_{\set{S}}} = \xi_{2}(\set{S})$, which follows because $\matrsubU{N}\matrsubU{N}^{T} = \projbl$ is a projection and thus reduces the Frobenius norm.
We in fact use the following lower bounds:
\begin{align}
    \xi_{2}(N) &\geq N (1 + \mu\bar{\lambda})^{-2} \\
    \xi_{1}(N) &\geq 0
\end{align}
where the first comes via applying Jensen's inequality, and the second by positivity of quadratics.

We therefore see that, as $\Delta_{i}(\set{N},\set{S}^{c}) = \xi_{i}(\set{N}) - \xi_{i}(\set{S})$,
\begin{align}
    \Delta_{2}(\set{N},\set{S}^{c}) &\geq N(1 + \mu\bar{\lambda})^{-2} - B_{weak}(m) \label{eq:proof_GLR_simple:delta_2_lb}\\
    \Delta_{1}(\set{N},\set{S}^{c}) &\geq 0 - (k + B_{weak}(m)).
\end{align}
We now consider a sets $\set{S}$ of size $\lceil\sqrt{N}\rceil$ and show that $\Delta_{2}(\set{N},\set{S}^{c}) > 0$. By assumption $2r\sqrt{N} < N$ so $\sqrt[4]{N}\cdot(2r)^{-\frac{1}{2}} > 1$ and therefore $\mu_{ub\_weak} > 0$ and is real and it therefore possible to pick $0 < \mu < \mu_{ub\_weak}$.
By assumption $0<\mu<\mu_{ub\_weak}$, so 
\begin{align}
     \frac{N}{\left(1+\mu\ \bar{\lambda} \right)^{2}}
    &> \frac{N}{\left(1+\mu_{ub\_weak}\bar{\lambda} 
  \right)^{2}}\\ 
    &= 2r\sqrt{N} \\
    &= r \left(\frac{N}{\sqrt{N}} + \sqrt{N} \right) \\
    &\geq r\left(\frac{N}{\lceil\sqrt{N}\rceil} + \lceil\sqrt{N}\rceil - 1 \right) = B_{weak}(\lceil\sqrt{N}\rceil)
\end{align}
And therefore by (\ref{eq:proof_GLR_simple:delta_2_lb}) (\ref{eq:proof_main_GLR_exist:delta_2_lb}). We then applying Corollary \ref{corr:main_GLR_iff} in the same was as in the proof of Theorem \ref{thm:main_GLR_exist}, which mostly finishes the proof.

The remainder is computing $\tau_{GLR\_weak}$; this is done in the same manner as in Theorem \ref{thm:main_GLR_exist}, except using the bounds for $\Delta_{1}$ and $\Delta_{2}$ presented earlier in this proof, along with $B_{weak}(\lceil\sqrt{N}\rceil) \leq 2r\sqrt{N}$. which was presented above.

\section{Proof of Remark \ref{remark:mopt}}
\label{app:proof_of_remark_GLR_mopt}
We first show $m_{opt} \leq \lceil\frac{N+1}{2}\rceil$. 

First note that if $x > 0$ then $ \frac{1}{2}\left(\sqrt{x} + \frac{1}{\sqrt{x}}\right) \geq \sqrt{\sqrt{x}\frac{1}{\sqrt{x}}} = 1$ by the AM-GM inequality, and so $\forall x, \, \omega(x) \geq 1$. Secondly, $\omega$ is increasing. Therefore
\begin{equation}
    r\frac{N}{m} + m - 1 \leq B(m) \leq r\left(\frac{N}{m} + m - 1\right).
\end{equation}
If $f(m) < g(m)$ then $\min f(m) < \min g(m)$ so
\begin{align}
    2\sqrt{rN} - 1 \leq B(m_{opt}) < N
\end{align}
the LHS has a minimum at $m=\sqrt{rN}$ with value $2\sqrt{rN} - 1$ and so because $\min(\text{LHS}) < \min B(m)$, under the assumption $B(m_{opt}) < N$ we have
\begin{equation}
    2\sqrt{rN} -1 \leq N \label{eq:proof_Bmopt_r_bound}
\end{equation}
For $m \geq \left\lfloor \frac{N}{2}\right\rfloor$,
\begin{equation}
    B(m) = r\frac{N}{m} + m - 1 + c
\end{equation}
for a constant $c \geq 0$ not dependent on $m$. The function $r\frac{N}{m} + m - 1 + c$ is increasing for $m \geq \sqrt{rN}$ and by \ref{eq:proof_Bmopt_r_bound} $\lceil\sqrt{rN}\rceil \leq \lceil\frac{N+1}{2}\rceil$. Therefore, considering that $m$ is discrete, $m_{opt} \leq \lceil\frac{N+1}{2}\rceil$.

We now show that $m_{opt} \in \left[ \left\lfloor \sqrt{N} \right\rfloor, \left\lceil \sqrt{rN} \right\rceil \right]$. Write $r_{i} = \omega\left(\max\left[1,\frac{N+2-i}{\lambda_{i}}\right]\right)$. Then 
\begin{equation}
    B(m) = r\frac{N}{m} + \sum_{i=2}^{m} r_{i}
\end{equation}
and
\begin{equation}
    B(m+1) - B(m) = r_{m+1} - \frac{rN}{m(m+1)}
\end{equation}
As $B(m_{opt})$ is a global minimum it is a local minimum, so
$B(m_{opt} + 1) \geq B(m_{opt})$ and $B(m_{opt} - 1) \geq B(m_{opt})$. This can be written:
\begin{align}
     r_{m_{opt}} m_{opt}(m_{opt} + 1)&\geq rN \\
    r_{m_{opt}-1}m_{opt}(m_{opt} - 1) &\leq rN
\end{align}
As $1 \leq r_{i} \leq r$ for all $i$, we get
\begin{align}
    m_{opt}(m_{opt} + 1)&\geq N  \\
    m_{opt}(m_{opt} - 1) &\leq rN 
\end{align}
and as $(m+1)^{2} > m(m+1)$ and $(m-1)^{2} < m(m-1)$,
\begin{equation}
    \sqrt{N} - 1 < m_{opt} < \sqrt{rN} + 1
\end{equation}
as these inequalities are strict and $m_{opt}$ is an integer,
\begin{equation}
\left\lfloor\sqrt{N}\right\rfloor \leq m_{opt} \leq \left\lceil\sqrt{rN}\right\rceil.
\end{equation}

\section{Sensitivity Analysis of Theorem \ref{thm:main_GLR_exist} and Proposition \ref{propn:GLR_simple}}
\label{app:GLR_sensitivity}

Proposition \ref{propn:GLR_simple} is centered around the graph property $r$\footnote{We give some intuition for $r$: low $r$, equivalent to low $\frac{\lambda_{N}}{\lambda_{2}}$, is a known condition in the Network Sychronisation literature which allows for dynamic oscillators on a network to synchronise \cite{barahona2002synchronization}.} which measures the spread of the eigenvalues of $\matr{L}$; $r$ is increasing in $\frac{\lambda_{N}}{\lambda_{2}}$. $r$ can also be understood as a bound on degree spread; let $d_{min}$ and $d_{max}$ be the minimum and maximum node degrees on the graph, then $4r \geq \frac{d_{max}}{{d_{min}}}$ (via the min-max theorem).
From condition (\ref{eq:weak_GLR_constraint}) we require $r$ to be small for the theorem to hold -- which means the degree spread needs to be narrow. %

To better understand $\mu_{ub\_weak}$ and $\tau_{ub\_weak}$, we present a sensitivity analysis; this analysis varies one parameter of $\bar{\lambda}$, $N$, $r$ and $\mu$ while holding the others constant.

\begin{table}[h]
\caption{Sensitivity analysis for Proposition \ref{propn:GLR_simple} .}
\centering
\begin{tabular}{|l|c|c|c|c|} \hline
     \rule{0pt}{8pt}& $\bar{\lambda}$ & $N$ & $r$ & $\mu$ \\ \hline
    \rule{0pt}{8pt} $\mu_{ub\_weak}$ & $\mathcal{O}(\bar{\lambda}^{-1})$ & $\mathcal{O}(\sqrt[4]{N})$ & $\mathcal{O}(r^{-\frac{1}{2}})$ &  N/A \\
    \rule{0pt}{8pt}$\tau_{GLR\_weak}$ & $\mathcal{O}(\bar{\lambda}^{-2})$ & $\mathcal{O}(N^{-\frac{1}{2}})$ & $\mathcal{O}(r^{-1})$ & $\mathcal{O}(\mu^{-2})$ \\ \hline
\end{tabular}
\label{tbl:GLR_sensitivity_simple}
\end{table}

We see both $\mu_{ub\_weak}$ and $\tau_{GLR\_weak}$ are decreasing in $r$, and furthermore the constraint (\ref{eq:weak_GLR_constraint}) also requires low $r$. This shows that graphs with low $r$ are more amenable to reconstruction with fewer observations via GLR.

Proposition \ref{propn:GLR_simple} summarises the full spectrum of $\matr{L}$ via $\lambda_{2}$, $\bar{\lambda}$ and $\lambda_{N}$ and thus can fail to show that reducing sample size reduces MSE for graph models like the Barabasi-Albert graph model which have a few very large eigenvalues with an otherwise generally concentrated spectrum. Theorem \ref{thm:main_GLR_exist} overcomes this by using the full spectrum; its broader applicability makes it worth understanding.

We briefly talk directly about the parameters of to Theorem \ref{thm:main_GLR_exist}; as $r$ decreases, $m_{opt}$ decreases to $\sqrt{N}$ while $\mu_{ub}$ and $\tau_{GLR}$ increase, which is in line with our suggestion of $r$ as a measure of amenability to reconstruction.

\section{Asymptotic Results for GLR under Full-band Noise}
\label{app:Proof_GLR_big_N}

\begin{propn}
\label{propn:GLR_big_N_app}
Fix $p \in (0,1]$. Consider graphs drawn from the distribution of random connected Erdős–Rényi graphs with $N$ vertices and edge probability $p$. Then as $N \to \infty$, condition (\ref{eq:GLR_exist_thm_B_constraint}) in Theorem \ref{thm:main_GLR_exist} holds w.h.p. . Furthermore, as $N \to \infty$,
    
        \begin{equation}    
        r \overset{p}{\to} 1, \quad
         {m_{opt}}\cdot {{N}}^{-\frac{1}{2}} \overset{p}{\to} 1, \quad
       \mu_{ub}\lambda_{2} \overset{p}{\to} +\infty. 
    \end{equation}
    Assume $\frac{k}{N}$ is fixed and choose $\mu = \frac{c}{\lambda_{2}}$,  or $\frac{c}{\lambda_{N}}$, or $\frac{c}{\sqrt{\lambda_{2}\lambda_{N}} }$ for optimal bias-variance trade-off 
 at $\set{S} = \set{N}$ \cite{chen2017GLRbias}, then
    \begin{align}
        \tau_{GLR} \to \left(1+2c\right)^{-1}.
    \end{align}
\end{propn}
\begin{proof}[Proof Sketch]
By \cite[Theorem 1]{jiang2012low}, $\lambda_{2} \approx Np - \sqrt{2N\log N}$ and $\lambda_{N} \approx Np + \sqrt{2N\log N}$ as $N \to \infty$ so $\frac{\lambda_{N}}{\lambda_{2}} \overset{p}{\to} 1$ and $r \overset{p}{\to} 1$ . Approximately, $B(m) \to \frac{N}{m} + m - 1$, which lets us bound $\mu_{ub}$ and $\tau_{GLR}$.  See Appendix \ref {app:Proof_GLR_big_N} for a full proof.
\end{proof}

\begin{proof}
We prove the proposition by proving some bounds relating terms to $r$, $\lambda_{2}$ and $\lambda_{N}$, then showing that $r \to 1$.
\subsection{Notation and terms}
While we use standard terminology in probability theory, for convenience we define some of it here.

An event $E_{n}$ happens \emph{with high probability} (abbreviated w.h.p.) if $\lim_{n \to \infty} \mathbb{P}(E_{n}) = 1$.
A sequence of random variables $X_{n}$ \emph{converges in probability} to a random variable or constant $X$ if $\forall \epsilon>0:\, \mathbb{P}(\left\lvert X_{n} - X \right\rvert > \epsilon) \to 0$. We write this as
\begin{align}
    X_{n} \overset{p}{\to} X.
\end{align}
Similarly, we write
\begin{align}
    X_{n} \overset{p}{\to} +\infty.
\end{align}
to mean that $\forall c>0:\, \mathbb{P}( X < c) \to 0$
\subsection{Limits in probability}
In this section, we show $r \overset{p}{\to} 1$ in our setting. Morally our argument is that $\frac{\lambda_{N}}{\lambda_{2}} \approx \frac{Np + \sqrt{2N\log N}}{Np - \sqrt{2N\log N}} \overset{p}{\to} 1$. We now prove this formally, starting with statements including disconnected graphs, and using them to derive results about connected graphs.
By \cite[Theorem 1 (i) \& (ii)]{jiang2012low}, across all Erdős–Rényi graphs with edge probability $p$
\begin{align}  
    \frac{ Np - \lambda_{2}}{\sqrt{N\log N}} \overset{p}{\to} \sqrt{2} \\
    \frac{ \lambda_{N} - Np}{\sqrt{N\log N}} \overset{p}{\to} \sqrt{2}.
\end{align}
By Slutsky's Theorem, and as convergence in distribution to a constant implies convergence in probability, we add and square the ratios to get
\begin{align}
    \frac{\left(\lambda_{N} - \lambda_{2}\right)^{2}}{N\log N} \overset{p}{\to} 8. \label{eq:proof_by_slutsky_sq_diff}
\end{align}
We now bound $\frac{N\log N}{\lambda_{2}\lambda_{N}}$. 
By the definition of convergence in probability, because $\frac{Np}{N \log N} \to \infty$ and by the triangle inequality,
\begin{align}
    &\forall \epsilon > 0, \lim_{N \to \infty}\mathbb{P}\left( \left\lvert \frac{ Np - \lambda_{2}}{\sqrt{N\log N}} - \sqrt{2} \right\rvert > \epsilon \right) = 0 \\
    \implies &\forall c > 0, \lim_{N \to \infty}\mathbb{P}\left( \left\lvert \frac{ \lambda_{2}}{\sqrt{N\log N}} \right\rvert < c \right) = 0 \label{eq:Proof_limit_prob_connected} \\
    \implies &\forall \epsilon > 0, \lim_{N \to \infty}\mathbb{P}\left( \left\lvert \frac{\sqrt{N\log N}}{ \lambda_{2}} \mathbbm{1}\{\lambda_{2} > 0\}\right\rvert > \epsilon  \right) = 0
\end{align}
therefore
\begin{align}
    \frac{\sqrt{N\log N}}{ \lambda_{2}} \mathbbm{1}\{ \lambda_{2} > 0 \} \overset{p}{\to} 0
\end{align}
As $0 < \frac{\sqrt{N\log N}}{ \lambda_{N}} < \frac{\sqrt{N\log N}}{ \lambda_{2}}$,
\begin{align}
    \frac{{N\log N}}{ \lambda_{2} \lambda_{N}} \mathbbm{1}\{ \lambda_{2} > 0 \} \overset{p}{\to} 0.
\end{align}
we use Slutsky to multiply this with (\ref{eq:proof_by_slutsky_sq_diff}) to get
\begin{align}
    (r-1)\mathbbm{1}\{ \lambda_{2} > 0 \} =  \frac{\left(\lambda_{N} - \lambda_{2}\right)^{2}}{ 4\lambda_{2} \lambda_{N}} \mathbbm{1}\{ \lambda_{2} > 0 \} \overset{p}{\to} 0
\end{align}

By (\ref{eq:Proof_limit_prob_connected}) we know $\lambda_{2} > 0$ w.h.p. . Therefore, for any $\epsilon > 0$
\begin{align}
    &\mathbb{P}\left( \lvert r-1 \rvert > \epsilon \,\big\vert\, \lambda_{2}  > 0 \right)  \\&= \frac{\mathbb{P}\left( \lvert r-1 \rvert> \epsilon \cap \lambda_{2}  > 0 \right)}{\mathbb{P}\left( \lambda_{2}  > 0 \right)} \\
    &= \frac{\mathbb{P}\left( \lvert r-1 \rvert \mathbbm{1}\{\lambda_{2}  > 0\} > \epsilon \right)}{\mathbb{P}\left( \lambda_{2}  > 0 \right)} \to \frac{0}{1}
\end{align}
Therefore under our setting of considering only connected graphs (i.e. that $\lambda_{2} > 0$), $r \overset{p}{\to} 1$.

\subsection{Parameter bounds}
We first bound $B(m)$ in terms of $r$. Note that $\omega$ is increasing and therefore for $2 \leq i \leq N$,
\begin{align}
   \omega\left(\max\left[1,\frac{\lambda_{N+2-i}}{\lambda_{i}}\right]\right) \leq \omega\left(\frac{\lambda_{N}}{\lambda_{2}}\right) = r.
\end{align}
If $x > 0$ then $ \frac{1}{2}\left(\sqrt{x} + \frac{1}{\sqrt{x}}\right) \geq \sqrt{\sqrt{x}\frac{1}{\sqrt{x}}} = 1$ by the AM-GM inequality, and so $\forall x, \, \omega(x) \geq 1$. Therefore
\begin{align}
    1 &\leq \omega\left(\max\left[1,\frac{\lambda_{N+2-i}}{\lambda_{i}}\right]\right) \leq r \\
    \frac{N}{m} + m - 1 &\leq B(m) \leq r\left(\frac{N}{m} + m - 1\right).
\end{align}
We also have that
\begin{align}
   \frac{1}{\lambda_{N}} \left( \sqrt{\frac{N}{B(m)}}-1 \right) \leq \mu_{ub}\left(m\right) \leq \frac{1}{\lambda_{2}} \left( \sqrt{\frac{N}{B(m)}}-1 \right).
\end{align}

\subsection{Pulling it together}
We already have that $r \overset{p}{\to} 1$. As $\sqrt{N} \leq m_{opt} \leq \left\lceil\sqrt{rN}\right\rceil$, $\frac{m_{opt}}{\sqrt{N}} \overset{p}{\to} 1$. By the squeeze theorem, for a fixed $m$,
\begin{align}
    \frac{B(m)}{N} \overset{p}{\to} \frac{1}{m}.
\end{align}
Because $\frac{1}{m} < 1$ for $m > 1$, condition (\ref{eq:GLR_exist_thm_B_constraint}) in Theorem \ref{thm:main_GLR_exist} holds w.h.p. as $N \to \infty$.
As $B(m) > 0$,
\begin{align}
    \sqrt{\frac{N}{B(m)}} - 1 \overset{p}{\to} \sqrt{m} - 1
\end{align}

Conditioning on $\lambda_{2} > 0$ does not change (\ref{eq:Proof_limit_prob_connected}), so it applies when we only consider the set of connected Erdős–Rényi graphs as well. By setting $c = \sqrt{m} - 1$ in (\ref{eq:Proof_limit_prob_connected}), we get 
\begin{align}
    \frac{1}{\lambda_{2}} \left( \sqrt{\frac{N}{B(m)}}-1 \right) \overset{p}{\to} 0.
\end{align}
and therefore for fixed $m$, $\mu_{ub}(m) \to 0$. We now calculate $\mu_{ub}(m_{opt})$. First note that as $x \mapsto x + \frac{1}{x}$ invertible and monotone on $x \in (0,1)$, so its inverse is continuous. Consider that $\frac{\lambda_{2}}{\lambda_{N}} \in (0,1)$. As the inverse is continuous, we have that $\forall \epsilon > 0 \,\exists \delta > 0:\, \left\lvert\left( \frac{\lambda_{2}}{\lambda_{N}} + \frac{\lambda_{N}}{\lambda{2}}\right) - 2\right\rvert < \epsilon \implies \left\lvert\frac{\lambda_{2}}{\lambda_{N}} - 1\right\rvert < \delta$. Using this with the definition of convergence in probability and that $r \overset{p}{\to} 1$ gives
\begin{equation}
    \frac{\lambda_{2}}{\lambda_{N}} \overset{p}{\to} 1.
\end{equation}

Next, note that
\begin{align}
    \frac{\sqrt{N}}{m_{opt}} + \frac{m_{opt}}{\sqrt{N}} - \frac{1}{\sqrt{N}} &\leq \frac{B(m_{opt})}{\sqrt{N}} \leq r\left( \frac{\sqrt{N}}{m_{opt}} + \frac{m_{opt}}{\sqrt{N}} \right).
\end{align}

so 
\begin{align}
 \frac{m_{opt}}{\sqrt{N}} \to 1 \implies \frac{B(m_{opt})}{\sqrt{N}} \overset{p}{\to} 2.  \label{eq:Proof_GLR_big_N:B_mopt}
\end{align}
Therefore (\ref{eq:GLR_exist_thm_B_constraint}) in Theorem \ref{thm:main_GLR_exist} holds for $m_{opt}$ w.h.p. as $N \to \infty$.
Also
\begin{align} \frac{\mu_{ub}\left(m_{opt}\right)\lambda_{2}}{\sqrt[4]{N}} \geq &\frac{\lambda_{2}}{\lambda_{N}} \left( \sqrt{\frac{\sqrt{N}}{B(m_{opt})}}-\frac{1}{\sqrt[4]{N}} \right) \\
&\overset{p}{\to} \frac{1}{\sqrt{2}}
\end{align}
and therefore ${\mu_{ub}\left(m_{opt}\right)\lambda_{2}} \overset{p}{\to} \infty$.

Finally, we bound $\tau_{GLR}$. We first take limits of $\left(\frac{1}{1 + \mu\lambda_{i}}\right)^{2}$. Note that
\begin{align}
    1 \leq \frac{\lambda_{i}}{\lambda_{2}} \leq \frac{\lambda_{N}}{\lambda_{2}} \overset{p}{\to} 1, \\
    1 \geq \frac{\lambda_{i}}{\lambda_{N}} \geq \frac{\lambda_{2}}{\lambda_{N}} \overset{p}{\to} 1
\end{align}
and $\frac{\lambda_{i}}{\sqrt{\lambda_{2}\lambda_{N}}} \in \left[ \frac{\lambda_{i}}{\lambda_{N}}, \frac{\lambda_{i}}{\lambda_{2}} \right]$ and so $\frac{\lambda_{i}}{\sqrt{\lambda_{2}\lambda_{N}}} \overset{p}{\to} 1$. Therefore, for all three choices of $\mu$, $\mu\lambda_{i} \overset{p}{\to} c$. As

\begin{align}
    \frac{1}{N} + \frac{N-1}{N}\left(\frac{1}{1 + \mu\lambda_{N}}\right)^{2} &\leq \frac{1}{N}\sum_{i = 1}^{N}\left(\frac{1}{1+  \mu \lambda_{i}} \right)^{2} \\
    \frac{1}{N} + \frac{N-1}{N}\left(\frac{1}{1 + \mu\lambda_{2}}\right)^{2} &\geq \frac{1}{N}\sum_{i = 1}^{N}\left(\frac{1}{1+  \mu \lambda_{i}} \right)^{2}
\end{align}
For all three choices of $\mu$, 
\begin{equation}
    \frac{1}{N}\sum_{i = 1}^{N}\left(\frac{1}{1+  \mu \lambda_{i}} \right)^{2} \to \left(\frac{1}{1 + c}\right)^{2}.
\end{equation}
Similarly,
\begin{equation}
    \frac{1}{N}\sum_{i = 1}^{N}\left(1 - \frac{1}{1+  \mu \lambda_{i}} \right)^{2} \to \left(1 - \frac{1}{1 + c}\right)^{2}.
\end{equation}
As $\frac{B(m_{opt})}{\sqrt{N}} \overset{p}{\to} 2$, we must have that $\frac{B(m_{opt})}{{N}} \overset{p}{\to} 0$ and as $\frac{k}{N}$ is constant, $\frac{B(m_{opt})}{k} \overset{p}{\to} 0$. Therefore, by Slutsky, for the three choices of optimal $\mu$,
\begin{align}
    \tau_{GLR}(\mu,m_{opt}) \to \frac{\left(\frac{1}{1 + c}\right)^{2}}{1 - \left(1 - \frac{1}{1 + c}\right)^{2}} = \frac{1}{1+2c}.
\end{align}
\end{proof}

\section{Proof of Lemma \ref{lemma:GLR_xi_2_bound_main_bl}}
\label{app:GLR_bandlimited_lemma_proof}
We follow the structure and notation of Appendix \ref{lemma:GLR_xi_2_bound_main} and follow the same broad outline. However, there are two additional complexities: firstly, we cannot remove the dependence on $\mu$ at the same stage as we did in the proof of Lemma \ref{lemma:GLR_xi_2_bound_main} because the terms are not of the form $\matr{X}\msub[\Theta]{\mu\matr{I} + \matr{L}^{\dagger}}^{-2}\matr{X}^{T}$, as $\matrsubU{S}$ breaks the symmetry we were using. We overcome this by only removing the dependence on $\mu$ at the end of the proof.

Secondly, we cannot directly apply the matrix versions of Kantorovich's inequality, because $\proj{\Theta}$ and $\projbl$ do not commute, thus breaking the assumptions needed. We prove two bandlimited variants of Kantorovich instead, one which works by directly building the proof from the ground up, and the second of which works by finding a matrix that commutes with $\matr{L}\matr{L}^{\dagger}$ and upper bounds $\projbl$ without weakening the proof too much.

\subsection{Notation and useful matrices}
While we follow the notation described in Appendix \ref{app:GLR_bandlimited_thm}, we also define the following matrices:
\begin{align}
     \matr{L}^{\dagger}_{\mu} &= \mu\matr{I} + \matr{L}^{\dagger} \\
     \projgen{\matr{L}} &= \matr{I} - \frac{1}{N}\vect{1}_{N \times N} = \matr{L}^{\dagger}\matr{L} = \matr{L}\matr{L}^{\dagger}
\end{align}
where the latter can be seen to be a projection operator as $ \projgen{\matr{L}} = \projgen{\matr{L}}^{T} = \projgen{\matr{L}}^{2}$. Futhermore, $\projgen{\matr{L}}\matr{L}^{\dagger} = \matr{L}^{\dagger}$ and because $\vect{1}_{N}\matrsub[N,\Theta]{I} = 0$, $\projgen{\matr{L}}\matrsub[N,\Theta]{I} = \matrsub[N,\Theta]{I}$.

\subsection{Proof overview}
Overall we show that 
\begin{align}
    &\xi_{2,bl}(\set{S}) = \sqfrob{\matr{R}_{\set{S}}\matrsubU{S}} \\
    &= 1 + \sqfrob{\matr{R}_{\set{S}}\msubgen[\set{S},2:k]{\matr{U}}} \\
    &= 1 + \sqfrob{\msubgen[\set{N},\Theta]{\matr{L}^{\dagger}} \msubgen[\Theta]{\mu\matr{I} + \matr{L}^{\dagger}}^{-1}\msubgen[\set{\Theta},2:k]{\matr{U}} } \\
    &+ {\frac{N}{m}}\sqnormvec{\vect{v}_{1}^{T}\left( \matr{I} - \msubgen[\set{N},\Theta]{\matr{L}^{\dagger}} \msubgen[\Theta]{ \matr{L}_{\mu}^{\dagger}}^{-1}\msubgen[\Theta,\set{N}]{\matr{I}}\right)\msubgen[\set{N},2:k]{\matr{U}}} \\
    &\leq  1 + B_{k}(m)
\end{align}
by proving that
\begin{align}
    &{\frac{N}{m}}\sqnormvec{\vect{v}_{1}^{T}\left( \matr{I} - \msubgen[\set{N},\Theta]{\matr{L}^{\dagger}} \msubgen[\Theta]{ \matr{L}_{\mu}^{\dagger}}^{-1}\msubgen[\Theta,\set{N}]{\matr{I}}\right)\msubgen[\set{N},2:k]{\matr{U}}} \\
    &\leq \frac{N}{m}r_{bl}
\end{align}
and
\begin{align}
   &\sqfrob{\msubgen[\set{N},\Theta]{\matr{L}^{\dagger}} \msubgen[\Theta]{\mu\matr{I} + \matr{L}^{\dagger}}^{-1}\msubgen[\set{\Theta},2:k]{\matr{U}} }  \\
   &\leq \sum^{m}_{i=1}\omega\left(\max\left[1,\frac{\lambda_{k+2-i}}{\lambda_{i}}\right]\right).
\end{align}

\subsection{Row decomposition}
Noting that $\matr{R}_{\set{S}}\vectsub[S]{{u}_{1}} = \frac{1}{\sqrt{N}}\matr{R}_\set{S}\vect{1}_{S} = \frac{1}{\sqrt{N}}\vect{1}_{N}$,
\begin{align}
    \xi_{2,bl}(\set{S}) &= \sqfrob{\matr{R}_{\set{S}}\matrsubU{S}} \\
    &= \sqnormvec{\matr{R}_{\set{S}}\vectsub[S]{u_{1}}} + \sqfrob{\matr{R}_{\set{S}}\msubgen[\set{S},2:k]{\matr{U}}} \\
    &= 1 + \sqfrob{\matr{R}_{\set{S}}\msubgen[\set{S},2:k]{\matr{U}}}.
\end{align}

\subsection{Explicit submatrix computation}
We first show
\begin{align}
    &\left(\proj{S} + \mu\matr{L}\right)^{-1}\projgen{\matr{L}} \\
    &= \frac{1}{\mu}\matr{P}^{T}\left( \matr{L}^{\dagger} - \msubgen[\set{N},\Theta]{\matr{L}^{\dagger}} \msubgen[\Theta]{ \matr{L}^{\dagger}_{\mu}}^{-1}\msubgen[\Theta,\set{N}]{\matr{L}^{\dagger}}\right)
\end{align}
\begin{proof}
    From Appendix \ref{app:GLR_bandlimited_thm}, $\left(\proj{S} + \mu\matr{L} \right)\matr{P}^{T} = \proj{\Theta} + \mu\matr{L}$, so
    \begin{align}
        \left(\proj{S} + \mu\matr{L}\right)\frac{1}{\mu}\matr{P}^{T}\matr{L}^{\dagger} = \left(\frac{1}{\mu}\proj{\Theta}\matr{L}^{\dagger} + \projgen{\matr{L}} \right)
    \end{align}
    as $\msub[N,\Theta]{\projgen{\matr{L}}} = \matrsub[N,\Theta]{I}$,
    \begin{align}
        &\left(\proj{S} + \mu\matr{L}\right)\frac{1}{\mu}\matr{P}^{T}\msub[N,\Theta]{\matr{L}^{\dagger}}\msub[\Theta]{\matr{L}^{\dagger}_{\mu}}^{-1} \\
        &= \left(\frac{1}{\mu}\proj{\Theta}\matr{L}^{\dagger} + \projgen{\matr{L}} \right)\matrsub[N,\Theta]{I}\msub[\Theta]{\matr{L}^{\dagger}_{\mu}}^{-1} \\
        &= \frac{1}{\mu}\matrsub[N,\Theta]{I} \left(\msub[\Theta]{\matr{L}^{\dagger}} + \mu\matrsub[\Theta]{I} \right)\msub[\Theta]{\matr{L}^{\dagger}_{\mu}}^{-1} \\
        &= \frac{1}{\mu}\matrsub[N,\Theta]{I}
    \end{align}
    Then
    \begin{align}
        &\left(\proj{S} + \mu\matr{L}\right)\frac{1}{\mu}\matr{P}^{T}\msub[N,\Theta]{\matr{L}^{\dagger}}\msub[\Theta]{\matr{L}^{\dagger}_{\mu}}^{-1}\msub[\Theta, N]{\matr{L}^{\dagger}} \\
        &= \frac{1}{\mu}\matrsub[N,\Theta]{I}\msub[\Theta, N]{\matr{L}^{\dagger}} = \frac{1}{\mu}\proj{\Theta}\matr{L}^{\dagger}
    \end{align}
Therefore
\begin{align}
    &\left(\proj{S} + \mu\matr{L}\right) \frac{1}{\mu}\matr{P}^{T}\left( \matr{L}^{\dagger} - \msubgen[\set{N},\Theta]{\matr{L}^{\dagger}} \msubgen[\Theta]{ \matr{L}^{\dagger}_{\mu}}^{-1}\msubgen[\Theta,\set{N}]{\matr{L}^{\dagger}}\right) \\
    &= \left(\frac{1}{\mu}\proj{\Theta}\matr{L}^{\dagger} + \projgen{\matr{L}} \right) - \frac{1}{\mu}\proj{\Theta}\matr{L}^{\dagger} \\
    &= \projgen{\matr{L}}.
\end{align}
Multiplying both sides by $\left(\proj{S} + \mu\matr{L}\right)$ finishes the proof.
\end{proof}

We use this to compute $\left(\proj{S} + \mu\matr{L} \right)^{-1}\proj{S}$. 
\begin{equation}
 \left(\proj{S} + \mu\matr{L} \right)^{-1}\left(\proj{S} + \mu\matr{L} \right) = \matr{I}
\end{equation}
 means
\begin{align}
 &\left(\proj{S} + \mu\matr{L} \right)^{-1}\proj{S} 
 \\&= \matr{I} - \mu\left(\proj{S} + \mu\matr{L} \right)^{-1}\matr{L} \\
 &= \matr{I} - \mu\left(\proj{S} + \mu\matr{L} \right)^{-1}\projgen{\matr{L}}\matr{L} \\
 &= \matr{I} - \matr{P}^{T}\left( \matr{L}^{\dagger} - \msubgen[\set{N},\Theta]{\matr{L}^{\dagger}} \msubgen[\Theta]{ \matr{L}_{\mu}^{\dagger}}^{-1}\msubgen[\Theta,\set{N}]{\matr{L}^{\dagger}}\right)\matr{L}\\
 &= \matr{I} - \matr{P}^{T}\left( \projgen{\matr{L}} - \msubgen[\set{N},\Theta]{\matr{L}^{\dagger}} \msubgen[\Theta]{ \matr{L}_{\mu}^{\dagger}}^{-1}\msubgen[\Theta,\set{N}]{\matr{I}}\right)
\end{align}
As 
\begin{align}
\projgen{\matr{L}}\msubgen[\set{N},2:k]{\matr{U}} = \msubgen[\set{N},2:k]{\matr{U}}, \\
\matr{P}^{T} = \matr{I} - \frac{1}{m}\matr{1}_{N \times m}\matrsub[S,N]{I} = \matr{I} - \sqrt{\frac{N}{m}}\vect{u}_{1}\matr{v}_{1}^{T}
\end{align}
where $\vect{u}_{1} = \frac{1}{\sqrt{N}}\vect{1}_{N}$ is the null eigenvector of $\matr{L}$ and $\vect{v}_{1} = \frac{1}{\sqrt{m}}\vect{1}_{\set{S}}$. We have
\begin{align}
    &\matr{R}_{\set{S}}\msubgen[\set{S},2:k]{\matr{U}} \\ 
    &=  \msubgen[\set{N},\Theta]{\matr{L}^{\dagger}} \msubgen[\Theta]{\mu\matr{I} + \matr{L}^{\dagger}}^{-1}\msubgen[\set{\Theta},2:k]{\matr{U}} \\
    &+ \sqrt{\frac{N}{m}}\vect{u}_{1}\vect{v}_{1}^{T}\left( \matr{I} - \msubgen[\set{N},\Theta]{\matr{L}^{\dagger}} \msubgen[\Theta]{ \matr{L}_{\mu}^{\dagger}}^{-1}\msubgen[\Theta,\set{N}]{\matr{I}}\right)\msubgen[\set{N},2:k]{\matr{U}}
\end{align}

Because $\matr{X}^{T}\matr{L}^{\dagger}\vect{u}_{1}\vect{y}^{T} = \vect{0}$, we have $\sqfrob{\matr{L}^{\dagger}\matr{X} + \vect{u}_{1}\matr{y}} = \sqfrob{\matr{L}^{\dagger}\matr{X}} + \sqnormvec{\vect{y}}$ for any matrix $\matr{X}$ and any vector $\vect{y}$.
This gives that
\begin{align} 
    &\sqfrob{\matr{R}_{\set{S}}\msubgen[\set{S},2:k]{\matr{U}}} \\ 
    &=  \sqfrob{\msubgen[\set{N},\Theta]{\matr{L}^{\dagger}} \msubgen[\Theta]{\matr{L}^{\dagger}_{\mu}}^{-1}\msubgen[\set{\Theta},2:k]{\matr{U}}} \label{eq:GLR_bandlimited_proof:wide_term} \\
    &+ {\frac{N}{m}}\sqfrob{\vect{v}_{1}^{T}{\left( \matr{I} - \msubgen[\set{N},\Theta]{\matr{L}^{\dagger}} \msubgen[\Theta]{ \matr{L}_{\mu}^{\dagger}}^{-1}\msubgen[\Theta,\set{N}]{\matr{I}}\right)\msubgen[\set{N},2:k]{\matr{U}}}} \label{eq:GLR_bandlimited_proof:v1_term}
\end{align}

\subsection{Bounding (\ref{eq:GLR_bandlimited_proof:wide_term})}
We show that
\begin{align}
    \sqfrob{\msubgen[\set{N},\Theta]{\matr{L}^{\dagger}} \msubgen[\Theta]{\matr{L}^{\dagger}_{\mu}}^{-1}\msubgen[\set{\Theta},2:k]{\matr{U}}} \leq 
\sum^{m}_{i=2}\omega\left(\max\left[1,\frac{\lambda_{k+2-i}}{\lambda_{i}}\right]\right).
\end{align}

\begin{proof}
    We seek to apply \cite[Theorem 5]{khatri1982some}.
    
    For this, we first construct some matrices which are diagonal in the eigenbasis of $\matr{L}$. We present them by their values on $\vect{u}_{1}$, $\msubgen[\set{N},2:k]{\matr{U}}$ and $\matr{I} - \projbl$ in Table \ref{tbl:bandlimited_kantorovich_matrices}.We write $$\projgen{bl(2:k)} = \msubgen[\set{N},2:k]{\matr{U}} \msubgen[\set{N},2:k]{\matr{U}}^{T}.$$

\begin{table}[h!]
\caption{Matrices diagonal in the eigenbasis of $\matr{L}$.}
    \begin{center}
        \begin{tabular}{|l|c|c|c|}
     \hline
       & $\vect{u}_{1}$ & $\vect{u}_{2},\ldots,\vect{u}_{k}$ & $\vect{u}_{k+1},\ldots,\vect{u}_{N}$ \\ 
     \hline
     $\matr{G}$ & 0 & $\sqrt{\lambda_{i}(\mu + \frac{1}{\lambda_{i}})}$ & $\sqrt{\lambda_{\lceil \frac{k}{2} \rceil}(\mu + \frac{1}{\lambda_{i}})}$ \\
     \hline
     $\matr{H}$ & 0 & $\sqrt{\frac{1}{\lambda_{i}}(\mu + \frac{1}{\lambda_{i}})}$ & $\sqrt{\frac{1}{\lambda_{\lceil \frac{k}{2} \rceil}}(\mu + \frac{1}{\lambda_{i}})}$ \\
     \hline
     $\matr{G}\matr{H}$ & 0 & $\mu + \frac{1}{\lambda_{i}}$ & $\mu + \frac{1}{\lambda_{i}}$ \\
     \hline
     $\matr{H}^{\dagger}\matr{G}$ & 0 & ${\lambda_{i}}$ &  $\lambda_{\lceil \frac{k}{2} \rceil}$\\
     \hline
     $\matr{G}^{2}$ & 0 & $1 + \mu\lambda_{i}$ & $\lambda_{\lceil \frac{k}{2} \rceil}(\mu + \frac{1}{\lambda_{i}})$ \\
     \hline
     $\projgen{bl(2:k)}$ & 0 & $1$ & $0$ \\
     \hline
     $\matr{H}^{2}$ & 0 & $\frac{1}{\lambda_{i}}(\mu + \frac{1}{\lambda_{i}})$ & $\frac{1}{\lambda_{\lceil \frac{k}{2} \rceil}}(\mu + \frac{1}{\lambda_{i}})$ \\
     \hline
     $\left(\matr{L}^{\dagger}\right)^{2}$ & 0 & $\frac{1}{\lambda_{i}}^{2}$ & $\frac{1}{\lambda_{i}}^{2}$  \\
     \hline
        \end{tabular}
    \end{center}
\label{tbl:bandlimited_kantorovich_matrices}
\end{table}

From Table \ref{tbl:bandlimited_kantorovich_matrices} we have constructed $\matr{G}^{2}$ and $\matr{H}^{2}$ such that
\begin{align}
    \projgen{bl(2:k)} &\leq \matr{G}^{2} \\
    \left(\matr{L}^{\dagger}\right)^{2} &\leq \matr{H}^{2}
\end{align}

By the way we constructed ${\Theta}$,
    \begin{equation}
        \msubgen[\Theta]{\matr{G}\matr{H}} = \msubgen[\Theta]{\left(\matr{I} - \frac{1}{N}\vect{1}_{N\times N}\right)\left(\mu\matr{I} + \matr{L}^{\dagger} \right)} = \msubgen[\Theta]{\mu\matr{I} + \matr{L}^{\dagger} }. \label{eq:bandlimited_kantorovich:GH_eq}
    \end{equation}

We now note that $\matr{A}^{2} \leq \matr{B}^{2}$ means $\forall \vect{x}: \vect{x}^{T}\matr{A}^{2}\vect{x} \leq \vect{x}^{T}\matr{B}^{2}\vect{x}$ so  for all compatible $\matr{V}$, $\trace{\matr{V}^{T}\matr{A}^{2}\matr{V}} \leq \trace{\matr{V}^{T}\matr{B}^{2}\matr{V}}$. Therefore
\begin{align}
    &\sqfrob{\msubgen[\set{N}, \Theta]{\matr{L}^{\dagger}}\msubgen[\Theta]{\mu\matr{I} + \matr{L}^{\dagger}}^{-1}\msubgen[\Theta, 2:k]{\matr{U}}} \\
    &=\sqfrob{\msubgen[\set{N}, \Theta]{\matr{L}^{\dagger}}\msubgen[\Theta]{\matr{G}\matr{H}}^{-1}\msubgen[\Theta, 2:k]{\matr{U}}} \\
    &= \trace{\msubgen[\Theta, 2:k]{\matr{U}}^{T}\msubgen[\Theta]{\matr{G}\matr{H}}^{-1}\msubgen[ \Theta]{\left(\matr{L}^{\dagger}\right)^{2}}\msubgen[\Theta]{\matr{G}\matr{H}}^{-1}\msubgen[\Theta, 2:k]{\matr{U}}} \\
    &\leq \trace{\msubgen[\Theta, 2:k]{\matr{U}}^{T}\msubgen[\Theta]{\matr{G}\matr{H}}^{-1}\msubgen[ \Theta]{\matr{H}^{2}}\msubgen[\Theta]{\matr{G}\matr{H}}^{-1}\msubgen[\Theta, 2:k]{\matr{U}}} \\
    &= \trace{\msubgen[\Theta]{\matr{G}\matr{H}}^{-1}\msubgen[ \Theta]{\matr{H}^{2}}\msubgen[\Theta]{\matr{G}\matr{H}}^{-1}\msubgen[\Theta]{\projgen{bl(2:k)}}} \\
    &= \trace{\matrsub[N,\Theta]{H}\msubgen[\Theta]{\matr{G}\matr{H}}^{-1}\msubgen[\Theta]{\projgen{bl(2:k)}} \msubgen[\Theta]{\matr{G}\matr{H}}^{-1}\matrsub[\Theta,N]{H} } \\
    &\leq \trace{\matrsub[N,\Theta]{H}\msubgen[\Theta]{\matr{G}\matr{H}}^{-1}\msubgen[\Theta]{\matr{G}^{2}} \msubgen[\Theta]{\matr{G}\matr{H}}^{-1}\matrsub[\Theta,N]{H} } \\
    &= \trace{\msubgen[\Theta]{\matr{G}\matr{H}}^{-1}\msubgen[\Theta]{\matr{G}^{2}} \msubgen[\Theta]{\matr{G}\matr{H}}^{-1}\msubgen[\Theta]{\matr{H}^{2}} } \\
    &\leq \sum^{m}_{i=2}\omega\left(\max\left[1,\frac{\lambda_{k+2-i}}{\lambda_{i}}\right]\right).
\end{align}
where the equalities are via the cyclic properties or via (\ref{eq:bandlimited_kantorovich:GH_eq}).

The final inequality is by applying \cite[Theorem 5]{khatri1982some} with $\matr{X} = \matr{Y} = \matrsub[N,\Theta]{I}$. One can see that the conditions of the Theorem are fulfilled by the way we constructed $\matr{G}$ and $\matr{H}$ in Table \ref{tbl:bandlimited_kantorovich_matrices}, that $\omega_{i} = \omega\left(\max\left[1,\frac{\lambda_{k+2-i}}{\lambda_{i}}\right]\right)$ in the $t \geq k+s$ case and the  $t < k+s$ case can be weakened to our result.
\end{proof}

\subsection{A bandlimited Kantorovich lemma}
We require the following lemma to bound (\ref{eq:GLR_bandlimited_proof:v1_term}).

\begin{lemma}
\label{lemma:bandlimited_kantorovich}
Let $\matr{A}, \matr{B}$ be positive semi-definite, and let $\projgen{k}$ be a projection to the eigenbasis of $\matr{A}$ s.t.
\begin{equation}
    \exists\, 0<m\leq M: \quad m\projgen{k} \leq \matr{A}\projgen{k} \leq M\projgen{k}.
\end{equation}
 i.e. it bandlimits $\matr{A}$ to between $m$ and $M$. Furthermore, assume $\matr{A}$ has no eigenvalues less than $m$.
    
    Assume $\matr{B}\matr{A} = \matr{A}\matr{B}$ and $\matr{B} \leq \matr{I}$. Then for any projection $\proj{X}$ where $\proj{X}\matr{A}\matr{A}^{\dagger}\proj{X}$ is idempotent,
    \begin{equation}
\sigma_{max}\left(\sqrt{\matr{B}\matr{A}^{\dagger}}\proj{X}\sqrt{\matr{A}} \projgen{k} \right) \leq \sqrt{\frac{(M + m)^{2}}{4Mm}}.
    \end{equation}
\end{lemma}
\begin{proof}
    We abuse our notation to write $\proj{X} = \matrsub[N,X]{I}\matrsub[X,N]{I}$. Then
    \begin{equation}        \sqrt{\matr{A}^{\dagger}}\proj{X}\sqrt{\matr{A}} = \msubgen[\set{N},\set{X}]{\sqrt{\matr{A}^{\dagger}}}\msubgen[\set{N},\set{X}]{\sqrt{\matr{A}}}.
    \end{equation}

    Throughout this proof we will use the fact that $\proj{A}= 
 \matr{A}\matr{A}^{\dagger}$ is a projection to the full range of $\matr{A}$ and thus $\proj{A} \geq \projgen{k}$ (consider diagonalizing both, and note all eigenvalues are 1 or 0).
    Note that
    \begin{equation}
        m\matr{A}^{\dagger}(\matr{I} - \projgen{k}) \leq (\matr{I} - \projgen{k})
    \end{equation}
    because $\matr{A}$ has no non-zero eigenvalues less than $m$, so $\matr{A}^{\dagger}$ has no non-zero eigenvalues greater than $\frac{1}{m}$. Thus 
    \begin{align}
        &Mm\matr{A}^{\dagger}(\matr{I} - \projgen{k}) \leq M(\matr{I} - \projgen{k}) \quad \text{and so} \\
        & (M+m)(\matr{I} - \projgen{k}) - Mm\matr{A}^{\dagger}(\matr{I} - \projgen{k})\geq 0. \label{eq:bandlimited_kantorovich_difference}
    \end{align}
    From \cite[Theorem 2, (3.3)]{baksalary1991generalized},
    \begin{align}
        \matr{A}\projgen{k} &\leq (m+M)\projgen{k} - Mm\matr{A}^{\dagger}\projgen{k} \\
        &\leq (m+M)\proj{A} - Mm\matr{A}^{\dagger}
    \end{align}
    where the second inequality comes from adding on (\ref{eq:bandlimited_kantorovich_difference})

    Conjugating by $\matrsub[N,X]{I}$ and $\matrsub[X,N]{I}$ gives:
    \begin{align}
        &\msubgen[\set{X}]{\matr{A}\projgen{k}} \leq (M+m)\msubgen[\set{X}]{\proj{A}} - Mm\msubgen[\set{X}]{\matr{A}^{\dagger}}
    \end{align}
    We assumed $\proj{X}\proj{A}\proj{X}$ is idempotent, and it implies that $\msubgen[\set{X}]{\proj{A}}$ is idempotent so we can complete the square giving:
    \begin{equation}
        \msubgen[\set{X}]{\matr{A}\projgen{k}} \leq \frac{(M+m)^{2}}{4Mm} \msubgen[\set{X}]{\matr{A}^{\dagger}}^{\dagger}. \label{eq:bandlimited_kantorovich_proof_inner}
    \end{equation}
    In more detail: this is the final step of the proof of \cite[Theorem 2]{baksalary1991generalized} along with the observation that the constant is invariant to the transformation  $(m,M) \mapsto \left(\frac{1}{m},\frac{1}{M} \right)$.

    Conjugating (\ref{eq:bandlimited_kantorovich_proof_inner}) by $\msubgen[\set{N},\set{X}]{\sqrt{\matr{A}^{\dagger}}}$ and $\msubgen[\set{N},\set{X}]{\sqrt{\matr{A}^{\dagger}}}^{T}$:
    
    \begin{align}
        &\msubgen[\set{N},\set{X}]{\sqrt{\matr{A}^{\dagger}}}\msubgen[\set{X}]{\matr{A}\projgen{k}} \msubgen[\set{X},\set{N}]{\sqrt{\matr{A}^{\dagger}}} \\
        &\leq \frac{(M+m)^{2}}{4Mm} \msubgen[\set{N},\set{X}]{\sqrt{\matr{A}^{\dagger}}}\msubgen[\set{X}]{\matr{A}^{\dagger}}^{\dagger}\msubgen[\set{N},\set{X}]{\sqrt{\matr{A}^{\dagger}}}.
    \end{align}

    Note that 
    \begin{equation}
        \proj{!} = \msubgen[\set{N},\set{X}]{\sqrt{\matr{A}^{\dagger}}}\msubgen[\set{X}]{\matr{A}^{\dagger}}^{\dagger}\msubgen[\set{N},\set{X}]{\sqrt{\matr{A}^{\dagger}}}
    \end{equation}
    is a projection because $\proj{!} = \proj{!}^{T} = \proj{!}^{2}$, and thus its eigenvalues are 0 or 1 (i.e the real roots of $\lambda = \lambda^{2}$). Thus, for any vector $\vect{v}$ with $\sqnormvec{v} = 1$,
    \begin{align}
    &\sqnormvec{\vect{v}^{T}\sqrt{\matr{B}\matr{A}^{\dagger}}\proj{X}\sqrt{\matr{A}}\projgen{k} } \label{eq:bandlimited_kantorovich_proof_end1}\\
        &=\vect{v}^{T}\msubgen[\set{N},\set{X}]{\sqrt{\matr{B}\matr{A}^{\dagger}}}\msubgen[\set{X}]{\matr{A}\projbl} \msubgen[\set{X},\set{N}]{\sqrt{\matr{A}^{\dagger}\matr{B}}}\vect{v}\\
        &\leq \frac{(M+m)^{2}}{4Mm} \vect{v}^{T}\sqrt{\matr{B}}\proj{!}\sqrt{\matr{B}}\vect{v} \\
        &\leq \frac{(M+m)^{2}}{4Mm} \sqnormvec{\sqrt{\matr{B}\vect{v}}} \\
        &\leq \frac{(M+m)^{2}}{4Mm} \label{eq:bandlimited_kantorovich_proof_end2}
    \end{align}
    where we have $\sqrt{\matr{A}} \projgen{k} \sqrt{\matr{A}} = \matr{A}\proj{k}$ because $\sqrt{\matr{A}}$ commutes with $\projgen{k}$ (as they are both diagonal in the same basis).
    
    As (\ref{eq:bandlimited_kantorovich_proof_end1} - \ref{eq:bandlimited_kantorovich_proof_end2}) holds true for all normalised $\vect{v}$, taking square roots gives the proof.
\end{proof}

\subsection{Bounding (\ref{eq:GLR_bandlimited_proof:v1_term})}
We show that 
\begin{equation}
    {\frac{N}{m}}\sqnormvec{\vect{v}_{1}^{T}{\left( \matr{I} - \msubgen[\set{N},\Theta]{\matr{L}^{\dagger}} \msubgen[\Theta]{ \matr{L}_{\mu}^{\dagger}}^{-1}\msubgen[\Theta,\set{N}]{\matr{I}}\right)\msubgen[\set{N},2:k]{\matr{U}}}} \leq \frac{N}{m} r_{bl}.
\end{equation}

\begin{proof} 
First note that $\vect{v}_{1}^{T}\matrsub[N,\Theta]{I} = \vect{0}$, , $\projgen{\matr{L}}\matrsub[N,\Theta]{I} = \matrsub[N,\Theta]{I}$ and so $\vect{v}_{1}^{T}\msub[N,\Theta]{\mu\projgen{\matr{L}}} = \mu \vect{v}_{1}^{T}\matrsub[N,\Theta]{I} = 0$ so 
\begin{align}
\vect{v}_{1}^{T}\msubgen[\set{N},\Theta]{\matr{L}^{\dagger}} &= \vect{v}_{1}^{T} \msubgen[\set{N},\Theta]{\projgen{\matr{L}}\matr{L}^{\dagger}} \\
&= \vect{v}_{1}^{T} \msubgen[\set{N},\Theta]{\mu\projgen{\matr{L}} + \projgen{\matr{L}}\matr{L}^{\dagger}\projgen{\matr{L}}} \\
&= \vect{v}_{1}^{T} \msubgen[\set{N},\Theta]{\projgen{\matr{L}}\matr{L}_{\mu}^{\dagger}\projgen{\matr{L}}} 
\end{align}
Write $\projgen{\matr{L}}\matr{L}^{\dagger}_{\mu}\projgen{\matr{L}} = \matr{L}^{\dagger}_{\mu-}$. This zeros the smallest eigenvalue of $\matr{L}^{\dagger}_{\mu}$ which is $\mu$, meaning the eigenvalues of $\matr{L}^{\dagger}_{\mu-}$ are between $\mu + \frac{1}{\lambda_{N}}$ and $\mu + \frac{1}{\lambda_{2}}$. Then
\begin{align}
    &{\vect{v}_{1}^{T}{\left( \matr{I} - \msubgen[\set{N},\Theta]{\matr{L}^{\dagger}} \msubgen[\Theta]{ \matr{L}_{\mu}^{\dagger}}^{-1}\msubgen[\Theta,\set{N}]{\matr{I}}\right)\msubgen[\set{N},2:k]{\matr{U}}}} \\
    &= {\vect{v}_{1}^{T}{\left( \matr{I} - \msubgen[\set{N},\Theta]{\matr{L}_{\mu -}^{\dagger}} \msubgen[\Theta]{ \matr{L}_{\mu}^{\dagger}}^{-1}\msubgen[\Theta,\set{N}]{\matr{I}}\right)\msubgen[\set{N},2:k]{\matr{U}}}}.
\end{align}
Note that because $\projgen{\matr{L}}\matrsub[N,\Theta]{I} = \matrsub[N,\Theta]{I}$,
\begin{equation}
    \msub[\Theta]{\matr{L}^{\dagger}_{\mu -}}
     = \msub[\Theta]{\matr{L}^{\dagger}_{\mu}}
\end{equation}
    We note that the following are projection operators:
    \begin{align}
        \proj{\mu} &= \msub[N,\Theta]{\sqrt{\matr{L}_{\mu-}^{\dagger}}}\msub[\Theta]{\matr{L}^{\dagger}_{\mu}}^{-1} \msub[\Theta,N]{\sqrt{\matr{L}_{\mu-}^{\dagger}}} \\
        \projgen{\mu^{c}} &= \matr{I} - \proj{\mu}
    \end{align}
    which can be checked by seeing that $\proj{\mu} = \proj{\mu}^{T} = \proj{\mu}^{2}$.
As $\projgen{\matr{L}}\msubgen[\set{N},2:k]{\matr{U}} = \msubgen[\set{N},2:k]{\matr{U}}$,
\begin{align} 
    &{\vect{v}_{1}^{T}{\left( \matr{I} - \msubgen[\set{N},\Theta]{\matr{L}_{\mu -}^{\dagger}} \msubgen[\Theta]{ \matr{L}_{\mu}^{\dagger}}^{-1}\msubgen[\Theta,\set{N}]{\matr{I}}\right)\msubgen[\set{N},2:k]{\matr{U}}}} \\
    &=\vect{v}_{1}^{T}\sqrt{\matr{L}^{\dagger}_{\mu-}}\projgen{\mu^{c}}\sqrt{\left(\matr{L}^{\dagger}_{\mu-}\right)^{\dagger}} \msubgen[\set{N},2:k]{\matr{U}}
\end{align}
Therefore
\begin{align}
    &\sqnormvec{\vect{v}_{1}^{T}{\left( \matr{I} - \msubgen[\set{N},\Theta]{\matr{L}^{\dagger}} \msubgen[\Theta]{ \matr{L}_{\mu}^{\dagger}}^{-1}\msubgen[\Theta,\set{N}]{\matr{I}}\right)\msubgen[\set{N},2:k]{\matr{U}}}} \\
    &= \sqnormvec{ \vect{v}_{1}^{T}\sqrt{\matr{L}^{\dagger}_{\mu-}}\projgen{\mu^{c}}\sqrt{\left(\matr{L}^{\dagger}_{\mu-}\right)^{\dagger}} \msubgen[\set{N},2:k]{\matr{U}}} \\
    &\leq \sigma_{max}^{2}\left(\sqrt{\matr{L}^{\dagger}_{\mu-}}\projgen{\mu^{c}}\sqrt{\left(\matr{L}^{\dagger}_{\mu-}\right)^{\dagger}} \msubgen[\set{N},2:k]{\matr{U}} \right)
\end{align}
We now apply Lemma \ref{lemma:bandlimited_kantorovich} with $\matr{A} = {\left(\matr{L}^{\dagger}_{\mu-}\right)^{\dagger}}$, $\matr{B}=\matr{I}$ and $\projgen{k} = \msubgen[\set{N},2:k]{\matr{U}}\msubgen[\set{N},2:k]{\matr{U}}^{T}$. Then
\begin{align}
    M &= \frac{1}{\mu + \frac{1}{\lambda_{k}}} \\
    m &= \frac{1}{\mu + \frac{1}{\lambda_{2}}}
\end{align}
so
\begin{align}
\sigma_{max}^{2}\left(\sqrt{\matr{L}^{\dagger}_{\mu-}}\projgen{\mu^{c}}\sqrt{\left(\matr{L}^{\dagger}_{\mu-}\right)^{\dagger}} \msubgen[\set{N},2:k]{\matr{U}} \right) \leq \frac{(M+m)^{2}}{4Mm} 
\end{align}
Note that 
\begin{align}
    \frac{(M+m)^{2}}{4Mm} = \omega\left(\frac{M}{m}\right) = \omega\left(\frac{\mu + \frac{1}{\lambda_{2}}}{\mu + \frac{1}{\lambda_{k}}}\right)
\end{align}
as $\frac{1}{\lambda_{2}} \geq \frac{1}{\lambda_{k}}$ we know that $\frac{\mu + \frac{1}{\lambda_{2}}}{\mu + \frac{1}{\lambda_{k}}}$ is decreasing as a function of $\mu$ so
\begin{equation}
    \frac{\mu + \frac{1}{\lambda_{2}}}{\mu + \frac{1}{\lambda_{k}}} \leq \frac{ \frac{1}{\lambda_{2}}}{ \frac{1}{\lambda_{k}}} \leq \frac{\lambda_{k}}{\lambda_{2}}
\end{equation}
as $\omega$ is increasing this means that
\begin{equation}
    \frac{(M+m)^{2}}{4Mm} \leq \omega\left(\frac{\lambda_{k}}{\lambda_{2}}\right) = r_{bl}
\end{equation}
combining these results gives
\begin{equation}
    \sqnormvec{\vect{v}_{1}^{T}{\left( \matr{I} - \msubgen[\set{N},\Theta]{\matr{L}^{\dagger}} \msubgen[\Theta]{ \matr{L}_{\mu}^{\dagger}}^{-1}\msubgen[\Theta,\set{N}]{\matr{I}}\right)\msubgen[\set{N},2:k]{\matr{U}}}} \leq r_{bl}
\end{equation}
and multiplying both sides by $\frac{N}{m}$ gives the result.
\end{proof}

\section{Proof of Theorem \ref{thm:main_GLR_bl}}
\label{app:GLR_bandlimited_thm}
\begin{proof}
To get the bounds on $m_{opt\_bl}$ given $B_{k}(m_{opt\_bl}) < k-1$, we can follow Appendix \ref{app:proof_of_remark_GLR_mopt} replacing $r$ with $r_{bl}$. Let
\begin{align} 
    \mu_{ub\_bl} &= {\lambda_{k}}^{-1}\left(\sqrt{\frac{k}{1 +B_{k}(m_{opt\_bl})}} -1\right)  \\
    \tau_{GLR\_bl}(\mu) &=  \frac{\left(\sum^{k}_{i=1}\left(1 + \mu\lambda_{i} \right)^{-2}\right) - B_{k}(m_{opt\_bl})}{k+B_{k}(m_{opt\_bl}) - \sum_{i=1}^{k} \left(1 - (1+\mu\lambda_{i})^{-1}\right)^{2}} 
\end{align}

We follow the structure of Appendix \ref{app:Proof_thm_main_GLR_exist}. Firstly, we note that by the same arguments as for Lemma \ref{lemma:GLR_full_observation_MSE},
\begin{align}
    \xi_{2}(\set{N}) = \sum_{i=1}^{k} \left( \frac{1}{1+\mu\lambda_{i}} \right)^{2}.
\end{align}

Using \ref{lemma:GLR_xi_2_bound_main_bl} to get bounds on $\xi_{2}(\set{S})$, we continue to follow Appendix \ref{app:Proof_thm_main_GLR_exist}. Note that $\xi_{1}$ is not changed by changing our noise model. Therefore

\begin{align}
    \Delta_{2}(\set{N},\set{S}^{c}) &\geq \sum_{i=1}^{k} \left( \frac{1}{1+\mu\lambda_{i}}\right)^{2} - (1+B_{k}(m)) \label{eq:proof_main_GLR_exist:delta_2_lb_bl} \\
    \Delta_{1}(\set{N},\set{S}^{c}) &\geq \sum_{i=1}^{N} \left(1 - \frac{1}{1+\mu\lambda_{i}}\right)^{2} - (k + B_{k}(m)) \label{eq:proof_main_GLR_exist:delta_1_lb_bl}.
\end{align}

We now show that $\Delta_{2} > 0$. We have that $\omega(x) > 0$ so $B(m) > 0$ and by assumption $B(m_{opt\_bl}) < k$. Therefore $\mu_{ub\_bl} > 0$ and is real and it is therefore possible to pick $0 < \mu < \mu_{ub\_bl}$.
By assumption $0<\mu<\mu_{ub\_bl}$ and so
\begin{align}
    \sum_{i=1}^{k}\left(\frac{1}{1 + \mu\lambda_{i}}\right)^{2} &\geq \frac{k}{\left(1+\mu\frac{1}{\lambda_{k}}  \right)^{2}}\\
    &> \frac{k}{\left(1+\mu_{ub\_bl}\frac{1}{\lambda_{k}} 
  \right)^{2}}\\ 
    &= 1 + B_{k}(m_{opt\_bl})
\end{align}
And therefore by (\ref{eq:proof_main_GLR_exist:delta_2_lb_bl}), $\Delta_{2}(\set{N},\set{S}^{c}) > 0$. Note that $\textrm{SNR} = \frac{1}{\sigma^{2}}$ and apply Corollary \ref{corr:main_GLR_iff} in the same manner as Appendix \ref{app:Proof_thm_main_GLR_exist} and we are done.
\end{proof}

\section{Proof of Proposition \ref{propn:GLR_big_N_bl}}
\label{app:Proof_GLR_big_N_bl}
The proof in Appendix \ref{app:Proof_GLR_big_N} adapts immediately to the bandlimited case, by noting that $1 \leq r_{bl} \leq r$ and $B_{k}(m) \leq B(m)$ and using the squeeze theorem. 
\section{Additional Results}
\label{plot_appendix}
Under LS reconstruction, we show the threshold $\tau(\set{S},v)$ for the BA and SBM graphs with 1000, 2000 and 3000 vertices (Figs. 
\ref{LS_SNR_Threshold_plots_big_BA} and  \ref{LS_SNR_Threshold_plots_big_SBM}). We also present MSE plots for BA and SBM graphs under LS with full-band noise (Figs. \ref{LS_BA_MSE_fig} and \ref{LS_SBM_MSE_fig}), LS with bandlimited noise (Figs. \ref{LS_BA_MSE_bandlimited_fig} and  \ref{LS_SBM_MSE_bandlimited_fig}),  GLR with full-band noise (Figs. \ref{GLR_BA_MSE_fig} and  \ref{GLR_SBM_MSE_fig}) and GLR with bandlimited noise (Figs. \ref{bandlimited_GLR_BA_MSE_fig} and  \ref{bandlimited_GLR_SBM_MSE_fig}).

\begin{figure*}%
    \centering
    \begin{subfigure}{0.6\columnwidth}
    \includegraphics[width=\columnwidth]{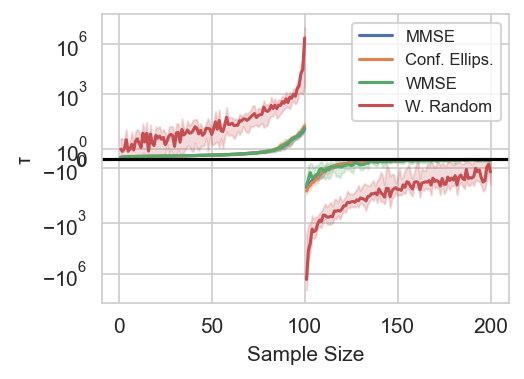}
    \caption{1000 vertices} 
    \label{snr_BA_1000}
    \end{subfigure}
    \hfill
    \begin{subfigure}{0.6\columnwidth}
    \includegraphics[width=\columnwidth]{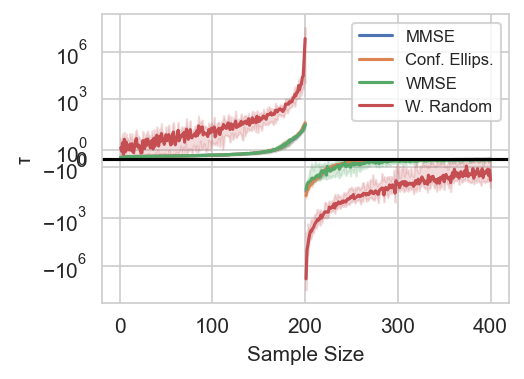}%
    \caption{2000 vertices}%
    \label{snr_BA_2000}%
    \end{subfigure}
    \hfill%
    \begin{subfigure}{0.6\columnwidth}
    \includegraphics[width=\columnwidth]{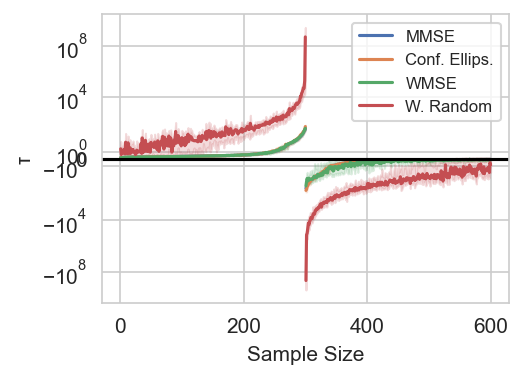}%
    \caption{3000 vertices}%
    \label{snr_BA_3000}%
    \end{subfigure}%
    \caption{$\tau(\set{S},v)$ for different sized BA graphs under LS reconstruction (bandwidth = $\frac{\# \text{ vertices}}{10}$).}
\label{LS_SNR_Threshold_plots_big_BA}
\end{figure*}

\begin{figure*}%
    \centering
    \begin{subfigure}{0.6\columnwidth}
    \includegraphics[width=\columnwidth]{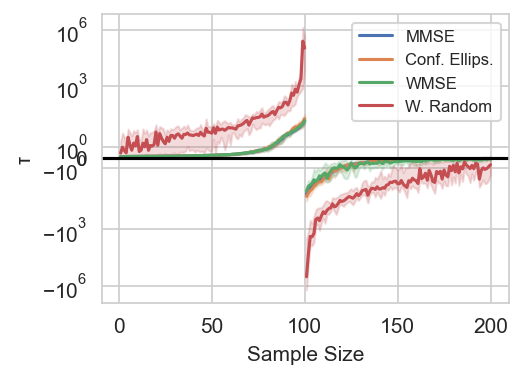}
    \caption{1000 vertices} 
    \label{snr_SBM_1000}
    \end{subfigure}
    \hfill
    \begin{subfigure}{0.6\columnwidth}
    \includegraphics[width=\columnwidth]{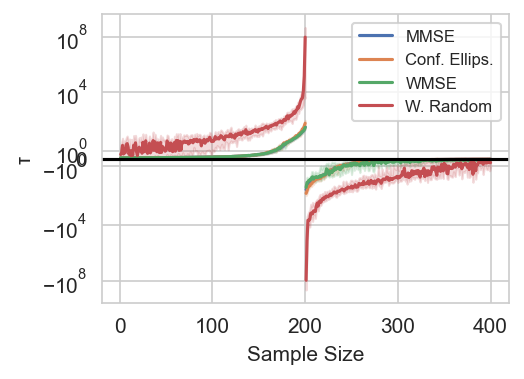}%
    \caption{2000 vertices}%
    \label{snr_SBM_2000}%
    \end{subfigure}
    \hfill%
    \begin{subfigure}{0.6\columnwidth}
    \includegraphics[width=\columnwidth]{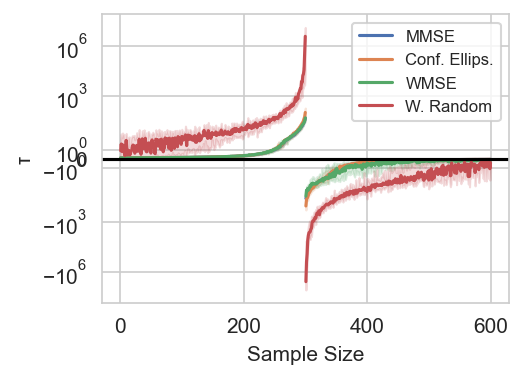}%
    \caption{3000 vertices}%
    \label{snr_SBM_3000}%
    \end{subfigure}%
    \caption{$\tau(\set{S},v)$ for different sized SBM graphs under LS reconstruction (bandwidth = $\frac{\# \text{ vertices}}{10}$).}
\label{LS_SNR_Threshold_plots_big_SBM}
\end{figure*}

\begin{figure*}%
    \centering
    \begin{subfigure}{0.6\columnwidth}
    \includegraphics[width=\columnwidth]{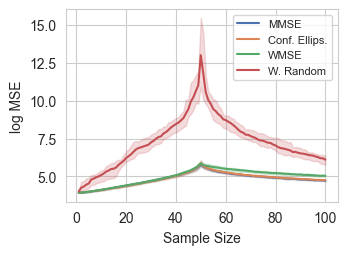}
    \caption{SNR = $10^{-1}$}
    \label{BA_MSE_subfiga}
    \end{subfigure}\hfill
    \begin{subfigure}{0.6\columnwidth}
    \includegraphics[width=\columnwidth]{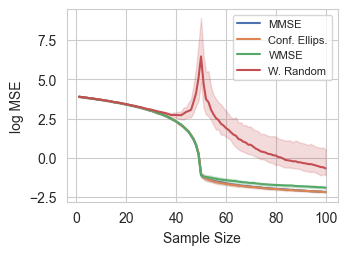}%
    \caption{SNR = $10^{2}$}%
    \label{BA_MSE_subfigb}%
    \end{subfigure}\hfill%
    \begin{subfigure}{0.6\columnwidth}
    \includegraphics[width=\columnwidth]{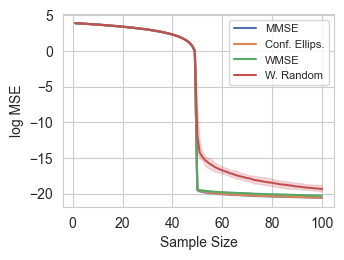}%
    \caption{SNR = $10^{10}$}%
    \label{BA_MSE_subfigc}%
    \end{subfigure}%
    \caption{Average MSE for LS reconstruction on BA graphs (\#vertices=500, bandwidth = 50) with different SNRs.}
\label{LS_BA_MSE_fig}
\end{figure*}

\begin{figure*}%
    \centering
    \begin{subfigure}{0.6\columnwidth}
    \includegraphics[width=\columnwidth]{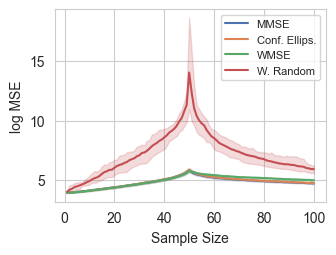}
    \caption{SNR = $10^{-1}$}
    \label{bandlimited_BA_MSE_subfiga}
    \end{subfigure}\hfill
    \begin{subfigure}{0.6\columnwidth}
    \includegraphics[width=\columnwidth]{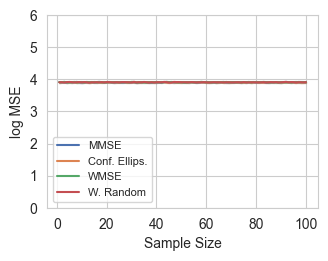}
    \caption{SNR = $1$}%
    \label{bandlimited_BA_MSE_subfigb}%
    \end{subfigure}\hfill%
    \begin{subfigure}{0.6\columnwidth}
    \includegraphics[width=\columnwidth]{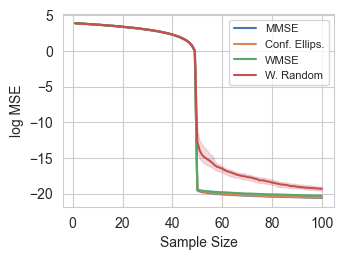}
    \caption{SNR = $10^{10}$}%
    \label{bandlimited_BA_MSE_subfigc}%
    \end{subfigure}%
    \caption{Average MSE for LS reconstruction on BA graphs with bandlimited noise (\#vertices=500, bandwidth = 50) with different SNRs.}
\label{LS_BA_MSE_bandlimited_fig}
\end{figure*}

\begin{figure*}%
    \centering
    \begin{subfigure}{0.6\columnwidth}
    \includegraphics[width=\columnwidth]{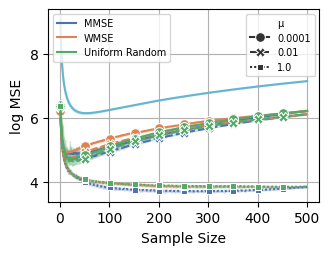}
    \caption{SNR = $10^{-1}$}
    \label{GLR_BA_MSE_subfiga}
    \end{subfigure}\hfill
    \begin{subfigure}{0.6\columnwidth}
    \includegraphics[width=\columnwidth]{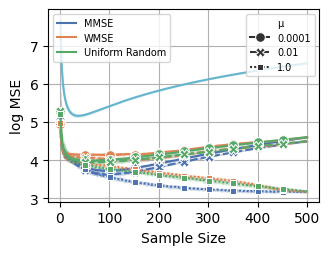}
    \caption{SNR = $\frac{1}{2}$}%
    \label{GLR_BA_MSE_subfigb}%
    \end{subfigure}\hfill%
    \begin{subfigure}{0.6\columnwidth}
    \includegraphics[width=\columnwidth]{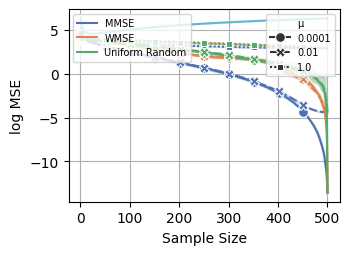}
    \caption{SNR = $10^{10}$}%
    \label{GLR_BA_MSE_subfigc}%
    \end{subfigure}%
    \caption{Average MSE for GLR reconstruction on BA graphs (\#vertices=500, bandwidth = 50) with different SNRs. Line without markers is an upper bound.}
\label{GLR_BA_MSE_fig}
\end{figure*}

\begin{figure*}%
    \centering
    \begin{subfigure}{0.6\columnwidth}
    \includegraphics[width=\columnwidth]{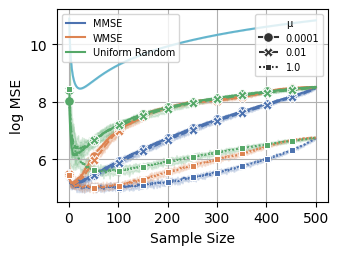}
    \caption{SNR = $10^{-2}$}
    \label{bandlimited_GLR_BA_MSE_subfiga}
    \end{subfigure}\hfill
    \begin{subfigure}{0.6\columnwidth}
    \includegraphics[width=\columnwidth]{gm_BA_sn-3_fb.png}
    \caption{SNR = $\frac{1}{2}$}%
    \label{bandlimited_GLR_BA_MSE_subfigb}%
    \end{subfigure}\hfill%
    \begin{subfigure}{0.6\columnwidth}
    \includegraphics[width=\columnwidth]{gm_BA_sn100_fb.png}
    \caption{SNR = $10^{10}$}%
    \label{bandlimited_GLR_BA_MSE_subfigc}%
    \end{subfigure}%
    \caption{Average MSE for GLR reconstruction on BA graphs under bandlimited noise (\#vertices=500, bandwidth = 50) with different SNRs. Line without markers is an upper bound.}
\label{bandlimited_GLR_BA_MSE_fig}
\end{figure*}

\begin{figure*}%
    \centering
    \begin{subfigure}{0.6\columnwidth}
    \includegraphics[width=\columnwidth]{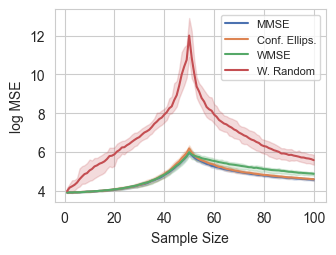}
    \caption{SNR = $10^{-1}$}
    \label{SBM_MSE_subfiga}
    \end{subfigure}\hfill
    \begin{subfigure}{0.6\columnwidth}
    \includegraphics[width=\columnwidth]{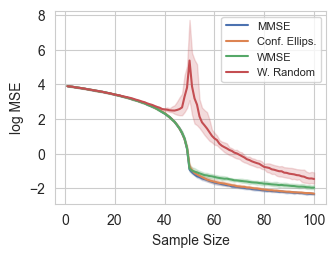}%
    \caption{SNR = $10^{2}$}%
    \label{SBM_MSE_subfigb}%
    \end{subfigure}\hfill%
    \begin{subfigure}{0.6\columnwidth}
    \includegraphics[width=\columnwidth]{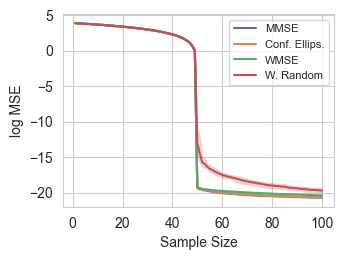}%
    \caption{SNR = $10^{10}$}%
    \label{SBM_MSE_subfigc}%
    \end{subfigure}%
    \caption{Average MSE for LS reconstruction on SBM graphs (\#vertices=500, bandwidth = 50) with different SNRs.}
\label{LS_SBM_MSE_fig}
\end{figure*}

\begin{figure*}%
    \centering
    \begin{subfigure}{0.6\columnwidth}
    \includegraphics[width=\columnwidth]{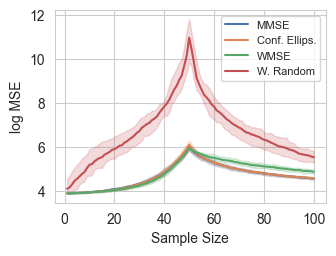}
    \caption{SNR = $10^{-1}$}
    \label{bandlimited_SBM_MSE_subfiga}
    \end{subfigure}\hfill
    \begin{subfigure}{0.6\columnwidth}
    \includegraphics[width=\columnwidth]{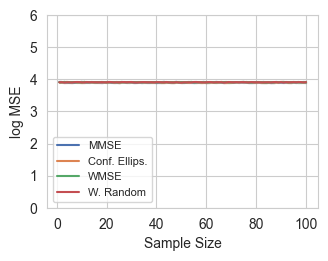}
    \caption{SNR = $1$}%
    \label{bandlimited_SBM_MSE_subfigb}%
    \end{subfigure}\hfill%
    \begin{subfigure}{0.6\columnwidth}
    \includegraphics[width=\columnwidth]{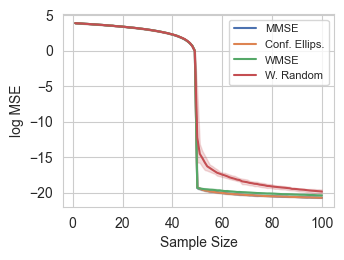}
    \caption{SNR = $10^{10}$}%
    \label{bandlimited_SBM_MSE_subfigc}%
    \end{subfigure}%
    \caption{Average MSE for LS reconstruction on SBM graphs with bandlimited noise (\#vertices=500, bandwidth = 50) with different SNRs.}
\label{LS_SBM_MSE_bandlimited_fig}
\end{figure*}

\begin{figure*}%
    \centering
    \begin{subfigure}{0.6\columnwidth}
    \includegraphics[width=\columnwidth]{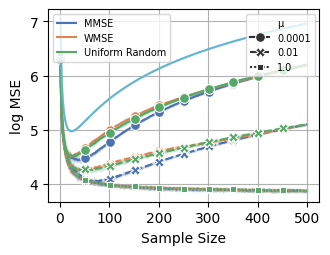}
    \caption{SNR = $10^{-1}$}
    \label{GLR_SBM_MSE_subfiga}
    \end{subfigure}\hfill
    \begin{subfigure}{0.6\columnwidth}
    \includegraphics[width=\columnwidth]{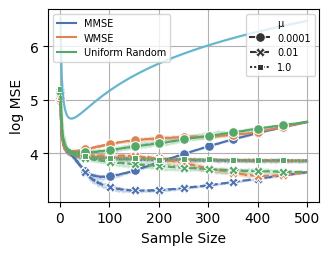}
    \caption{SNR = $\frac{1}{2}$}%
    \label{GLR_SBM_MSE_subfigb}%
    \end{subfigure}\hfill%
    \begin{subfigure}{0.6\columnwidth}
    \includegraphics[width=\columnwidth]{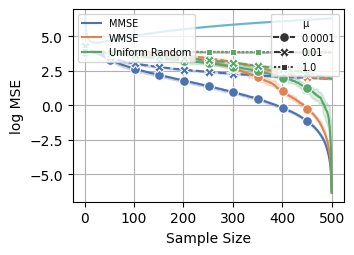}
    \caption{SNR = $10^{10}$}%
    \label{GLR_SBM_MSE_subfigc}%
    \end{subfigure}%
    \caption{Average MSE for GLR reconstruction on SBM graphs (\#vertices=500, bandwidth = 50) with different SNRs. Line without markers is an upper bound.}
\label{GLR_SBM_MSE_fig}
\end{figure*}

\begin{figure*}%
    \centering
    \begin{subfigure}{0.6\columnwidth}
    \includegraphics[width=\columnwidth]{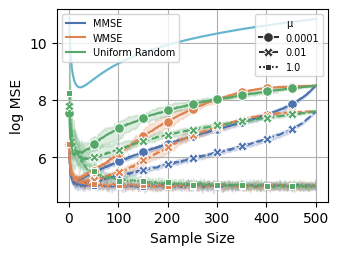}
    \caption{SNR = $10^{-2}$}
    \label{bandlimited_GLR_SBM_MSE_subfiga}
    \end{subfigure}\hfill
    \begin{subfigure}{0.6\columnwidth}
    \includegraphics[width=\columnwidth]{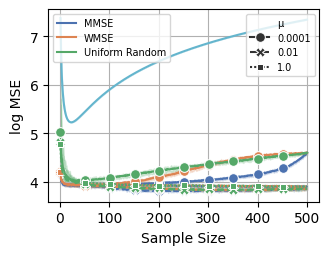}
    \caption{SNR = $\frac{1}{2}$}%
    \label{bandlimited_GLR_SBM_MSE_subfigb}%
    \end{subfigure}\hfill%
    \begin{subfigure}{0.6\columnwidth}
    \includegraphics[width=\columnwidth]{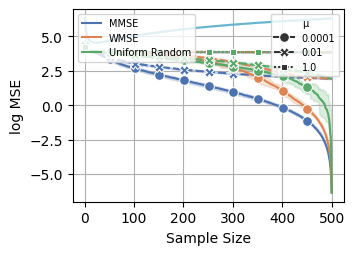}
    \caption{SNR = $10^{10}$}%
    \label{bandlimited_GLR_SBM_MSE_subfigc}%
    \end{subfigure}%
    \caption{Average MSE for GLR reconstruction on SBM graphs under bandlimited noise (\#vertices=500, bandwidth = 50) with different SNRs. Line without markers is an upper bound.}
\label{bandlimited_GLR_SBM_MSE_fig}
\end{figure*}

\end{document}